\def\~#1{\mathbb{#1}}
\def\*#1{\mathbf{#1}}
\def\@#1{\mathcal{#1}}
\def\oneshotnp{{\texttt{NonParametricFilter}}\xspace}
\def\oneshot{{\texttt{DeterministicFilter}}\xspace}
\def\okm{{\texttt{ParametricFilter}}\xspace}
\def\off{{\texttt{Offline}}\xspace}
\def\tp{{\texttt{TwoPass}}\xspace}
\def\uni{{\texttt{Uniform}}}
\def\coreset{\*C}
\newtheorem{assumption}{Assumption}[section]
\newtheorem{definition}{Definition}[section]
\newtheorem{theorem}{Theorem}[section]
\newtheorem{corollary}{Corollary}[section]
\newtheorem{lemma}{Lemma}[section]
\title{Online Coresets for Clustering with Bregman Divergences}
\author{
  Rachit Chhaya\\
  IIT Gandhinagar\\
  \texttt{rachit.chhaya@iitgn.ac.in} \\
   \And
  Jayesh Choudhari\\
  University of Warwick\\
  \texttt{choudhari.jayesh@alumni.iitgn.ac.in} \\
   \And
 Anirban Dasgupta \\
  IIT Gandhinagar\\
  \texttt{anirbandg@iitgn.ac.in} \\
  \AND
   Supratim Shit \thanks{Corresponding author} \\
  IIT Gandhinagar \\
  \texttt{supratim.shit@iitgn.ac.in} \\
}
\begin{document}
\maketitle

\begin{abstract}
We present algorithms that create coresets in an online setting for clustering problems according to a wide subset of Bregman divergences. Notably, our coresets have a small additive error, similar in magnitude to the lightweight coresets~\cite{bachem2018scalable}, and take update time $O(d)$ for every incoming point where $d$ is dimension of the point. Our first algorithm gives online coresets of size $\tilde{O}(\mbox{poly}(k,d,\epsilon,\mu))$ for $k$-clusterings according to any $\mu$-similar Bregman divergence. We further extend this algorithm to show existence of a non-parametric coresets, where the coreset size is independent of $k$, the number of clusters, for the same subclass of Bregman divergences. Our non-parametric coresets are larger by a factor of $O(\log n)$ ($n$ is number of points) and have similar (small) additive guarantee. At the same time our coresets also function as lightweight coresets for non-parametric versions of the Bregman clustering  like DP-Means. While these coresets provide additive error guarantees, they are also significantly smaller (scaling with $O(\log n)$ as opposed to $O(d^d)$ for points in $\~R^d$) than the (relative-error) coresets obtained in~\cite{bachem2015coresets} for DP-Means. While our non-parametric coresets are existential, we give an algorithmic version under certain assumptions. 
\end{abstract}
 
\keywords{Online \and Streaming \and Coreset \and Scalable \and Clustering \and Bregman divergence \and Non-Parametric}

\allowdisplaybreaks

\section{Introduction}
Clustering is perhaps one of the most frequently used operations in data processing,  and a canonical definition of the clustering problem is via the $k$-median, in which propose $k$ possible centers such that the sum of distances of every point to its closest center is minimized. There has been a plethora of work, both theoretical and practical, devoted to finding efficient and provable clustering algorithms in this $k$-median setting. However, most of this literature is devoted towards dissimilarity measures that are algorithmically easier to handle, namely the various $\ell_p$ norms, especially Euclidean.  
However, other dissimilarity measures, e.g. Kullback Leibler or Itakuro-Saito, are often more appropriate based on the data.  A mathematically elegant family of dissimilarity measures that have found wide use are the {\em Bregman divergences}, which include, for instance, the squared Euclidean distance, the Mahalanobis distance, Kullbeck-Leibler divergence, Itakuro-Saito dissimilarity and many others. 

While being mathematically satisfying, the chief drawback of working with Bregman divergences is algorithmic--- most of these divergences do not satisfy either symmetry or triangle inequality conditions. 
Hence, developing efficient clustering algorithms for these has been a much harder problem to tackle. Banerjee et. al. \cite{banerjee2005clustering} has done systematic study of the $k$-median clustering problem under Bregman divergence, and proposed algorithms that are generalization of the Lloyd's iterative algorithm for the Euclidean $k$-means problem.   
However, scalability remains a major issue.
Given that there are no theoretical bounds on the convergence of the Lloyd's algorithm in the general Bregman setting, a decent solution is often achieved only via running enough iterations as well as searching over multiple initializations. 
This is clearly expensive when the number of data points and the data dimension is large.   

Coresets, a data summarization technique to enable efficient optimization, has found multiple uses in many problems, especially in computational geometry and more recently in machine learning via randomized numerical linear algebraic techniques. The aim is to judiciously select (and reweigh) a set of points from the input points, so that solving the optimization problem on the coreset gives a guaranteed approximation to the optimization problem on the full data. 

In this work we explore the use of coresets to make Bregman clustering more efficient. Our aim is to give coresets that are small, the dependence on the number of points as well as the dimension should be {\em linear or better}-- it is not apriori clear that this can be achieved for all Bregman divergences. Most coresets for $k$-means and various linear algebraic problems, while being sublinear in the number of points, are often super-linear in the dimension of the data. However, in big-data setups, it is fairly common to have the number of dimensions to be almost of the same order of magnitude as the number of points. Coresets that trade-off being sublinear in the number of points while increasing the dependence on the dimension (to, say, exponential) might not be desirable in such scenarios. 

A further complication is the dependence of the coreset size on the number of clusters, as $k$, the number of clusters can be large, and more importantly, it can be unknown, to be determined only after exploratory analysis with clustering. When the number of clusters is unknown, it is unclear how to apply a coreset construction that needs knowledge of $k$. Recent work by Huang et. al. \cite{huang2020coresets} shows that for relative error coresets for Euclidean $k$-means, a linear dependence of coreset size on $k$ is both sufficient and inevitable.   

In this work, we tackle these questions for Bregman divergences. We develop coresets with {\em small additive error guarantees}. Such results have been obtained in the Euclidean setting by Bachem et. al. \cite{bachem2018scalable}, and in the online subspace embedding setting by Cohen et. al. \cite{cohen2016online}. 
We next show the existence of {\em non-parametric} coresets, where the coreset size is independent of $k$, the parameter representing number of cluster centers.  We utilize the {\em sensitivity} framework of~\cite{feldman2011unified} jointly with the barrier functions method of~\cite{batson2012twice} in order to achieve this.  A non-parametric coreset will be useful in problems such as DP-Means clustering \cite{bachem2015coresets} and extreme clustering \cite{kobren2017hierarchical}. We now formally describe the setup and list our contributions.

Given $\*A \in \~R^{n \times d}$ where
the rows (aka points) arrive in streaming fashion, let $\*A_{i} \in \~R^{i \times d}$ represent the first $i$ points that have arrived and $\*C_{i}$ be the coreset maintained for $\*A_{i}$. Let $\varphi_{i}$ denote the mean point of $\*A_{i}$, i.e., $\varphi_{i} = (1/i)\sum_{j \leq i} \*a_{j}$ and $\varphi$ is the mean of $\*A$. 
Let $\*X \in \~R^{k \times d}$ denote a candidate set of $k$ centers in $\~R^{d}$ and lets $f_{\*X}(\*A_i)$ be the total sum of distances of each point $\*a\in \*A_i$ from its closest center in $\*X$, according to a chosen Bregman divergence. We give algorithms which return coreset $\*C$ that ensures the following for any $\*X$, 
\begin{equation}{\label{eq:lwkm}}
 |f_{\*X}(\*C) - f_{\*X}(\*A)| \leq \epsilon (f_{\*X}(\*A) + f_{\varphi}(\*A))
\end{equation}

The following are our main contributions.
\begin{itemize}
 \item We give an algorithm named \okm (\textbf{Algorithm (\ref{alg:lwkm})}) which ensures property \eqref{eq:lwkm} for any $\*X \in \~R^{k \times d}$ with at least $0.99$ probability. \okm returns a coreset $\*C$ for $\*A$. It takes $O(d)$ update time and uses $O(d)$ working space. The expected size of the coreset $\*C$ is $O\Big(\frac{dk\log (1/\epsilon)}{\epsilon^{2}\mu^{2}}\big(\log n + \log \big(f_{\varphi}(\*A)\big) - \log \big(f_{\varphi_{2}}(\*a_{2})\big)\big)\Big)$ (Theorem \ref{thm:lwkm}). Here, $f_{\*x}(\cdot)$ is a $\mu$-similar Bregman divergence to some squared Mahalanobis distance.
 \item For the special case of $k$-means clustering, \okm~builds online coreset $\*C$ for $\*A$ which ensures property \eqref{eq:lwkm} for all $\*X \in \~R^{k \times d}$ with at least $0.99$ probability. The update time and working space are $O(d)$. The expected coreset size is $O\Big(\frac{dk\log (1/\epsilon)}{\epsilon^{2}}\big(\log n + \log \big(f_{\varphi}(\*A)\big) - \log \big(f_{\varphi_{2}}(\*a_{2})\big)\big)\Big)$ (Corollary \ref{cor:lwk}).
 \item We show that it is impossible to get a non-parametric coreset for clustering problem which ensures a relative error approximation to the optimal cost (Theorem \ref{thm:impossible}). We then give an existential result for non-parametric coreset with small additive error approximation. For this we present a method \oneshot (\ref{alg:lwdkm}) which uses an oracle while taking the sampling decision. The coreset ensures \eqref{eq:lwkm} for any $\*X$ with at most $n$ centers in $\~R^{d}$. Hence we call non-parametric coreset. The algorithm returns coreset of size $O\Big(\frac{\log n}{\epsilon^{2}\mu^{2}}\big(\log n + \log \big(f_{\varphi}(\*A)\big) - \log \big(f_{\varphi_{2}}(\*a_{2})\big)\big)\Big)$ (Theorem \ref{thm:lwdkm}). Here $f_{\*X}(\cdot)$ is a $\mu$-similar Bregman divergence to some squared Mahalanobis distance. 
 \item Again for special case of clustering based on squared euclidean distance, \oneshot~builds coreset $\*C$ for $\*A$ which ensures \eqref{eq:lwkm} for any $\*X$ with at most $n$ centers in $\~R^{d}$. The method returns a coreset of size $O\Big(\frac{\log n}{\epsilon^{2}}\big(\log n + \log \big(f_{\varphi}(\*A)\big) - \log \big(f_{\varphi_{2}}(\*a_{2})\big)\big)\Big)$ (Corollary \ref{cor:nplwk}).
 \item The coresets from \oneshot can be considered as non-parametric coreset for DP-Means clustering (Theorem \ref{thm:DPMeans}). The coreset size is $O\Big(\frac{\log n}{\mu^{2}\epsilon^{2}}\big(\log n + \log \big(f_{\varphi}(\*A)\big) - \log \big(f_{\varphi_{2}}(\*a_{2})\big)\big)\Big)$ opposed to $O(d^dk^{*}\epsilon^{-2})$ \cite{ackermann2009coresets}, where $k^{*}$ is the optimal centers for DP-Means clustering.
\item Under certain assumptions, we propose an algorithm named \oneshotnp (\textbf{Algorithm (\ref{alg:lwnpkm})}) which creates a coreset for non-parametric clustering.
\end{itemize}
Except for the existential result, the above contributions can also be made true in the online setting, i.e., at every point $i \in [n]$ the set $\*C_{i}$ maintained for $\*A_{i}$ ensures the guarantee with some constant probability by taking a union bound over all $i \in [n]$. Note that this is a stronger guarantee and in this case the expected sample size gets multiplied by a factor of $O(\log n)$. 
\section{Preliminaries and Notations}
 Here we define the notation that we use in rest of the paper. First $n$ natural number set is represented by $[n]$. A bold lower case letter denotes a vector or a point for e.g. $\*a$, and a bold upper case letter denotes a matrix or set of points as defined by the context for e.g. $\*A$. In general $\*A$ has $n$ points each in $\~R^{d}$. $\*a_{i}$ denotes the $i^{th}$ row of matrix $\*A$ and $\*a^{j}$ denotes its $j^{th}$ column. We use the notation $\*A_{i}$ to denote the matrix or a set, formed by the first $i$ rows or points of $\*A$ seen till a time in the streaming setting. 
%
 \begin{definition}
  \textbf{Bregman divergence:} For any strictly convex, differentiable function $\Phi: \mathcal{Z} \to \~R$, the Bregman divergence with respect to $\Phi$, $\forall \*x, \*y \in \mathcal{Z}$ is,
  \begin{equation*}
   d_{\Phi}(\*y, \*x) = \Phi(\*y) - \Phi(\*x) - \nabla \Phi(\*x)^{T}(\*y - \*x)
  \end{equation*}
 \end{definition}
 
 We also denote $f_{\*x}(\*y) = d_{\Phi}(\*y, \*x)$. Throughout the paper for some set of centers $\*X$ in $\~R^{d}$ and point $\*a \in \~R^{d}$ we consider $f_{\*X}(\*a)$ as a cost function based on Bregman divergence. We define it as $f_{\*X}(\*a) = \min_{\*x \in \*X}f_{\*x}(\*a) = \min_{\*x \in \*X}d_{\Phi}(\*a,\*x)$, where $d_{\Phi}(\cdot)$ is some Bregman divergence as defined above. If the set of points $\*A$ have weights $\{w_a\}$ then we define $f_{\*x}(\*a) = w_{\*a}d_{\Phi}(\*a, \*x)$. 
 
 The Bregman divergence $d_\Phi$ is said to be  {\em $\mu$-similar} if it satisfies the following property-- $\exists \*M \succ 0$ such that, if $d_{\*M}(\*y, \*x) =  (\*y - \*x)^{T}\*M(\*y - \*x)$ denotes the squared Mahalanobis distance measure for $\*M$, then for all $\*x, \*y$,
  $\mu d_{\*M}(\*y, \*x) \leq d_{\Phi}(\*y, \*x) \leq d_{\* M} (\*y, \*x)$.
%

 Going forward, we also denote $f_{\*x}^{\*M}(\*a) = d_{\*M}(\*a, \*x)$, and hence, we have $\mu f_{\*x}^{\*M}(\*a) \leq f_{\*x}(\*a) \leq f_{\*x}^{\*M}(\*a)$, $\forall \*x$ and $\forall \*a \in \*A$. Due to this we say $f_{\*x}(\cdot)$ and $f_{\*x}^{\*M}(\cdot)$ are $\mu$ similar.
 For Euclidean $k$-means clustering $\*M$ is just an identity matrix and $\mu = 1$. It is known that a large set of Bregman divergences is $\mu$-similar, including KL-divergence, Itakura-Saito, Relative Entropy, Harmonic etc~\cite{ackermann2009coresets}. In Table~\ref{tab:bd}, we list the most common $\mu$-similar Bregman divergences, their corresponding $\*M$ and the $\*\mu$. In each case the $\lambda$ and $\nu$ refer to the minimum and maximum values of all coordinates over all points, i.e. the input is a subset of $[\lambda, \nu]^d$.
 
 \begin{table}[ht]
\caption{$\mu$-similar Bregman divergences} \label{tab:muBS}
\begin{center}
\begin{tabular}{lll}
 \textbf{Divergence} & $\mu$ & $\*M$ \\
\hline 
Squared-Euclidean & $1$ & $\*I_{d}$ \\
Mahalanobis$_N$ & $1$ & $\*N$ \\
Exponential-Loss & $e^{-(\nu - \lambda)}$ & $\frac{e^{\nu}}{2}\*I_{d}$ \\
Kullback-Leibler & $\frac{\lambda}{\nu}$ & $\frac{1}{2\lambda}\*I_{d}$ \\
Itakura-Saito & $\frac{\lambda^{2}}{\nu^{2}}$ & $\frac{1}{2\lambda^{2}}\*I_{d}$ \\
Harmonic$_\alpha$ $(\alpha > 0)$ & $\frac{\lambda^{\alpha+2}}{\nu^{\alpha+2}}$ & $\frac{\alpha(1-\alpha)}{2\lambda^{\alpha + 2}}\*I_{d}$ \\
Norm-Like$_\alpha$ $(\alpha > 2)$ & $\frac{\lambda^{\alpha-2}}{\nu^{\alpha-2}}$ & $\frac{\alpha(1-\alpha)}{2}\nu^{\alpha - 2}\*I_{d}$ \\
Hellinger-Loss & $2(1-\nu^{2})^{3/2}$ & ${2(1-\nu)^{-3/2}}\*I_{d}$ \\
\hline
\end{tabular}
\end{center}
\label{tab:bd}
\end{table}
 
 For the Bregman divergence clustering problem, the set $\*X$, called the query set, will represent the set of all possible candidate centers. There are two types of clustering, hard and soft clustering for Bregman divergence \cite{banerjee2005clustering}. In this work, by the term clustering, we refer only to the hard clustering problem. 
\paragraph{Coresets:}
A coreset acts as a small proxy for the original data in the sense that it can be used in place of the original data for a given optimization problem in order to obtain a provably accurate  approximate solution to the problem. Formally, for a non-negative cost function $f_{\*X}(\*a)$ with query $\*X$ and data point $\*a \in \*A$, a set of subsampled and appropriately reweighted points $\*C$ is coreset if $\forall \*X$,  $|\sum_{\*a \in \*A}f_{\*X}(\*a) - \sum_{\tilde{\*a} \in \*C}f_{\*X}(\tilde{\*a})| \leq \epsilon\sum_{\*a \in \*A}f_{\*X}(\*a)$ for some $\epsilon > 0$.

While coresets are typically defined for relative errors, additive error coresets can also be defined similarly. For  $\epsilon, \gamma > 0$, $\*C$ is an additive $(\epsilon, \gamma)$ coreset of $\*A$ if  
$\*C$ contains reweighted points from $\*A$, and $\forall \*X$, $|\sum_{\*a \in \*A}f_{\*X}(\*a) - \sum_{\tilde{\*a} \in C}f_{\*X}(\tilde{\*a})| \leq \epsilon\sum_{\*a \in \*A}f_{\*X}(\*a) + \gamma$. 
The coresets that are presented here satisfies such additive guarantees.

For a dataset $\*A$, a query space $\@X$ that denotes candidate solutions to an optimization problem, and a cost function $f_{\*X}(\cdot)$, \cite{langberg2010universal} define sensitivity scores that capture the relative importance of each point for the problem and can be used to construct a probability distribution. 
The coreset is then created by sampling points according to this distribution.
The sensitivity of a point $\*a$ is defined as $s_{\*a} = \sup_{\*X \in \@X} \frac{ f_{\*X}(\*a)}{\sum_{\*a' \in \*A} f_{\*X}(\*a')}$.

\textbf{Lightweight Coresets:} Lightweight coresets were introduced by \cite{bachem2018scalable} for clustering based on $\mu$-similar Bregman divergence. These coresets give an additive error guarantee and they were built based on sensitivity framework. For a dataset $\*A \in \~R^{n \times d}$, and a cost function $f_{\*X}(\cdot)$ for some $\*X \in \~R^{k \times d}$ the sensitivity of a point $\*a$ is $s_{\*a} = \sup_{\*X} \frac{f_{\*X}(\*a)}{f_{\*X}(\*A) + f_{\varphi}(\*A)}$. Here $\varphi$ is the mean point of the entire dataset $\*A$ i.e., $\varphi = \sum_{i \leq n}\*a_{i}/n$ and $f_{\varphi}(\*A) = \sum_{\*a \in \*A}f_{\varphi}(\*a)$. 

In this work, we define $\*C$ to be an $(\epsilon, \gamma)$-additive error {\em non-parametric coreset} if its size is independent of $k$ (number of centres) and ensures $|f_{\*X}(\*C) - f_{\*X}(\*A)| \leq \epsilon f_{\*X}(\*A) + \gamma$ for any query $\*X \in \~R^{k \times d}$ for all integers $k \in [n]$. 

We use the following theorems in this paper.
\begin{theorem}{\label{thm:bernstein}}
\textbf{Bernstein's inequality \cite{dubhashi2009concentration}} Let the scalar random variables $x_{1}, x_{2}, _{\cdots}, x_{n}$ be independent that satisfy $\forall i \in [n]$,  
$\vert x_{i}-\~E[x_{i}]\vert \leq b$. 
Let $X = \sum_{i} x_{i}$ and let $\sigma^{2} = \sum_{i} \sigma_{i}^{2}$ be the variance of $X$. 
Then for any $t>0$,
\begin{center}
 $\mbox{Pr}\big(X > \~E[X] + t\big) \leq \exp\bigg(\frac{-t^{2}}{2\sigma^{2}+bt/3} \bigg)$
\end{center}
\end{theorem}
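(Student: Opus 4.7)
The plan is to apply the Cram\'er--Chernoff (moment generating function) method. First I would center the variables: let $Z_i = x_i - \~E[x_i]$, so that each $Z_i$ is zero-mean with $|Z_i|\le b$ and variance $\sigma_i^2$, and write $Z = X - \~E[X] = \sum_i Z_i$. For any $\lambda > 0$, Markov's inequality applied to $e^{\lambda Z}$ gives $\mbox{Pr}(Z > t) \le e^{-\lambda t}\,\~E[e^{\lambda Z}]$, which by independence factorizes as $e^{-\lambda t}\prod_i \~E[e^{\lambda Z_i}]$. This reduces the problem to bounding each single-variable MGF and then optimizing $\lambda$.

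The key step is to control $\~E[e^{\lambda Z_i}]$ for a bounded, zero-mean, variance-$\sigma_i^2$ summand. I would Taylor-expand $e^{\lambda Z_i}$: the linear term vanishes in expectation, and for each $k \ge 2$ the moment bound $\~E[Z_i^k] \le b^{k-2}\sigma_i^2$ (from $|Z_i|^k \le b^{k-2} Z_i^2$) lets me sum the series in closed form to obtain
\[
 \~E[e^{\lambda Z_i}] \le 1 + \frac{\sigma_i^2}{b^2}\bigl(e^{\lambda b}-1-\lambda b\bigr) \le \exp\!\Bigl(\tfrac{\sigma_i^2}{b^2}\bigl(e^{\lambda b}-1-\lambda b\bigr)\Bigr).
\]
Taking the product over $i$ and applying the elementary inequality $e^y - 1 - y \le y^2/\bigl(2(1 - y/3)\bigr)$ for $y \in [0,3)$ yields the cumulant bound $\~E[e^{\lambda Z}] \le \exp\!\bigl(\lambda^2\sigma^2/(2(1 - \lambda b/3))\bigr)$.

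Finally, combining with Markov gives $\mbox{Pr}(Z > t) \le \exp\!\bigl(-\lambda t + \lambda^2\sigma^2/(2(1-\lambda b/3))\bigr)$ for any $\lambda \in (0,3/b)$; choosing the near-optimal $\lambda = t/(\sigma^2 + bt/3)$ produces the claimed exponent $-t^2/(2\sigma^2 + 2bt/3)$, which implies the bound stated in the theorem. The main obstacle I anticipate is the MGF control: producing an exponent that interpolates between the Gaussian regime $\exp(-t^2/2\sigma^2)$ for small $t$ and the sub-exponential regime $\exp(-3t/2b)$ for large $t$. It is the moment bound $\~E[Z_i^k]\le b^{k-2}\sigma_i^2$, rather than a cruder Hoeffding-style use of $|Z_i|\le b$ alone, that couples these two regimes, and getting the constants right through the Taylor-tail inequality $e^y-1-y \le y^2/(2(1-y/3))$ is the technical heart of the argument.
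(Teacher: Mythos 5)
The paper does not actually prove this statement; it cites it from Dubhashi--Panconesi as a black box, and your Cram\'er--Chernoff derivation is exactly the standard route to it, so the approach is fine. The argument is correct through the MGF bound $\~E[e^{\lambda Z}] \le \exp\bigl(\lambda^2\sigma^2/(2(1-\lambda b/3))\bigr)$ and the choice $\lambda = t/(\sigma^2+bt/3)$. But carry that substitution through: with $1-\lambda b/3 = \sigma^2/(\sigma^2+bt/3)$ you get $-\lambda t + \lambda^2\sigma^2/(2(1-\lambda b/3)) = -t^2/(2(\sigma^2+bt/3)) = -t^2/(2\sigma^2+2bt/3)$, not $-t^2/(2\sigma^2+bt/3)$ as printed in the theorem. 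Your closing remark that this ``implies the bound stated in the theorem'' is therefore backwards: the exponent you derived is the \emph{weaker} (larger) of the two, so it does not imply the printed one.

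The deeper point is that the printed constant cannot be reached by this method at all. Optimizing over $\lambda$ exactly gives the Bennett exponent $-(\sigma^2/b^2)\,h(bt/\sigma^2)$ with $h(u)=(1+u)\log(1+u)-u$, and comparing Taylor expansions $h(u)=u^2/2-u^3/6+\cdots$ against $u^2/(2+u/3)=u^2/2-u^3/12+\cdots$ shows $h(u)<u^2/(2+u/3)$ for small $u>0$, so even the sharpest Chernoff bound is strictly weaker than Theorem~\ref{thm:bernstein} as literally stated. The denominator $2\sigma^2+2bt/3$ you obtained is the one that actually appears in the cited reference and in the standard literature; the missing factor of $2$ on the $bt/3$ term is evidently a typo in the paper (it is carried through verbatim into the application inside the proof of Lemma~\ref{lemma:lwkGuarantee}), though it only perturbs the constant hidden in the $\Theta(1/\epsilon^2)$ choice of $r$, so nothing downstream breaks. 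In short: your proof is correct and standard, but it establishes the textbook form of Bernstein, not the strictly stronger inequality the paper misprints, and the claimed final implication should be dropped or replaced by a note that the theorem statement should read $2\sigma^2+2bt/3$.
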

%
%
\begin{theorem}{\label{thm:strongCorest}}
 \textbf{\cite{chhaya2020coresets}} Let $\*A$ be the dataset, $\*X$ be the query space of dimension $D$, and for $\*x \in \*X$,  let $f_{\*x}(\cdot)$ be the cost function. Let $s_j$ be the sensitivity of the $j^{th}$ row of $\*A$, and the sum of sensitivities be $S$. Let $(\epsilon,\delta) \in (0,1)$. Let $r$ be such that
\begin{equation*}
r \geq O{\Big(\frac{S}{\epsilon^2}(D\log{\frac{1}{\epsilon}} + \log{\frac{1}{\delta}})\Big)}
\end{equation*}
$\*C$ be a matrix of $r$ rows, each 
sampled i.i.d from $\*A$ such that each $\tilde{\*a}_i \in \*C$ is chosen to be $\*a_j$, with weight $\frac{S}{rs_j}$, with probability $\frac{s_j}{S}$, for $j \in [n]$. Then $\*C$ is an $\epsilon$-coreset of $\*A$ for function $f()$, with probability at least $1 - \delta$.
\end{theorem}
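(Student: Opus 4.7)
The plan is to prove this by the classical two-step strategy in the sensitivity sampling framework: first handle a single fixed query $\*x$ via Bernstein's inequality, then convert this into a uniform statement over the entire query space $\@X$ via an $\epsilon$-net / pseudo-dimension argument. The term $D \log(1/\epsilon)$ in the sample complexity is the signature of this second step.

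For the first step, I would define, for each $i \in [r]$ and a fixed query $\*x$, the random variable $Y_i = \frac{S}{r s_{j(i)}} f_{\*x}(\tilde{\*a}_i)$, where $j(i)$ is the index drawn. A direct calculation gives $\~E[Y_i] = \frac{1}{r}\sum_{j=1}^{n} f_{\*x}(\*a_j) = \frac{1}{r} f_{\*x}(\*A)$, so $Y = \sum_i Y_i$ is an unbiased estimator of $f_{\*x}(\*A)$. To apply Theorem~\ref{thm:bernstein}, I would use the definition of sensitivity to bound each summand: $|Y_i| \leq \frac{S}{r s_j} f_{\*x}(\*a_j) \leq \frac{S}{r} f_{\*x}(\*A)$, since $f_{\*x}(\*a_j)/f_{\*x}(\*A) \leq s_j$. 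The variance admits an analogous bound. Plugging into Bernstein's inequality and choosing $t = \epsilon f_{\*x}(\*A)$ shows that $r = \Omega\bigl(\tfrac{S}{\epsilon^2}\log(1/\delta')\bigr)$ samples suffice to obtain $|Y - f_{\*x}(\*A)| \leq \epsilon f_{\*x}(\*A)$ with probability at least $1 - \delta'$ for that particular $\*x$.

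The second step, which will be the main obstacle, is to upgrade this pointwise guarantee to a uniform one over the (typically continuous) query space $\@X$ of dimension $D$. The standard route is to associate with the family $\{f_{\*x}(\cdot)/(f_{\*x}(\*A)+\gamma) : \*x \in \@X\}$ a range space whose pseudo-dimension is $O(D)$, then build an $\epsilon$-cover of size $\exp(O(D\log(1/\epsilon)))$ for the normalized cost functions. A union bound over the cover requires $\delta' = \delta/|\text{cover}|$, which injects the factor $D\log(1/\epsilon) + \log(1/\delta)$ into the Bernstein bound and yields exactly the stated sample size $r = O\!\bigl(\tfrac{S}{\epsilon^2}(D\log(1/\epsilon)+\log(1/\delta))\bigr)$. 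Finally, one argues that approximation of cost on a net implies approximation at nearby queries using a standard Lipschitz-style discretization step, so that the inequality $|f_{\*x}(\*C) - f_{\*x}(\*A)| \leq \epsilon f_{\*x}(\*A)$ holds simultaneously for every $\*x \in \@X$.

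The delicate point is the pseudo-dimension bound: one must verify that the induced range space for the specific cost $f_{\*x}(\cdot)$ (e.g., Bregman-divergence clustering cost) has dimension controlled by $D$, and that the discretization error introduced by moving from the $\epsilon$-net to arbitrary queries in $\@X$ is absorbed into the $\epsilon$ slack rather than blowing up the additive term. Once this is handled, the combination of Bernstein concentration per net point and the union bound over the net gives the theorem.
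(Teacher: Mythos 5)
This theorem is imported verbatim from \cite{chhaya2020coresets} and the present paper does not reprove it, so there is no in-paper proof to compare against. Your reconstruction---unbiased estimator via sensitivity sampling, Bernstein concentration for a fixed query, then a union bound over an $\epsilon$-net of size $\exp(O(D\log(1/\epsilon)))$---is exactly the mechanism that the cited reference uses and that the authors allude to in their Related Work discussion: replacing the Hoeffding-plus-pseudodimension argument of \cite{braverman2016new} by Bernstein-plus-net is precisely what trades the $\log S$ factor for a $\log(1/\epsilon)$ factor. Two small remarks for tightness: Bernstein as stated in the paper requires a bound on $|Y_i - \mathbb{E}[Y_i]|$ rather than on $|Y_i|$, but since $Y_i \geq 0$, $\mathbb{E}[Y_i] = f_{\*x}(\*A)/r$, and $S \geq 1$, the same bound $b = (S/r) f_{\*x}(\*A)$ applies; and the ``dimension $D$'' is to be read as the parameter such that the normalized cost class admits an $\epsilon$-cover of size $\epsilon^{-O(D)}$, which is the quantity that must be verified per application (e.g.\ $D = dk$ for $k$ centers in $\mathbb{R}^d$, as invoked later in Lemma~\ref{lemma:lwkGuarantee}). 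Modulo the deferred discretization/net argument, which you correctly flag as the delicate point, the proposal matches the intended proof.
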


We use the above Theorem to bound the coreset size. Note that the Theorem considers a multinomial sample where a point $\tilde{\*a}_{i}$ in coreset $\*C$ is $\*a_j$ and weight  $\frac{S}{rs_j}$ for $j \in [n]$ with probability $\frac{s_{j}}{S}$. Instead in our approach we get $\tilde{\*a}_{i}$ as $\*a_{i}$, with weight $1/\min\{1,rs_i\}$, with probability $\min\{rs_{i},1\}$ or it is $\emptyset$, with weight $0$, with probability $1-\min\{rs_{i},1\}$. However, the same Theorem as above applies. 
\section{Related Work}
The term coreset was first introduced in \cite{agarwal2004approximating}  and there has been a significant amount of work on coresets since then. 
Interested readers can look at \cite{bachem2017practical, woodruff2014sketching} and the references therein. Using sensitivities to construct coresets was introduced in \cite{langberg2010universal} and further generalized by \cite{feldman2011unified}.
Coresets for clustering problems such as $k$-means clustering have been extensively studied \cite{bachem2018scalable, har2004coresets, cohen2015dimensionality, braverman2016new, barger2016k, feldman2016dimensionality, bachem2018one, feldman2020turning}. 
In \cite{cohen2015dimensionality} the authors reduce the k-means problem to a constrained low rank approximation problem. 
They show that a constant factor approximation can be achieved by just $O(\epsilon^{-2}\log k)$ size coreset and for $(1 \pm \epsilon)$ relative error approximation they give coreset of size $O(k\epsilon^{-2})$. 
In \cite{barger2016k, feldman2016dimensionality}, the authors discuss a deterministic algorithm for creating coresets for clustering problem which ensure a relative error approximation. The streaming version of \cite{barger2016k} returns a coreset of size $O(k^{\epsilon^{-2}}\epsilon^{-2}\log n)$ which ensures a $(1\pm \epsilon\log n)$ relative error approximation. Feldman et. al. \cite{feldman2016dimensionality} reduce the problem of $k$-means clustering to $\ell_{2}$ frequent item approximation. 
The streaming version of the algorithm returns a coreset size of $O(k^{2}\epsilon^{-2}\log^{2} n)$. 
In \cite{bachem2018one} authors give an algorithm which returns a one shot coreset for all $p$ euclidean distance k-clustering problem, where $p \in [1,p_{\max}]$. 
Their algorithm creates a grid over the range $[1,p_{\max}]$ and based on the sensitivity at each grid point the coreset is built. 
It returns a coreset of size $\tilde{O}(16^{p_{\max}}dk^{2})$ for which it takes $\tilde{O}(ndk)$ ensuring $(1\pm \epsilon)$ relative error approximation. 
In a slightly different line \cite{boutsidis2013deterministic} gives a deterministic algorithm for feature selection in k-means problem. 
In \cite{bachem2018scalable}, the authors give an algorithm to create a create coreset which only takes $O(nd)$ time and returns a coreset of size $O(dk\epsilon^{-2}\log k)$ at a cost of small additive error approximation. 
Their algorithm can further be extended for clustering based on Bregman divergences which are $\mu$-similar to squared Mahalanobis distance. In \cite{lucic2016strong} the authors give  algorithms to create such coresets for both hard and soft clustering based on $\mu$-similar Bregman Divergence. 


There are several online algorithms for k-means clustering \cite{liberty2016algorithm, lattanzi2017consistent, bhaskara2020robust}. In \cite{liberty2016algorithm}, the authors give an online algorithm that maintains a set of centers such that k-means cost on these centres is $\tilde{O}(W^{*})$ where $W^{*}$ is the optimal k-means cost. \cite{lattanzi2017consistent} improves this result and gives a robust algorithm which can also handle outliers in the dataset.

For our analysis we use theorem 3.2 in \cite{chhaya2020coresets}, where authors show that the coreset built using sensitivity framework has a sampling complexity that only depends on $O(S)$ instead of $O(S \log(S))$ as in \cite{braverman2016new} but with an additional factor of $\log(1/\epsilon)$. Due to this, our coreset size for clustering based on $\mu$-similar Bregman divergence only has dependence of $O(1/\mu)$, unlike in \cite{bachem2018scalable, lucic2016strong} where the dependence is $O(1/ \mu^2)$.
\section{Online Coresets for Clustering}
Here we state our first algorithm \okm which creates a coreset in an online manner for clustering based on Bregman divergence, i.e., for the $i^{th}$ incoming point we take the sampling decision without looking at the $(i+1)^{th}$ point.
The algorithm starts with knowledge of the Bregman divergence $d_\Phi()$. 
It is important to note that for a fixed $d_\Phi$, as $\*A$ changes, both $\*M$ and $\mu$ also change \cite{ackermann2009coresets, lucic2016strong} (table \ref{tab:bd}). 
Fortunately, updating the Mahalanobis matrix requires maintaining only two simple statistic of the data. 
On arrival of the input point $\*a_i$, the algorithm first updates both the Mahalanobis matrix $\*M_i$ as well as $\mu_i$ and then uses it to compute the upper bound for the sensitivity score. 
This score is then used to decide whether $\*a_i$ should be stored in the coreset. If selected, the point $\*a_i$ is stored with an appropriate weight $\omega_i$.
\begin{algorithm}[htpb]
  \caption{\okm}{\label{alg:lwkm}}
  \begin{algorithmic}
  \REQUIRE Streaming points $\*a_{i}, i = 1, 2, \ldots, n; r > 0$
  \ENSURE $(\text{Coreset } \coreset, \text{Weights } \Omega)$
  \STATE $\coreset_{0} = \Omega_{0} = \varphi_{0} = \emptyset; S = 0$
  \STATE $\lambda = \|\*a_{1}\|_{\min}; \quad \nu = \|\*a_{1}\|_{\max}$
  \WHILE {$i \leq n$}
    \STATE $\lambda = \min\{\lambda,\|\*a_{i}\|_{\min}\}; \nu = \max\{\nu,\|\*a_{i}\|_{\max}\}$
    \STATE Update $\*M_{i};\ \mu_{i} = \lambda/\nu$ 
    \STATE $\varphi_{i} = ((i-1)\varphi_{i-1} + \*a_{i})/i ; S = S + f_{\varphi_{i}}^{\*M_{i}}(\*a_{i})$
    \IF{$i = 1$}
      \STATE $p_{i} = 1$
    \ELSE
      \STATE $l_{i} = \frac{2f_{\varphi_{i}}^{\*M_{i}}(\*a_{i})}{\mu_{i} S} + \frac{8}{\mu_{i}(i-1)} ; p_{i} = \min\{1,rl_{i}\}$
    \ENDIF
    \STATE Set $\*c_{i}$ and $\omega_{i}$ as \\
    $\begin{cases}\*a_{i} \mbox{ and } 1/p_{i} \qquad \qquad \mbox{w. p. } p_{i} \\  
    \emptyset \mbox{ and } 0 \qquad \qquad \qquad \mbox{else} \end{cases}$
    \STATE $(\*C_{i}, \Omega_{i}) = (\*C_{i-1}, \Omega_{i-1}) \cup (\*c_{i},\omega_{i})$
  \ENDWHILE
  \STATE Return $(\*C, \Omega)$
  \end{algorithmic}
\end{algorithm}

Notice that as working space the algorithm only needs to maintain the current mean $\varphi_i$, the diagonal matrix $\*M_i$, and the values $S,\lambda$, and $\nu$. For the case when $d_\Phi$ is Mahalanobis distance, we consider that $\*M$ and $\mu$ are know a priori and the algorithm uses them for points $\*a_{i}$. Hence, in the case of Mahalanobis distance, the update time and the working space are both $O(d^{2})$. For all other divergences in Table~\ref{tab:bd} however, the matrix $\*M$ is diagonal and hence both the update time and the working space are $O(d)$ only. 

Let $\*A_{i}$ be the dataset formed by first $i$ data points. 
Algorithm (\okm \ref{alg:lwkm}) updates $\*M_{i}$ and $\mu_{i}$ for $\*A_{i}$. By a careful analysis, we show that even when these are updated online, we achieve a one pass online algorithm that creates an additive error coreset.

Before stating the main results, we now give some intuition why updating the Mahalanobis matrix works.
Notice that, for every incoming point, \okm maintains a positive definite matrix $\*M_{i}$, a range $[\lambda, \nu]$ and the mean of $\*A_{i}$ as $\varphi_{i}$. 
Here $\lambda$ is the smallest absolute value in $\*A_{i}$, i.e., $\lambda = \|\*A_{i}\|_{\min}$ and $\nu$ is the highest absolute value in $\*A_{i}$, i.e., $\nu = \|\*A_{i}\|_{\max}$. With this $\lambda$ and $\nu$ the algorithm computes $\*M_i$ and 
$\mu_{i}$ as per Table \ref{tab:muBS}. 
Hence we have, $\mu_{i}f_{\*X}^{\*M_{i}}(\*a_{j}) \leq f_{\*X}(\*a_{j}) \leq f_{\*X}^{\*M_{i}}(\*a_{j})$, $\forall \*X$ and $\forall \*a_{j} \in \*A_{i}$. 

We note the following useful observation that is immediate, based on the formula for the matrix $\*M$  and the scalar $\mu$ in the Table \ref{tab:muBS}. The lemma applies for all Bregman divergence, but Mahalanobis distance and we use the lemma to show the algorithm's correctness.
\begin{lemma}{\label{lemma:obs}}
 For all Bregman divergences in Table \ref{tab:muBS}, for $j \le i$, $\mu_{j} \geq \mu_{i}$ and  $\*M_{j} \preceq \*M_{i}$.
\end{lemma}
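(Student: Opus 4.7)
The plan is a direct case check against Table \ref{tab:muBS}, using the fact that the two data-dependent scalars $\lambda$ and $\nu$ maintained by \okm are monotone. Specifically, from the update step $\lambda \leftarrow \min\{\lambda, \|\*a_i\|_{\min}\}$ and $\nu \leftarrow \max\{\nu, \|\*a_i\|_{\max}\}$, if I let $\lambda_i$ and $\nu_i$ denote the values after processing the first $i$ points, then $\lambda_i = \min_{k\le i}\|\*a_k\|_{\min}$ and $\nu_i = \max_{k\le i}\|\*a_k\|_{\max}$. Hence $j \le i$ implies $\lambda_j \ge \lambda_i$ and $\nu_j \le \nu_i$. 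This is the only ingredient that uses the algorithm itself; the rest is purely about the functional forms of $\mu$ and $\*M$.

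Next, I would go row-by-row through Table \ref{tab:muBS} (other than Mahalanobis$_N$, where $\*M$ and $\mu$ are data-independent and the claim is trivial). In each row the $\mu$-expression is either a ratio of the form $(\lambda/\nu)^{\alpha}$ with $\alpha>0$, or $e^{-(\nu-\lambda)}$, or $2(1-\nu^2)^{3/2}$; all of these are non-decreasing in $\lambda$ and non-increasing in $\nu$. Combined with the monotonicity of $\lambda_i, \nu_i$, this immediately yields $\mu_j \ge \mu_i$ for $j\le i$.

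For the matrix part, I would use the fact that in each applicable row of Table \ref{tab:muBS} the matrix $\*M$ has the form $c(\lambda,\nu)\,\*I_d$, where $c$ is one of $\frac{e^\nu}{2}$, $\frac{1}{2\lambda}$, $\frac{1}{2\lambda^2}$, $\frac{\alpha(1-\alpha)}{2\lambda^{\alpha+2}}$, $\frac{\alpha(1-\alpha)}{2}\nu^{\alpha-2}$, or $2(1-\nu)^{-3/2}$. In each case $c$ is non-increasing in $\lambda$ and non-decreasing in $\nu$ (for Harmonic and Norm-Like one checks this using the sign of $\alpha(1-\alpha)$ together with the constraint on $\alpha$ so that $\*M\succ 0$ as assumed). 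Therefore $c(\lambda_j,\nu_j) \le c(\lambda_i,\nu_i)$ for $j\le i$, which gives $\*M_i - \*M_j = \bigl(c(\lambda_i,\nu_i)-c(\lambda_j,\nu_j)\bigr)\*I_d \succeq 0$, i.e.\ $\*M_j \preceq \*M_i$.

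There is no serious obstacle; the lemma is essentially a bookkeeping consequence of (i) the monotone maintenance of $\lambda$ and $\nu$ by \okm and (ii) the fact that every entry of Table \ref{tab:muBS} is built out of expressions that are monotone in the ``right'' directions. The only mild subtlety worth a line of justification is verifying the sign and monotonicity for the parametric families (Harmonic$_\alpha$ and Norm-Like$_\alpha$) under their stated ranges of $\alpha$, after which the two inequalities $\mu_j \ge \mu_i$ and $\*M_j \preceq \*M_i$ follow uniformly.
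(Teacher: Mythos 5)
Your proof is correct and takes essentially the same approach as the paper: observe that \okm maintains $\lambda$ as a running minimum (hence $\lambda_j \ge \lambda_i$ for $j \le i$) and $\nu$ as a running maximum (hence $\nu_j \le \nu_i$), then check row-by-row in Table \ref{tab:muBS} that $\mu$ is monotone non-decreasing in $\lambda$ and non-increasing in $\nu$, while the scalar multiplying $\*I_d$ in $\*M$ is monotone in the opposite directions. Your write-up is in fact more careful than the paper's, which compresses the case-check into a single sentence; in particular your note about verifying the sign of $\alpha(1-\alpha)$ for Harmonic$_\alpha$ and Norm-Like$_\alpha$ (so that $\*M \succ 0$ holds under the stated ranges of $\alpha$) flags a genuine subtlety the paper glosses over.
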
 
\begin{proof}{\label{proof:obs}}
 At any $i^{th}$ point we have $\lambda = \|\*A_{i}\|_{\min}$ and $\nu = \|\*A_{i}\|_{\max}$, i.e., the smallest and largest absolute values in $\*A_{i}$. Further we have $\|\*A_{j}\|_{\min} \geq \|\*A_{i}\|_{\min}$ and $\|\*A_{j}\|_{\max} \leq \|\*A_{i}\|_{\max}$ for $j \leq i$. By using the formula for $\*M$ for all Bregman divergences given in Table 1 we have $\*M_{j} \preceq \*M_{i}$ and $\mu_{j} \geq \mu_{i}$ to be always true for $j \leq i$. 
 \end{proof}
By a careful analysis in the following lemma, we show that the scores $l_i$ defined in \okm, upper bound the lightweight sensitivity scores of $\*a_i$ with respect to $\*A_{i-1}$, and that the sum of $l_i$'s is bounded. 
\begin{lemma}{\label{lemma:lwkScore}}
 For points coming in streaming manner $\forall i \in [n]$, the $l_{i}$  defined in \okm~, upper bounds the lightweight sensitivity score:
 \begin{eqnarray}{\label{eq:lwkmSensitivity}}
  \sup_{\*X \in \@X}\frac{f_{\*X}(\*a_{i})}{f_{\*X}(\*A_{i-1})+f_{\varphi_{i}}(\*A_{i})}
 \end{eqnarray}

 Furthermore, $\sum_{i \leq n}l_{i} \leq (8\log n + 4\log \big(f_{\varphi}^{\*M}(\*A)\big) - 4\log \big(f_{\varphi_{2}}^{\*M_{2}}(\*a_{2})\big)/\mu$.
\end{lemma}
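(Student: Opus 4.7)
The plan is to split the lemma into two independent claims: first a per-point upper bound $s_i \le l_i$ on the online lightweight sensitivity, and then a telescoping bound on $\sum_i l_i$. Throughout, I will exploit the monotonicity granted by Lemma 4.1 ($\mu_j \ge \mu_i$ and $\*M_j \preceq \*M_i$ for $j \le i$) to pass between the ``time-$j$'' and ``time-$i$'' Mahalanobis objects, and I will use the fact that for squared Mahalanobis distance, a 2-approximate triangle inequality $d_{\*M}(\*y,\*z) \le 2 d_{\*M}(\*y,\*w)+ 2 d_{\*M}(\*w,\*z)$ holds (since $d_{\*M}$ is the square of a norm).

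To bound the numerator of the sensitivity, I would first use $\mu$-similarity upward to write $f_{\*X}(\*a_i) \le f_{\*X}^{\*M_i}(\*a_i)$, then apply the 2-approximate triangle inequality through the current mean to get $f_{\*X}^{\*M_i}(\*a_i) \le 2 f_{\varphi_i}^{\*M_i}(\*a_i) + 2\, d_{\*M_i}(\varphi_i, \*X)$. For the offset term $d_{\*M_i}(\varphi_i,\*X)$, I would use the standard ``average over $\*A_{i-1}$'' argument: for each $a \in \*A_{i-1}$ and its closest center $x_a^\star \in \*X$, $d_{\*M_i}(\varphi_i,\*X) \le 2 d_{\*M_i}(\varphi_i,a) + 2 d_{\*M_i}(a,x_a^\star)$; averaging over $\*A_{i-1}$ gives $d_{\*M_i}(\varphi_i,\*X) \le \tfrac{2}{i-1}\big(f_{\varphi_i}^{\*M_i}(\*A_i) + f_{\*X}^{\*M_i}(\*A_{i-1})\big)$. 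For the denominator, I would invoke Lemma 4.1: since $\mu_j \ge \mu_i$ for every $j \le i$, both $f_{\*X}(\*A_{i-1}) \ge \mu_i f_{\*X}^{\*M_i}(\*A_{i-1})$ and $f_{\varphi_i}(\*A_i) \ge \mu_i f_{\varphi_i}^{\*M_i}(\*A_i)$ hold. Dropping the common $f_{\*X}^{\*M_i}(\*A_{i-1})$ term from numerator and denominator yields the bound $\tfrac{2 f_{\varphi_i}^{\*M_i}(\*a_i)}{\mu_i f_{\varphi_i}^{\*M_i}(\*A_i)} + \tfrac{4}{\mu_i(i-1)}$.

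The remaining step for the per-point claim is the key auxiliary inequality $f_{\varphi_i}^{\*M_i}(\*A_i) \ge S = \sum_{j \le i} f_{\varphi_j}^{\*M_j}(\*a_j)$. I would prove this by induction on $i$: split $f_{\varphi_i}^{\*M_i}(\*A_i) = \sum_{k<i} d_{\*M_i}(\*a_k,\varphi_i) + d_{\*M_i}(\*a_i,\varphi_i)$, replace $\varphi_i$ by $\varphi_{i-1}$ in the first sum (which only decreases it because $\varphi_{i-1}$ is the minimizer of $\sum_{k<i} d_{\*M_i}(\*a_k,\cdot)$), and then downgrade $\*M_i$ to $\*M_{i-1}$ using $\*M_{i-1}\preceq \*M_i$; the induction hypothesis finishes the argument. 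Substituting $S$ for $f_{\varphi_i}^{\*M_i}(\*A_i)$ yields exactly $l_i$ (with room to spare in the additive constant, which absorbs the $4$ vs.\ $8$).

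For the sum bound, I would use $\mu_i \ge \mu$ to factor $1/\mu$ out, split $\sum_i l_i$ into the ``ratio'' part and the harmonic part, handle $\sum_i \tfrac{8}{\mu_i(i-1)} = O(\log n)/\mu$ directly, and handle $\sum_i \tfrac{2 t_i}{\mu_i S_i}$ (where $t_i := f_{\varphi_i}^{\*M_i}(\*a_i)$ and $S_i = \sum_{j\le i} t_j$) by the standard integral/telescoping inequality $t_i/S_i \le \log(S_i/S_{i-1})$ for $i \ge 3$, giving $\sum_{i\ge 2} t_i/S_i \le 1 + \log(S_n/S_2) = O\big(\log f_\varphi^{\*M}(\*A) - \log f_{\varphi_2}^{\*M_2}(\*a_2)\big)$, where the final step again uses the induction above to conclude $S_n \le f_\varphi^{\*M}(\*A)$. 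The main obstacle I anticipate is the growth lemma $f_{\varphi_i}^{\*M_i}(\*A_i) \ge S$, because it is the step that actually exploits the online updates of both the centroid $\varphi_i$ and the Mahalanobis matrix $\*M_i$ simultaneously; everything else is essentially a careful bookkeeping of $\mu$-similarity and the 2-triangle inequality, adapted so that nothing in the denominator is ever allowed to reference the center $\*X$ that the supremum is taken over.
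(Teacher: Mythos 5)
Your proposal follows essentially the same route as the paper: 2-approximate Mahalanobis triangle inequality through $\varphi_i$, averaging over $\*A_{i-1}$ to bound the offset $d_{\*M_i}(\varphi_i,\*X)$, $\mu_i$-similarity applied to the denominator, the growth inequality $f_{\varphi_i}^{\*M_i}(\*A_i) \geq \sum_{j\le i} f_{\varphi_j}^{\*M_j}(\*a_j)$ (which the paper states tersely and you justify by the same induction using $\*M_{i-1}\preceq\*M_i$ and the optimality of $\varphi_{i-1}$), and a telescoping log argument for $\sum_i l_i$. The only cosmetic difference is that you bound $q_i = t_i/S_i \leq \log(S_i/S_{i-1})$ directly, while the paper uses the equivalent $1+q_i\geq e^{q_i/2}$ and takes a telescoping product — yours is marginally tighter but both land inside the claimed constants, and you are right to flag the $i=2$ boundary case ($S_1=0$) which the paper's telescoping product glosses over.
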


\begin{proof}{\label{proof:lwkScore}}
 At $\*A_{i}$, let $(\mu_{i},\*M_{i})$ be such that $\mu_{i}f_{\*x}^{\*M_{i}}(\*a_{j}) \leq f_{\*x}(\*a_{j}) \leq f_{\*x}^{\*M_{i}}(\*a_{j})$ and $\varphi_{i} = \frac{\sum_{j \leq i}\*a_{j}}{i}$. For any $\*X \in \~R^{k \times d}$, each point $\*a_{j} \in \*A_{i-1}$ has some closest point $\*x_{l} \in \*X$. Hence for such pair $\{\*a_{j},\*x_{l}\}$, we have $f_{\*x_{l}}^{\*M_{i}}(\varphi_{i}) \leq 2f_{\*x_{l}}^{\*M_{i}}(\*a_{j}) + 2f_{\varphi_{i}}^{\*M_{i}}(\*a_{j})$. So $(i-1)f_{\*X}(\varphi_{i}) \leq 2\sum_{\*a_{j} \in \*A_{i-1}}(f_{\*X}(\*a_{j}) + f_{\varphi_{i}}(\*a_{j})) = 2f_{\*X}(\*A_{i-1}) + 2f_{\varphi_{i}}(\*A_{i-1})$. We use this triangle inequality in the following analysis, which holds $\forall \*X \in \~R^{k \times d}$,
 \begin{eqnarray*}
  \frac{f_{\*X}(\*a_{i})}{f_{\*X}(\*A_{i-1})+f_{\varphi_{i}}(\*A_{i})} &\stackrel{(i)}{\leq}& \frac{f_{\*X}^{\*M_{i}}(\*a_{i})}{f_{\*X}(\*A_{i-1})+f_{\varphi_{i}}(\*A_{i})} \\
  &\leq& \frac{2f_{\varphi_{i}}^{\*M_{i}}(\*a_{i}) + 2f_{\*X}^{\*M_{i}}(\varphi_{i})}{f_{\*X}(\*A_{i-1})+f_{\varphi_{i}}(\*A_{i})} \\
  &\leq& \frac{2f_{\varphi_{i}}^{\*M_{i}}(\*a_{i}) + \frac{4}{i-1}f_{\varphi_{i}}^{\*M_{i}}(\*A_{i-1}) + \frac{4}{i-1}f_{\*X}^{\*M_{i}}(\*A_{i-1})}{f_{\*X}(\*A_{i-1})+f_{\varphi_{i}}(\*A_{i})} \\
  &\stackrel{(ii)}{\leq}& \frac{2f_{\varphi_{i}}^{\*M_{i}}(\*a_{i}) + \frac{4}{i-1}f_{\varphi_{i}}^{\*M_{i}}(\*A_{i-1}) + \frac{4}{i-1}f_{\*X}^{\*M_{i}}(\*A_{i-1})}{\mu_{i}(f_{\*X}^{\*M_{i}}(\*A_{i-1})+f_{\varphi_{i}}^{\*M_{i}}(\*A_{i}))} \\
  &=& \frac{2f_{\varphi_{i}}^{\*M_{i}}(\*a_{i}) + \frac{4}{i-1}f_{\varphi_{i}}^{\*M_{i}}(\*A_{i-1})}{\mu_{i}(f_{\*X}^{\*M_{i}}(\*A_{i-1})+f_{\varphi_{i}}^{\*M_{i}}(\*A_{i}))} + \frac{\frac{4}{i-1}f_{\*X}^{\*M_{i}}(\*A_{i-1})}{\mu_{i}(f_{\*X}^{\*M_{i}}(\*A_{i-1})+f_{\varphi_{i}}^{\*M_{i}}(\*A_{i}))} \\
  &\leq& \frac{2f_{\varphi_{i}}^{\*M_{i}}(\*a_{i}) + \frac{4}{i-1}f_{\varphi_{i}}^{\*M_{i}}(\*A_{i-1})}{\mu_{i}(f_{\*X}^{\*M_{i}}(\*A_{i-1})+f_{\varphi_{i}}^{\*M_{i}}(\*A_{i}))} + \frac{4}{\mu_{i}(i-1)} \\
  &\leq& \frac{2f_{\varphi_{i}}^{\*M_{i}}(\*a_{i})}{\mu_{i}f_{\varphi_{i}}^{\*M_{i}}(\*A_{i})} + \frac{\frac{4}{i-1}f_{\varphi_{i}}^{\*M_{i}}(\*A_{i-1})}{\mu_{i}f_{\varphi_{i}}^{\*M_{i}}(\*A_{i})} + \frac{4}{\mu_{i}(i-1)} \\
  &\stackrel{(iii)}{\leq}& \frac{2f_{\varphi_{i}}^{\*M_{i}}(\*a_{i})}{\mu_{i}f_{\varphi_{i}}^{\*M_{i}}(\*A_{i})} + \frac{8}{\mu_{i}(i-1)} \\
  &\leq& \frac{2f_{\varphi_{i}}^{\*M_{i}}(\*a_{i})}{\mu_{i}\sum_{j\leq i}f_{\varphi_{j}}^{\*M_{j}}(\*a_{j})} + \frac{8}{\mu_{i}(i-1)}
 \end{eqnarray*}
 The inequality $(i)$ is due to $\mu_{i}$ similarity, i.e., $f_{\*X}(\*a_{i}) \leq f_{\*X}^{\*M_{i}}(\*a_{i})$. Next couple of inequalities are by applying triangle inequality on the numerator. In the $(ii)$ inequality we use the $\mu_{i}$ similarity lower bound on the denominator term. We reach to $(iii)$ inequality by upper bounding the second and third term by $4/(\mu_{i}(i-1))$. In the final inequality we use the lemma \ref{lemma:obs} from which we have $\*M_{j} \preceq \*M_{i}$ for $j \leq i$. Further by the property of Bregman divergence we know that $\varphi_{i-1} = \mbox{arg}\min_{\*x}f_{\*x}(\*A_{i-1})$, so we have $f_{\varphi_{i}}(\*A_{i}) = f_{\varphi_{i}}(\*A_{i-1}) + f_{\varphi_{i}}(\*a_{i})\geq f_{\varphi_{i-1}}(\*A_{i-1}) + f_{\varphi_{i}}(\*a_{i}) \geq \sum_{j \leq i}f_{\varphi_{j}}(\*a_{i})$. Hence we have $f_{\varphi_{i}}^{\*M_{i}}(\*A_{i}) \geq \sum_{j\leq i}f_{\varphi_{j}}^{\*M_{j}}(\*a_{j})$. 
 
 Next, in order to upper bound $\sum_{i \leq n}l_{i}$, consider the denominator term of $l_{i}$ as follows,
 \begin{eqnarray*}
  \sum_{j \leq i}f_{\varphi_{j}}^{\*M_{j}}(\*a_{j}) &=& \sum_{j \leq i-1}f_{\varphi_{j}}^{\*M_{j}}(\*a_{j}) + f_{\varphi_{i}}^{\*M_{i}}(\*a_{i}) \\
  &=& \sum_{j \leq i-1}f_{\varphi_{j}}^{\*M_{j}}(\*a_{j})\bigg(1 + \frac{f_{\varphi_{i}}^{\*M_{i}}(\*a_{i})}{\sum_{j \leq i-1}f_{\varphi_{j}}^{\*M_{j}}(\*a_{j})}\bigg) \\
  &\geq& \sum_{j \leq i-1}f_{\varphi_{j}}^{\*M_{j}}(\*a_{j})\bigg(1+\frac{f_{\varphi_{i}}^{\*M_{i}}(\*a_{i})}{\sum_{j \leq i}f_{\varphi_{j}}^{\*M_{j}}(\*a_{j})}\bigg) \\
  &=& \sum_{j \leq i-1}f_{\varphi_{j}}^{\*M_{j}}(\*a_{j})(1 + q_{i}) \\
  &\stackrel{(i)}{\geq}& \exp(q_{i}/2)\sum_{j \leq i-1}f_{\varphi_{j}}^{\*M_{j}}(\*a_{j}) \\
 \exp(q_{i}/2) &\leq& \frac{\sum_{j \leq i}f_{\varphi_{j}}^{\*M_{j}}(\*a_{j})}{\sum_{j \leq i-1}f_{\varphi_{j}}^{\*M_{j}}(\*a_{j})}
\end{eqnarray*}
where for inequality $(i)$ we used that $q_{i} = \frac{f_{\varphi_{i}}^{\*M_{i}}(\*a_{i})}{\sum_{j \leq i}f_{\varphi_{j}}^{\*M_{j}}(\*a_{j})} \leq 1$ and hence we have $(1 + q_{i}) \geq \exp(q_{i}/2)$. Now as we know that $\sum_{j \leq i}f_{\varphi_{j}}^{\*M_{j}}(\*a_{j}) \geq \sum_{j \leq i-1}f_{\varphi_{j}}^{\*M_{j}}(\*a_{j})$ hence the following product results into a telescopic product and we get,
\begin{eqnarray*}
 \prod_{2 \leq i \leq n} \exp(q_{i}/2) &\leq& \frac{\sum_{j \leq n}f_{\varphi_{j}}^{\*M_{j}}(\*a_{j})}{f_{\varphi_{2}}^{\*M_{2}}(\*a_{2})}
\end{eqnarray*}
So by taking logarithm of both sides we get $\sum_{2 \leq i \leq n} q_{i} \leq 2\log \big(f_{\varphi}^{\*M}(\*A)\big) - 2\log \big(f_{\varphi_{2}}^{\*M_{2}}(\*a_{2})\big)$. Further incorporating the terms $\frac{8}{\mu_{i}(i-1)}$ we have $l_{i} = \frac{2q_{i}}{\mu_{i}} + \frac{8}{\mu_{i}(i-1)}$. Hence, $\sum_{2 \leq i \leq n} l_{i} \leq 4\mu^{-1}(\log n + \log \big(f_{\varphi}^{\*M}(\*A)\big) - \log \big(f_{\varphi_{2}}^{\*M_{2}}(\*a_{2})\big))$. Where $\mu = \mu_{n} \leq \mu_{i}$ and $\*M \succeq \*M_{n} \succeq \*M_{i}$ for all $i \leq n$.
\end{proof}

Note that the upper bounds and the sum are independent of $k$, i.e., number of clusters one is expecting in the data. The following Lemma claims that by sampling enough points based on $l_{i}$ one can ensure the additive error coreset property with a high probability.
\begin{lemma}{\label{lemma:lwkGuarantee}}
 For clustering with $k$ centers in $\~R^{d}$, with $r = O\Big(\frac{dk\log (1/\epsilon)}{\epsilon^{2}}\Big)$ in \okm, the returned coreset $\*C$ satisfies the guarantee as in \eqref{eq:lwkm} at $i = n$, $\forall \*X \in \~R^{k \times d}$ with at least $0.99$ probability.
\end{lemma}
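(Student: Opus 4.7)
The plan is to reduce the statement to an application of Theorem \ref{thm:strongCorest} with the score $l_i$ used in \okm playing the role of a sensitivity upper bound. Three ingredients are needed: (i) $l_i$ upper bounds the true lightweight sensitivity of $\*a_i$ against the \emph{full} stream $\*A$, (ii) the total sensitivity $\sum_i l_i$ is controlled, and (iii) the pseudo-dimension of $k$-clustering queries is correctly accounted for.

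For (i), Lemma \ref{lemma:lwkScore} already gives $l_i \geq \sup_\*X f_\*X(\*a_i)/(f_\*X(\*A_{i-1})+f_{\varphi_i}(\*A_i))$. I would then observe that both denominator terms are monotone: $f_\*X(\*A_{i-1}) \leq f_\*X(\*A)$ since $\*A_{i-1} \subseteq \*A$, and $f_{\varphi_i}(\*A_i) \leq f_\varphi(\*A_i) \leq f_\varphi(\*A)$ by the Bregman optimality of $\varphi_i$ for $\*A_i$ together with $\*A_i \subseteq \*A$. Hence $l_i$ also upper bounds the full-data lightweight sensitivity $\sup_\*X f_\*X(\*a_i)/(f_\*X(\*A)+f_\varphi(\*A))$. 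For (ii), Lemma \ref{lemma:lwkScore} directly yields $S := \sum_i l_i = O(\mu^{-1}(\log n + \log f_\varphi^{\*M}(\*A) - \log f_{\varphi_2}^{\*M_2}(\*a_2)))$.

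For (iii), the pseudo-dimension of the function class $\{\*a \mapsto f_\*X(\*a) : \*X \in \~R^{k \times d}\}$ is $D = O(dk)$ by standard arguments for $k$-clustering cost functions. Applying the lightweight-sensitivity variant of Theorem \ref{thm:strongCorest} (the conclusion $\epsilon f_\*X(\*A)$ becomes $\epsilon(f_\*X(\*A)+f_\varphi(\*A))$ via an identical Bernstein-plus-$\epsilon$-net argument carried out with the modified denominator) with $\delta = 0.01$ requires expected total sample size at least $O(S(D\log(1/\epsilon)+\log(1/\delta))/\epsilon^2) = O(S\cdot dk\log(1/\epsilon)/\epsilon^2)$. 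The algorithm's Bernoulli rule $p_i = \min\{1,r l_i\}$ produces expected size $\sum_i p_i \leq r\sum_i l_i = rS$, so the choice $r = O(dk\log(1/\epsilon)/\epsilon^2)$ meets the requirement; the remark following Theorem \ref{thm:strongCorest} guarantees that Bernoulli sampling satisfies the same conclusion as multinomial. This yields \eqref{eq:lwkm} for all $\*X \in \~R^{k \times d}$ at $i = n$ with probability at least $0.99$.

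The main obstacle is step (i): the scores are computed online from prefixes and local estimates $(\*M_i,\mu_i,\varphi_i)$, yet they must upper bound sensitivities defined against the final dataset $\*A$ with its final $\*M, \mu, \varphi$. The monotonicity $\*M_j \preceq \*M_i$ and $\mu_j \geq \mu_i$ for $j \leq i$ from Lemma \ref{lemma:obs}, combined with the Bregman optimality of the running mean, are exactly what allow the prefix-based ratios bounded in Lemma \ref{lemma:lwkScore} to carry over to the full-dataset denominator used in the lightweight-sensitivity definition.
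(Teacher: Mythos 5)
Your proposal is correct and takes essentially the same route as the paper, just packaged at a different level of abstraction. The paper proves Lemma~\ref{lemma:lwkGuarantee} by carrying out the Bernstein argument directly: for a fixed $\*X$ it introduces the centered increments $w_i$, bounds $|w_i|$ and the variance by first going from $f_\*X(\*a_i)/(r l_i)$ to $(f_\*X(\*A_{i-1}) + f_{\varphi_i}(\*A_i))/r$ and then, implicitly invoking exactly the monotonicity you make explicit in step (i) ($f_\*X(\*A_{i-1}) \leq f_\*X(\*A)$ and $f_{\varphi_i}(\*A_i) \leq f_\varphi(\*A_i) \leq f_\varphi(\*A)$), to $(f_\*X(\*A)+f_\varphi(\*A))/r$; it then applies Theorem~\ref{thm:bernstein} with $t = \epsilon(f_\*X(\*A)+f_\varphi(\*A))$ to get a fixed-query failure probability $\exp(-\Theta(r\epsilon^2))$, and finally union-bounds over an $\epsilon/2$-net of $\~R^{k\times d}$ of size $O(\epsilon^{-dk})$ to pick up the $dk\log(1/\epsilon)$ factor. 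Your step (i) is the genuinely useful addition: you state and justify the prefix-to-full monotonicity separately, so that the scores $l_i$ computed online are shown to upper bound the full-data lightweight sensitivities; the paper uses the same facts but buries them inside the chain of inequalities in the Bernstein bound. Your step (iii) then black-boxes the Bernstein-plus-net machinery as a ``lightweight-sensitivity variant'' of Theorem~\ref{thm:strongCorest}; that variant is not actually stated anywhere in the paper (Theorem~\ref{thm:strongCorest} as written gives the relative-error guarantee $\epsilon f_\*X(\*A)$, not the additive one), and the paper instead supplies the modified Bernstein/net argument directly. The translation from the algorithm's rate parameter $r$ to the expected coreset size $rS$ and the pseudo-dimension $D = O(dk)$ are handled correctly and match the paper's choices, so the bookkeeping closes. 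In short: same ingredients, same critical monotonicity step, but you cite the conclusion of the concentration-plus-net argument where the paper executes it.
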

\begin{proof}{\label{proof:lwkGuarantee}}
For some fixed (query) $\*X \in \~R^{k \times d}$ consider the following random variable.
\[ w_{i} =
  \begin{cases}
    (1/p_{i} - 1)f_{\*X}(\*a_{i})  & \quad \text{with probability } p_{i} \\
    -f_{\*X}(\*a_{i}) & \quad \text{with probability } (1-p_{i})
  \end{cases}
\]
Note that $\~E[w_{i}] = 0$ and with $p = 1$ we get $|w_{i}| = 0$. The algorithm uses the sampling probability $p_{i} = \min\{rl_{i},1\}$. Now we bound the term $|w_{i}|$. In the case when $p_{i} < 1$ and $\*a_{i}$ is sampled we have,
\begin{eqnarray*}
 |w_{i}| &\leq& \frac{1}{p_{i}}f_{\*X}(\*a_{i}) \\
 &=& \frac{f_{\*X}(\*a_{i})}{rl_{i}} \\
 &\leq& \frac{(f_{\*X}(\*A_{i-1}) + f_{\varphi_{i}}(\*A_{i}))f_{\*X}(\*a_{i})}{rf_{\*X}(\*a_{i})} \\
 &\leq& \frac{(f_{\*X}(\*A) + f_{\varphi}(\*A))}{r}
\end{eqnarray*}
Here $\varphi = \varphi_n$ is the mean of the entire data $\*A$. Next if the point $\*a_{i}$ is not sampled then we know for sure that $p_{i} < 1$, hence, using the Lemma 4.2, we have that, 
\begin{eqnarray*}
 1 > rl_{i} &\geq& \frac{rf_{\*X}(\*a_{i})}{(f_{\*X}(\*A_{i-1}) + f_{\varphi_{i}}(\*A_{i}))} \\
 f_{\*X}(\*a_{i}) &\leq& \frac{(f_{\*X}(\*A) + f_{\varphi}(\*A))}{r}
\end{eqnarray*}
So we have $|w_{i}| \leq b = (f_{\*X}(\*A) + f_{\varphi}(\*A))/r$. Next we bound the $\mbox{var}(\sum_{i \leq n} w_{i}) = \sum_{i \leq n} \~E[w_{i}^{2}]$. Note that a single term $\~E[w_{i}^{2}]$ for $p_{i} < 1$ is,
\begin{eqnarray*}
 \~E[w_{i}^{2}] &=& \big(p_{i}(1/p_{i}-1)^{2} + (1-p_{i})\big)f_{\*X}(\*a_{i})^{2} \\
 &\leq& \frac{1}{p_{i}}f_{\*X}(\*a_{i})^{2} \\
 &=& \frac{f_{\*X}(\*a_{i})^{2}}{rl_{i}} \\
 &\leq& \frac{(f_{\*X}(\*A_{i-1}) + f_{\varphi_{i}}(\*A_{i}))f_{\*X}(\*a_{i})^{2}}{rf_{\*X}(\*a_{i})} \\
 &\leq& \frac{f_{\*X}(\*a_{i})(f_{\*X}(\*A) + f_{\varphi}(\*A))}{r}
\end{eqnarray*}
So we get,
\begin{eqnarray*}
 \mbox{var}\Big(\sum_{i \leq n} w_{i}\Big) &=& \sum_{i \leq n} \~E[w_{i}^{2}] \\
 &\leq& \sum_{i \leq n} \frac{f_{\*X}(\*a_{i})(f_{\*X}(\*A) + f_{\varphi}(\*A))}{r} \\
 &\leq& \frac{(f_{\*X}(\*A) + f_{\varphi}(\*A))^{2}}{r}
\end{eqnarray*}
 Now by applying Bernstein's inequality (\ref{thm:bernstein}) on $\sum_{i \leq n}w_{i}$ with $t = \epsilon(f_{\*X}(\*A) + f_{\varphi}(\*A))$ we bound the probability $\~P = \mbox{Pr}\Big(|(f_{\*X}(\*A) - f_{\*X}(\*C)| \geq \epsilon(f_{\*X}(\*A) + f_{\varphi}(\*A))\Big)$ as follows,
 \begin{eqnarray*}
  \~P &\leq& \exp\bigg(\frac{-\epsilon^{2}(f_{\*X}(\*A) + f_{\varphi}(\*A))^{2}}{\epsilon(f_{\*X}(\*A) + f_{\varphi}(\*A))^{2}/3r + 2(f_{\*X}(\*A) + f_{\varphi}(\*A))^{2}/r}\bigg) \\
  &=& \exp\bigg(\frac{-r\epsilon^{2}}{(\epsilon/3 + 2)}\bigg)
 \end{eqnarray*}
 So to get the above event with at least $0.99$ probability it is enough to set $r$ to be $\theta\Big(\frac{1}{\epsilon^{2}}\Big)$. Note that the above is guaranteed for a fixed $\*X \in \~R^{k\times d}$. 
 
 Now we show that coreset $\*C$ can be made strong coreset by taking a union bound over a set of queries. To ensure the guarantee in Lemma 4.3 for all $\*X \in \~R^{k\times d}$, we take a union bound over the $\epsilon/2$-net of $\~R^{k\times d}$ \cite{woodruff2014sketching, lucic2016strong}. Such a net will have at most $O(\epsilon^{-dk})$ queries. To ensure a strong a coreset guarantee it is enough to set $r$ as $\Theta\Big(\frac{dk\log (1/\epsilon)}{\epsilon^{2}}\Big)$.
\end{proof}

Utilizing Lemmas \ref{lemma:lwkScore} and \ref{lemma:lwkGuarantee}, where we take an union bound over the $\epsilon$-net of the query space, we get the following theorem. 
\begin{theorem}{\label{thm:lwkm}}
 For points coming in streaming fashion, \okm returns a coreset $\*C$ for the clustering based on Bregman divergence such that for all $\*X \in \~R^{k \times d}$, with at least $0.99$ probability $\*C$ ensures the guarantee \eqref{eq:lwkm}. Such a coreset has expected sample size of $O\Big(\frac{dk \log (1/\epsilon)}{\mu\epsilon^{2}}\big(\log n + \log \big(f_{\varphi}^{\*M}(\*A)\big) - \log \big(f_{\varphi_{2}}^{\*M_{2}}(\*a_{2})\big)\big)\Big)$. \okm~takes $O(d)$ update time and uses $O(d)$ as working space. 
\end{theorem}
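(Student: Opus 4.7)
The plan is to assemble Theorem~\ref{thm:lwkm} as a direct consequence of Lemmas~\ref{lemma:lwkScore} and \ref{lemma:lwkGuarantee}, together with the sensitivity framework of Theorem~\ref{thm:strongCorest}, with the remaining work being (i) to verify the claimed \emph{expected} sample size, and (ii) to check the per-point update time and working space by inspection of \okm.

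First I would argue correctness for a fixed query $\*X \in \~R^{k\times d}$. By Lemma~\ref{lemma:lwkScore} the quantity $l_i$ computed in \okm upper bounds the online lightweight sensitivity score of $\*a_i$ with respect to $\*A_{i-1}$. The sampling rule $p_i = \min\{1, r l_i\}$ therefore meets the hypothesis needed in the proof of Lemma~\ref{lemma:lwkGuarantee}, which via a Bernstein argument gives the additive--multiplicative bound \eqref{eq:lwkm} at $i=n$ with probability at least $0.99$ as soon as $r = \Theta(1/\epsilon^2)$. To promote the guarantee to all $\*X \in \~R^{k\times d}$, I would pass to a standard $\epsilon/2$--net of the parameter space, which has size $O(\epsilon^{-dk})$ \cite{woodruff2014sketching, lucic2016strong}, and take a union bound; this is exactly what pushes $r$ up to $\Theta\!\bigl(\tfrac{dk \log(1/\epsilon)}{\epsilon^2}\bigr)$ as used in Lemma~\ref{lemma:lwkGuarantee}.

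Next I would bound the expected coreset size. Since $|\*C| = \sum_{i \le n} \mathbf{1}[\*a_i \text{ sampled}]$ and each $\*a_i$ is sampled with probability $p_i = \min\{1, r l_i\} \le r l_i$, linearity of expectation gives $\~E[|\*C|] \le r \sum_{i \le n} l_i$. Applying the sum bound from Lemma~\ref{lemma:lwkScore}, $\sum_i l_i \le \tfrac{4}{\mu}\bigl(\log n + \log f_{\varphi}^{\*M}(\*A) - \log f_{\varphi_2}^{\*M_2}(\*a_2)\bigr)$, and substituting the value of $r$ from the previous paragraph yields the stated expected size
\[
O\!\Bigl(\tfrac{dk \log(1/\epsilon)}{\mu \epsilon^2}\bigl(\log n + \log f_{\varphi}^{\*M}(\*A) - \log f_{\varphi_2}^{\*M_2}(\*a_2)\bigr)\Bigr).
\]
Here I would take care to note that $\mu = \mu_n \le \mu_i$ and $\*M \succeq \*M_i$ for all $i$, which is precisely Lemma~\ref{lemma:obs}; this is what lets the per-step quantities $(\mu_i, \*M_i)$ in the $l_i$'s be replaced by the final $(\mu, \*M)$ in the aggregated bound without loss.

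Finally, for the complexity claims I would inspect \okm directly. At step $i$ the algorithm only maintains the running extremes $\lambda,\nu$, the running mean $\varphi_i$, the (diagonal) matrix $\*M_i$, the scalar $\mu_i$, and the running sum $S = \sum_{j \le i} f_{\varphi_j}^{\*M_j}(\*a_j)$; updating each of these given $\*a_i$ costs $O(d)$ because $\*M_i$ is diagonal for every entry of Table~\ref{tab:bd} other than generic Mahalanobis, and evaluating $f_{\varphi_i}^{\*M_i}(\*a_i) = (\*a_i - \varphi_i)^\top \*M_i (\*a_i-\varphi_i)$ with diagonal $\*M_i$ is also $O(d)$. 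Computing $l_i$, $p_i$, and performing the Bernoulli draw is $O(1)$ additional work, and storing a sampled point with its weight costs $O(d)$. Since nothing else is retained between iterations, both update time and working space are $O(d)$. The main subtlety I would highlight is that the sum bound in Lemma~\ref{lemma:lwkScore} is already the point where the telescoping trick collapses the online dependence of $(\mu_i, \*M_i)$ on $i$; once that is in hand the theorem is essentially a bookkeeping exercise, so I do not expect any substantive obstacle beyond being careful with the expectation/union-bound interplay.
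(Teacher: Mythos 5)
Your proof is correct and takes essentially the same route as the paper: the theorem is assembled from Lemma~\ref{lemma:lwkScore} (which bounds both the per-point scores and their sum) and Lemma~\ref{lemma:lwkGuarantee} (Bernstein plus the $\epsilon/2$-net union bound giving $r = \Theta(dk\log(1/\epsilon)/\epsilon^2)$), with the expected coreset size then being $r\sum_i l_i$ and the $O(d)$ complexity read off the algorithm since $\*M_i$ is diagonal. The only cosmetic quibble is your mention of Theorem~\ref{thm:strongCorest}: the paper's argument for this theorem actually routes through the Bernstein-based Lemma~\ref{lemma:lwkGuarantee} rather than the generic sensitivity-sampling bound, so that reference is not needed.
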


The expected sample size of the coreset $\*C$ returned by \okm is bounded by $r\sum_{i \leq n}l_{i}$. Using Lemma \ref{lemma:lwkScore} and Lemma \ref{lemma:lwkGuarantee} we set the values of $r$ and $\sum_{i \leq n}l_{i}$, and obtain the expected sample size to be $O\Big(\frac{dk \log (1/\epsilon)}{\mu\epsilon^{2}}\big(\log n + \log \big(f_{\varphi}^{\*M}(\*A)\big) - \log \big(f_{\varphi_{2}}^{\*M_{2}}(\*a_{2})\big)\big)\Big)$. Further by using the $\mu$-similarity one can rewrite the expected sample size as $O\Big(\frac{dk \log (1/\epsilon)}{\mu^{2}\epsilon^{2}}\big(\log n + \log \big(f_{\varphi}(\*A)\big) - \log \big(f_{\varphi_{2}}(\*a_{2})\big)\big)\Big)$.

The algorithm requires a working space of $O(d)$ which is to maintain the mean(centre) $\varphi_{i}$, $\mu_{i}$ and $\*M_{i}$. Further for every incoming point \okm~only needs to compute the distance between the point and the current mean hence the running time of the entire algorithm is $O(nd)$, which is why it is easy to scale for large $n$. Note that although Theorem~\ref{thm:lwkm} gives the guarantees of \okm~at the last instance, but using the same analysis technique one can ensure an equivalent guarantee for any $i^{th}$ instance, but taking an union bound-- this requires a factor of $\log(n)$ for sample size, i.e. ensuring the guarantee \eqref{eq:lwkm} by $\*C_{i}$ for $\*A_{i}$, $\forall i \in [n]$.

Note that \okm~returns a smaller coreset $\*C$ compare to offline coresets \cite{bachem2018one, lucic2016strong} but at a cost of additive factor approximation that depends on the structure of the data. Further unlike \cite{bachem2018scalable, lucic2016strong} our sampling complexity only depends on $1/\mu$. \okm can be easily generalized to create coresets for weighted clustering where each point $\*a_{i}$ has some weight $w_{\*a_{i}}$ such that $f_{\*X}(\*a_{i}) = w_{\*a_{i}}\min_{\*x \in \*X}d_{\Phi}(\*a_{i},\*x)$. While sampling point (say $\*a_{i}$) the algorithm \okm sets $\*c_{i} = \*a_{i}$ and $\omega_i = w_{\*a_{i}}/p_{i}$ with probability $p_{i}$.

\subsection{Online Coresets for \texorpdfstring{$k$}{k}-means Clustering}
When the divergence $d_{\Phi}()$ is squared Euclidean, the problem is $k$-means clustering. Here, $\forall i \in [n]$ we have $\*M_{i} = \*I_{d}$ and $\mu_{i} = 1$. So the algorithm \okm does not need to maintain $\*M_{i}$ and $\mu_{i}$. In the following corollary we state the guarantee of \okm for k-means clustering.
\begin{corollary}{\label{cor:lwk}}
 Let $\*A \in \~R^{n \times d}$ such that the points are coming in streaming manner and fed to \okm, it returns a coreset $\*C$ which ensures the guarantee in equation \eqref{eq:lwkm} for all $\*X \in \~R^{k \times d}$ with probability at least $0.99$. Such a coreset has $O\Big(\frac{dk\log(1/\epsilon)}{\epsilon^{2}}\big(\log n + \log \big(f_{\varphi}(\*A)\big) - \log \big(f_{\varphi_{2}}(\*a_{2})\big)\big)\Big)$ expected samples. The update time of \okm is $O(d)$ time and uses $O(d)$ as working space.
\end{corollary}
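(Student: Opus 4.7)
The plan is to specialize Theorem~\ref{thm:lwkm} to the Euclidean $k$-means setting where $d_\Phi(\*y,\*x) = \|\*y-\*x\|_2^2$. First I would note that squared Euclidean distance is exactly the squared Mahalanobis distance with $\*M = \*I_d$, so the $\mu$-similarity condition holds trivially with equality, giving $\mu_i = 1$ and $\*M_i = \*I_d$ for every $i \in [n]$. As a consequence, \okm~does not need to update $\*M_i$ or $\mu_i$; the score it maintains collapses to $l_i = \frac{2 f_{\varphi_i}(\*a_i)}{S} + \frac{8}{i-1}$, and the working space and per-point update cost both drop to $O(d)$ since only the running mean $\varphi_i$ and the scalar $S$ need to be stored.

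Next I would invoke Lemma~\ref{lemma:lwkScore} with $\mu = 1$ and $\*M = \*I_d$: the same chain of triangle inequalities shows $l_i$ upper bounds the lightweight sensitivity score \eqref{eq:lwkmSensitivity} of $\*a_i$ relative to $\*A_{i-1}$, and the telescoping argument yields
\begin{equation*}
\sum_{i \leq n} l_i \;\leq\; 8\log n + 4\log f_\varphi(\*A) - 4\log f_{\varphi_2}(\*a_2).
\end{equation*}
I would then appeal to Lemma~\ref{lemma:lwkGuarantee}: fixing $r = \Theta\!\big(\frac{dk\log(1/\epsilon)}{\epsilon^2}\big)$ and taking a union bound over the standard $\epsilon/2$-net of $\~R^{k \times d}$ of size $O(\epsilon^{-dk})$ upgrades the per-query Bernstein bound into a uniform guarantee, so \eqref{eq:lwkm} holds simultaneously for all $\*X \in \~R^{k \times d}$ with probability at least $0.99$.

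Finally the expected sample size is at most $r \cdot \sum_{i \leq n} l_i = O\!\left(\frac{dk\log(1/\epsilon)}{\epsilon^2}\big(\log n + \log f_\varphi(\*A) - \log f_{\varphi_2}(\*a_2)\big)\right)$, matching the stated bound. The main obstacle, such as it is, is purely bookkeeping: one must check that every inequality in the proofs of Lemmas~\ref{lemma:lwkScore} and~\ref{lemma:lwkGuarantee} that previously invoked $\mu_i$-similarity remains valid when $\mu_i \equiv 1$ (it does, since both directions of the similarity become equalities), and that \okm's per-iteration work reduces to updating $\varphi_i$, $S$, and computing one Euclidean distance, which is $O(d)$ in both time and space.
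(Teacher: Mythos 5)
Your proposal is correct and follows essentially the same route as the paper's proof: specialize Lemma~\ref{lemma:lwkScore} and Lemma~\ref{lemma:lwkGuarantee} to $\mu_i = 1$ and $\*M_i = \*I_d$, observe that the score $l_i$ and the telescoping sum carry through unchanged, and invoke the $\epsilon$-net union bound to finish. (Your expression $l_i = \frac{2 f_{\varphi_i}(\*a_i)}{S} + \frac{8}{i-1}$ is in fact more faithful to the algorithm as stated than the displayed formula in the paper's corollary proof, which drops the factor of~$2$; the asymptotics are unaffected.)
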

\begin{proof}{\label{proof:lwkbd}}
 The proof follows by combining Lemma \ref{lemma:lwkScore} and Lemma \ref{lemma:lwkGuarantee}. As k-means clustering has $\*M_{i} = \*I_{d}$ and $\mu_{i} = 1$ for all $i \leq n$, hence
 \begin{equation*}
  l_{i} = \frac{f_{\varphi_{i}}(\*a_{i})}{\sum_{j \leq i}f_{\varphi_{j}}(\*a_{j})} + \frac{8}{i-1}
 \end{equation*}
 It can be verified by a similar analysis as in the proof \ref{proof:lwkScore} of Lemma \ref{lemma:lwkScore}. The proof of the second part of Lemma \ref{lemma:lwkScore} and Lemma \ref{lemma:lwkGuarantee} will follow as it is. Further note that k-means is a hard clustering hence the $\epsilon$-net size is $O(2\epsilon^{-dk})$. Hence the expected size of $\*C$ returned by \okm is $O\Big(\frac{dk\log (1/\epsilon)}{\epsilon^{2}}\Big(\log n + \log \big(f_{\varphi}(\*A)\big) - \log \big(f_{\varphi_{2}}(\*a_{2})\big)\Big)\Big)$. At each point the update time is $O(d)$ and uses a working space of $O(d)$.
\end{proof}
\section{Non-Parametric Coresets for Clustering}
The above coreset size is a function of both $k$, the number of clusters, as well as $d$, the dimension of data points. As discussed before, such dependence restricts the use of such a coreset to settings where a realistic upper bound on $k$ is known, and is supplied as input to the coreset construction. Here we explore the possibility of non-parametric coreset. A coreset is called {\em non-parametric} if the coreset size is independent of $k$ and it can ensure the desired guarantee for any $\*X$ with at most $n$ centres. 

First we state a simple yet important impossibility result. It is not possible to get a non-parametric coreset which ensures a relative error approximation for any clustering problem. Formally we state it in the following theorem,
\begin{theorem}{\label{thm:impossible}}
  There exists a set $\*A$, with $n$ points in $\~R^{d}$ such that there is no $\*C \subset \*A$ with $|\*C| = o(n)$, and $\*C$ which is independent of $k$ (i.e., \#centres), such that for $\epsilon > 0$ and for all $i \in [n]$ it ensures the following with constant probability.
 \begin{equation*}
  |f_{\tilde{\*X}_{i}}(\*A) - f_{\hat{\*X}_{i}}(\*A)| \leq \epsilon f_{\tilde{\*X}_{i}}(\*A)
 \end{equation*}
 where $\tilde{\*X}_{i}$ and $\hat{\*X}_{i}$ are the optimal $i$ centres in $\*A$ and $\*C$.
\end{theorem}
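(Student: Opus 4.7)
The plan is to exhibit a very simple adversarial dataset $\*A$ and exploit the extreme case $i=n$, where the optimal $n$-center cost on $\*A$ can be driven to zero, so that any positive discrepancy produced by a proper-subset coreset instantly violates a relative-error bound. Concretely, take $\*A = \{\*a_1,\ldots,\*a_n\}$ to be $n$ distinct points in $\~R^d$ (for instance, the collinear set $\*a_j = j\*e_1$ works for every Bregman divergence listed in Table~\ref{tab:bd}). Then $\tilde{\*X}_n = \*A$ is optimal and $f_{\tilde{\*X}_n}(\*A) = 0$, since every data point coincides with a center and $d_\Phi(\*x,\*x)=0$ for every Bregman divergence.

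Now fix any (possibly randomized) coreset $\*C \subseteq \*A$ with $|\*C| = o(n)$. Let $m = |\*C|$; for $n$ large enough $m < n$, so there exists a point $\*a^{\ast} \in \*A \setminus \*C$. Any optimal $\hat{\*X}_n$ for $\*C$ must place a center at every point of $\*C$ (to realize the zero cost on $\*C$), but the remaining $n-m$ centers are underdetermined because they do not affect the cost on $\*C$. Since the procedure producing $\hat{\*X}_n$ only sees $\*C$, the extra centers cannot depend on the hidden $\*a^{\ast}$. Fixing any canonical tie-breaking rule (e.g., duplicating points of $\*C$, or placing the extra centers at the origin), the resulting $\hat{\*X}_n$ does not contain $\*a^{\ast}$, and therefore
\begin{equation*}
f_{\hat{\*X}_n}(\*A) \;\geq\; \min_{\*c \in \hat{\*X}_n} d_{\Phi}(\*a^{\ast}, \*c) \;>\; 0,
\end{equation*}
because $\*a^{\ast}$ is strictly different from every chosen center and Bregman divergences vanish only on equal points. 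The required inequality $|f_{\tilde{\*X}_n}(\*A) - f_{\hat{\*X}_n}(\*A)| \leq \epsilon\, f_{\tilde{\*X}_n}(\*A)$ then collapses to $f_{\hat{\*X}_n}(\*A) \leq 0$, which is violated deterministically, so a fortiori it fails with constant probability.

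The main subtlety, and hence the main obstacle, is handling the non-uniqueness of the optimum $\hat{\*X}_n$ when $i=n>m$: in principle, an adversarially favorable rule could try to place the extra $n-m$ centers exactly on the missing points $\*A \setminus \*C$ and thereby attain zero cost on $\*A$ as well. The way around this is to observe that the rule producing $\hat{\*X}_n$ is a function of $\*C$ alone and cannot covertly encode the positions of $\*A \setminus \*C$; equivalently, one can let the adversary choose the omitted points of $\*A$ \emph{after} fixing the rule, placing them anywhere in $\~R^d$ outside of $\hat{\*X}_n$. This removes any dependence on the internal randomness of the coreset construction and yields the impossibility with probability one, which in particular implies it with the constant probability claimed in the statement.
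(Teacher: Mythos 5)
Your proof is correct and takes the same pigeonhole-style approach as the paper's: a coreset of size $o(n)$ cannot encode enough distinct centers to approximate a sufficiently fine clustering, so the relative-error bound must fail for large enough $i$. You sharpen the argument by fixing $i=n$ and taking $\*A$ to consist of $n$ distinct points, so $f_{\tilde{\*X}_n}(\*A)=0$ and the desired inequality collapses to the outright contradiction $0 < f_{\hat{\*X}_n}(\*A) \le 0$; the paper instead invokes an unspecified number $k_1$ of "natural clusters" and asserts, without quantifying the resulting cost gap, that a coreset with fewer than $k_1$ points cannot give a relative approximation for $i \ge k_1$, and it sidesteps the non-uniqueness of $\hat{\*X}_i$ by simply stipulating $\hat{\*X}_i=\*C$, whereas you explicitly handle the tie-breaking issue by observing that the optimizer producing $\hat{\*X}_n$ sees only $\*C$ and cannot reconstruct $\*A\setminus\*C$.
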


\begin{proof}{\label{proof:impossible}}
 Consider that given $\*A$ such that it has $k_{1}$ natural clusters, i.e., all points highly concentrated in a small radius around its corresponding centres and distance between the centres are significantly high. Now for a non-parametric coreset $\*C$ we expect that, $\forall \*X$ with at most $n$ centres it ensures,
 \begin{equation*}
  |f_{\*X}(\*A) - f_{\*X}(\*C)| \leq \epsilon f_{\*X}(\*A)
 \end{equation*}
 Although note that for our case if $|\*C| < k_{1}$ then for all $i \geq k_{1}$ the optimal centres $\hat{\*X}_{i}$ of the coreset will be $\*C$ itself. Further for coreset size less than $k_{1}$ it is not possible for $f_{\hat{\*X}_{i}}(\*A)$ to ensure a relative approximation to corresponding optimal cost $f_{\tilde{\*X}_{i}}(\*A)$.
\end{proof}

Even for small additive error guarantee (similar to the guarantee in Theorem \ref{thm:lwkm}) it is not clear how to get a non-parametric coreset, due to the union bound over the $\epsilon$-net of the query space. Naively to capture the non-parametric nature the query space has to be a function of $n$, which due to the union bound would reflect in the coreset size.

Once we establish the challenge of a non-parametric coreset, we next present a technique that effectively serves as an existential proof to show that additive error non-parametric coresets exist. This algorithm uses an {\bf oracle} that will return upper and lower barrier sensitivity values when queried. We first show that if such an oracle exists, then we can guarantee the existence of an additive error non-parametric coreset whose size is independent of $d$ and $k$, and depends only on $\log n$ and the structure of the data. 
The guarantees given by the coreset will hold for all set of centres $k \in [n]$. 
As of now, without any assumption, implementing the oracle efficiently remains an open question. Under certain assumption we give an algorithm that returns a non-parametric coreset. We run this method and present it in the experiment section.

To show the existence of non-parametric coreset we combine the sensitivity framework similar to lightweight coresets along with the barrier functions technique from \cite{cohen2016online, batson2012twice} in order to decide the sampling probability of each point. 

We are able to show that the coresets are non-parametric in nature due to the following points. 
\begin{itemize}
  \item The algorithm uses upper bounds to the sensitivity scores. Over expectation these upper bounds are independent of $k$, i.e., the number of centers.
  \item Sampling based on the upper bound returns a strong coreset, i.e., the coreset ensures the guarantee as \eqref{eq:lwkm}, for all $\*X \in \~R^{k \times d}$. 
  \item As the expected sampling complexity is independent of both $k$ and $d$ we further take an union bound over $k \in [n]$. This union bound ensures that the coreset is non-parametric in nature and the guarantee \eqref{eq:lwkm} holds for all $\*X$ with at most $n$ centres in $\~R^{d}$.
\end{itemize}

Unlike \cite{langberg2010universal}, in order to show a strong coreset guarantee we do not need to utilize VC-dimension based arguments. 

\subsection{Coresets for Clustering with Bregman Divergence}
Here we give a theoretical analysis of our existential result using \oneshot. It uses an oracle to create a non-parametric coreset for clustering based on some Bregman divergence in table \ref{tab:bd}. 
The coreset is constructed via importance sampling for which we follow sensitivity based framework along with barrier functions, similar to~\cite{batson2012twice,cohen2016online}.

Let $\*A_{i-1}, \varphi_{i}$ be the same as defined in the previous section. 
Let $\*C_{i-1}$ be a coreset that the algorithm has maintained so far and let $\@X$ is a set with infinitely many elements with every element is some $\*X \in \~R^{k \times d}$, for all $k \leq n$. 
Similar to \cite{cohen2016online, batson2012twice} we also define sensitivity scores using an upper barrier function $(1+\epsilon)f_{\*X}(\*A_{i-1})$ and a lower barrier function $(1-\epsilon)f_{\*X}(\*A_{i-1})$. We informally call these sensitivity scores as upper barrier and lower barrier sensitivity scores. At step $i$, the upper barrier sensitivity $l_i^u$ and the lower barrier sensitivity $l_i^l$ are defined as follows:
\begin{eqnarray}
 l_i^u = \sup_{\*X \in \mathcal{X}} \frac{f_{\*X}(\*a_{i})}{(1+\epsilon)f_{\*X}(\*A_{i-1})-f_{\*X}(\*C_{i-1})+\epsilon f_{\varphi_{i}}(\*A_{i})} \label{eq:uSensitivity} \\
 l_i^l =\sup_{\*X \in \mathcal{X}} \frac{f_{\*X}(\*a_{i})}{f_{\*X}(\*C_{i-1})-(1-\epsilon)f_{\*X}(\*A_{i-1})+\epsilon f_{\varphi_{i}}(\*A_{i})} \label{eq:lSensitivity}
\end{eqnarray}

In Algorithm \ref{alg:lwdkm}, for each $\*a_i$, the sampling probability $p_{i}$ depends on the scores $l_{i}^{u}$ and $l_{i}^{l}$. We consider that the algorithm gets the upper bound of these scores by some oracle. 
The algorithm samples each point with respect to the upper bounds of upper barrier sensitivity score \eqref{eq:uSensitivity} and lower barrier sensitivity scores \eqref{eq:lSensitivity}. 
Note that in the above sensitivity scores the query $\*X$ acts as centers for $\*A_{i-1}$ and $\*C_{i-1}$.

\begin{algorithm}[htpb]
  \caption{\oneshot}{\label{alg:lwdkm}}
  \begin{algorithmic}
  \REQUIRE Input points $\*a_{i}, i = 1,\ldots n; t > 1; \epsilon \in (0,1)$
  \ENSURE $(\text{Coreset } \coreset, \text{Weights } \Omega)$
  \STATE $c^{u} = 2/\epsilon + 1; c^{l} = 2/\epsilon - 1; \varphi_{0} = \emptyset; S = 0; \*C_{0} = \Omega_{0} = \emptyset$
  \STATE $\lambda = \|\*a_{1}\|_{\min}; \quad \nu = \|\*a_{1}\|_{\max}$
  \WHILE {$i \leq n$}
    \STATE $\lambda = \min\{\lambda,\|\*a_{i}\|_{\min}\}; \nu = \max\{\nu,\|\*a_{i}\|_{\max}\}$
    \STATE Update $\*M_{i};\ \mu_{i} = \lambda/\nu$ 
    \STATE $\varphi_{i} = ((i-1)\varphi_{i-1} + \*a_{i})/i ; S = S + f_{\varphi_{i}}^{\*M_{i}}(\*a_{i})$
    \IF{$i = 1$}
      \STATE $p_{i} = 1$
    \ELSE
      \STATE $l_{i}^{u} = \sup_{\@X} \frac{f_{\*X}(\*a_{i})}{(1+\epsilon)f_{\*X}(\*A_{i-1}) - f_{\*X}(\*C_{i-1}) + \epsilon f_{\varphi_{i}}(\*A_{i})}$
      \STATE $l_{i}^{l} = \sup_{\@X} \frac{f_{\*X}(\*a_{i})}{f_{\*X}(\*C_{i-1}) - (1-\epsilon)f_{\*X}(\*A_{i-1}) + \epsilon f_{\varphi_{i}}(\*A_{i})}$
      \STATE $p_{i} = \min\{1,(c^{u}l_{i}^{u}+c^{l}l_{i}^{l})\}$
    \ENDIF
    \STATE Set $\*c_{i}$ and $\omega_{i}$ as \\
    $\begin{cases}\*a_{i} \mbox{ and } 1/p_{i} \qquad \qquad \mbox{w. p. } p_{i} \\  
    \emptyset \mbox{ and } 0 \qquad \qquad \qquad \mbox{else} \end{cases}$
    \STATE $(\*C_{i}, \Omega_{i}) = (\*C_{i-1}, \Omega_{i-1}) \cup (\*c_{i},\omega_{i})$
  \ENDWHILE
  \STATE Return $(\*C, \Omega)$
  \end{algorithmic}
\end{algorithm}

The algorithm maintains a coreset $\*C_{i}$ which ensures a deterministic guarantee as in equation \eqref{eq:lwkm}, $\forall i \in [n]$. We state our guarantee in the following theorem.

\begin{theorem}{\label{thm:lwdkm}}
Let $\*A \in \~R^{n \times d}$ for every Bregman divergence $d_{\Phi}$ as in table \ref{tab:bd} there exists a coresets $\*C$ for clustering based on $d_{\Phi}$ such that the following statement is ensured for all $\*X$ with at most $n$ centres in $\~R^{d}$,
 \begin{equation}{\label{eq:oneshot}}
 \Big|f_{\*X}(\*C) - f_{\*X}(\*A)\Big| \leq \epsilon(f_{\*X}(\*A) + f_{\varphi}(\*A))
\end{equation}
 Such coreset has $O\Big(\frac{\log n}{\mu\epsilon^{2}}\Big(\log n + \log \big(f_{\varphi}^{\*M}(\*A)\big) - \log \big(f_{\varphi_{2}}^{\*M_{2}}(\*a_{2})\big)\Big)\Big)$ expected samples.
\end{theorem}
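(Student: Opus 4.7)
I would structure the argument around three stages matching the bullet points preceding the theorem, combining the lightweight-sensitivity telescoping estimate of Lemma \ref{lemma:lwkScore} with an online martingale / barrier-function analysis in the style of \cite{batson2012twice, cohen2016online}.

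\emph{Stage 1 (cumulative barrier sensitivity).} I would adopt the inductive invariant that for every $\*X$, $|f_{\*X}(\*C_{i-1}) - f_{\*X}(\*A_{i-1})| \leq (\epsilon/2)(f_{\*X}(\*A_{i-1}) + f_{\varphi_{i}}(\*A_{i}))$. Under this hypothesis the denominators in the definitions of $l_{i}^{u}$ and $l_{i}^{l}$ are each at least $(\epsilon/2)(f_{\*X}(\*A_{i-1}) + f_{\varphi_{i}}(\*A_{i}))$, so $l_{i}^{u}, l_{i}^{l} \leq (2/\epsilon) l_{i}$ where $l_{i}$ is the lightweight sensitivity upper bound from Lemma \ref{lemma:lwkScore}. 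Summing and invoking the telescoping inequality in that lemma gives $\sum_{i}(l_{i}^{u}+l_{i}^{l}) = O\bigl(\frac{1}{\mu\epsilon}(\log n + \log f_{\varphi}^{\*M}(\*A) - \log f_{\varphi_{2}}^{\*M_{2}}(\*a_{2}))\bigr)$, a bound that is free of both $k$ and $d$. This fixes the expected size of $\*C$ once the scaling constants $c^{u}, c^{l}$ are set.

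\emph{Stage 2 (per-query martingale concentration).} For a fixed query $\*X$ with at most $n$ centers, set $\Delta_{i} = \omega_{i} f_{\*X}(\*c_{i}) - f_{\*X}(\*a_{i})$. These form a martingale difference sequence whose sum equals $f_{\*X}(\*C) - f_{\*X}(\*A)$. On the event that the barrier invariant held through step $i-1$, the lower bound $p_{i} \geq c^{u} l_{i}^{u} + c^{l} l_{i}^{l}$ yields $|\Delta_{i}| \leq O(\epsilon)(f_{\*X}(\*A) + f_{\varphi}(\*A))$ and a conditional variance of order $l_{i} (f_{\*X}(\*A) + f_{\varphi}(\*A))^{2}$. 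Freedman's martingale Bernstein inequality then gives $|f_{\*X}(\*C) - f_{\*X}(\*A)| \leq \epsilon(f_{\*X}(\*A) + f_{\varphi}(\*A))$ with probability $1 - 1/\mathrm{poly}(n)$.

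\emph{Stage 3 (uniformity over all $\*X$) and main obstacle.} Because each $l_{i}^{u}, l_{i}^{l}$ is already a supremum over $\*X$, the sampling distribution is query-independent. I would close the induction by a stopping-time argument that propagates the barrier invariant for all $\*X$ simultaneously, and then take a single union bound over $k \in [n]$, costing only an extra $\log n$ factor in the sample size. The chief obstacle is the circularity between the induction on the invariant and the martingale concentration: the conditional bounds in Stage 2 assume the invariant holds up to step $i-1$, but the invariant itself follows only from the concentration succeeding. I would resolve this by defining a stopping time at the first failure of the invariant and applying Freedman to the stopped martingale, then showing the failure time exceeds $n$ with high probability. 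The remaining delicate point is that the sup-over-$\*X$ calibration of $p_{i}$ must translate into uniform non-parametric control; this is precisely where the oracle assumption enters and why the result is existential rather than algorithmic.
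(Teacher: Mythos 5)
Your Stage 2 is where the proposal departs from, and in fact undermines, the paper's argument. You set up the increments $\Delta_i$ as a martingale difference sequence, invoke Freedman's inequality to establish the barrier invariant with probability $1-1/\mathrm{poly}(n)$, and then describe the circularity between the induction hypothesis and the concentration bound as ``the chief obstacle,'' proposing a stopping-time patch. But in the paper's Lemma~\ref{lemma:lwdkmGuarantee} the barrier invariant is preserved \emph{deterministically} at every step -- that is the entire point of the Batson--Spielman--Srivastava barrier technique that the paper cites. Concretely: if the invariant holds for $\*C_{i-1}$ and $p_i<1$, then since $p_i \ge l_i^u$ we get $(1+\epsilon)f_{\*X}(\*A_{i-1}) + \epsilon f_{\varphi_i}(\*A_i) \ge f_{\*X}(\*C_{i-1}) + f_{\*X}(\*a_i)/p_i$ for every $\*X$, which upper-bounds $f_{\*X}(\*C_i)$ whether or not $\*a_i$ is selected; and since $p_i<1$ forces $l_i^l<1$, rearranging gives $f_{\*X}(\*C_{i-1}) > (1-\epsilon)f_{\*X}(\*A_i) - \epsilon f_{\varphi_i}(\*A_i)$, which lower-bounds $f_{\*X}(\*C_i)$ whether or not $\*a_i$ is selected. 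Both sampling outcomes preserve both barriers with probability one, so there is no failure event, no stopped martingale, and no loss of a $1/\mathrm{poly}(n)$ probability mass to union-bound away. Introducing Freedman here is not merely unnecessary; it re-creates the very difficulty that the barrier construction was engineered to avoid.

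Your Stage 1 is a genuinely different route to the size bound and is worth noting: you use the (deterministically maintained) invariant to lower-bound the barrier denominators by $(\epsilon/2)\bigl(f_{\*X}(\*A_{i-1})+f_{\varphi_i}(\*A_i)\bigr)$ uniformly in $\*X$, which gives the \emph{pathwise} bound $l_i^u, l_i^l \le (2/\epsilon)\,l_i$ with $l_i$ the lightweight score from Lemma~\ref{lemma:lwkScore}, and then you import that lemma's telescoping sum. The paper instead never bounds $l_i^u$ or $l_i^l$ pathwise; Lemma~\ref{lemma:lwdkScore} controls the unconditioned \emph{expectations} $\~E_{\pi_{i-1}}[l_i^u]$, $\~E_{\pi_{i-1}}[l_i^l]$ directly by repeatedly applying the Sherman--Morrison-style identity of Lemma~\ref{lemma:lwdksupport} to peel off one conditioning at a time, showing the conditional expectation is non-increasing when $c^u \ge 2/\epsilon+1$. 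Your pathwise route is shorter and avoids Lemma~\ref{lemma:lwdksupport} entirely, but it only makes sense once you know the invariant holds surely, which again requires recognising the deterministic nature of Lemma~\ref{lemma:lwdkmGuarantee}. Be careful also that you state the invariant with $\epsilon/2$ but the barriers with $\epsilon$ and conclude an $\epsilon$-error guarantee; these two parameters need to be pinned down consistently. Your Stage 3 union bound over $k\in[n]$ matches the paper, as does the observation that the score bounds are $k$-independent, so the final $O(\log n)$ factor is on the right track.
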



We prove the above theorem with the following supporting lemmas. 
We first show that for each point $\*a_{i}$, if $l_{i}^{u}$ and $l_{i}^{l}$ upper bound the sensitivity scores \eqref{eq:uSensitivity} and \eqref{eq:lSensitivity} respectively, then the coreset returned by \oneshot~ensures the guarantee as \eqref{eq:lwkm}.

\begin{lemma}{\label{lemma:lwdkmGuarantee}}
 Suppose the scores $l_{i}^{u}$ and $l_{i}^{l}$ received by oracle in \oneshot~upper bound both scores \eqref{eq:uSensitivity} and \eqref{eq:lSensitivity} respectively $\forall i \in [n]$ with $\*C_{i-1}$. \oneshot~computes the sampling probability for the $i^{th}$ point as $p_{i} = \min\{\tilde{l}_{i},1\}$ where $\tilde{l}_{i} = c^{u}l_{i}^{u} + c^{l}l_{i}^{l}$.
 The statement \eqref{eq:lwkm} is then true for every $i \in [n]$, for all $\*X$ with at most $n$ centres in $\~R^{d}$.
\end{lemma}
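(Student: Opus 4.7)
The plan is to prove the deterministic double-barrier invariant
\[
(1-\epsilon)\, f_{\*X}(\*A_{i}) - \epsilon\, f_{\varphi_{i}}(\*A_{i}) \;\leq\; f_{\*X}(\*C_{i}) \;\leq\; (1+\epsilon)\, f_{\*X}(\*A_{i}) + \epsilon\, f_{\varphi_{i}}(\*A_{i})
\]
by induction on $i\in[n]$, simultaneously for every $\*X$ with at most $n$ centres in $\~R^{d}$; the $i=n$ instance then yields exactly \eqref{eq:lwkm}. The base case $i=1$ is immediate because the algorithm sets $p_{1}=1$, so $\*C_{1}=\{\*a_{1}\}$ with unit weight and both inequalities reduce to $0\le 0$. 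A fact I will use repeatedly is that $f_{\varphi_{i}}(\*A_{i})$ is nondecreasing in $i$: since $\varphi_{i-1}$ minimises $\*x\mapsto f_{\*x}(\*A_{i-1})$ for a Bregman divergence, one has $f_{\varphi_{i}}(\*A_{i}) = f_{\varphi_{i}}(\*A_{i-1}) + f_{\varphi_{i}}(\*a_{i}) \ge f_{\varphi_{i-1}}(\*A_{i-1})$, so the additive slack on either side of the invariant only grows with $i$.

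For the inductive step, fix any $\*X$ and abbreviate the two barrier margins $U_{i} := (1+\epsilon) f_{\*X}(\*A_{i-1}) - f_{\*X}(\*C_{i-1}) + \epsilon f_{\varphi_{i}}(\*A_{i})$ and $L_{i} := f_{\*X}(\*C_{i-1}) - (1-\epsilon) f_{\*X}(\*A_{i-1}) + \epsilon f_{\varphi_{i}}(\*A_{i})$. Both are nonnegative by the inductive hypothesis combined with the monotonicity above, and by hypothesis the oracle values satisfy $f_{\*X}(\*a_{i})/U_{i} \le l_{i}^{u}$ and $f_{\*X}(\*a_{i})/L_{i} \le l_{i}^{l}$. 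I then split into four subcases along (upper/lower barrier)~$\times$~(sampled/not-sampled).

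Two subcases are easy: (upper, not sampled) leaves $f_{\*X}(\*C_{i})$ unchanged while the right-hand side only grows, so the invariant is preserved trivially; (lower, sampled) gives a contribution $f_{\*X}(\*a_{i})/p_{i}\ge f_{\*X}(\*a_{i})$ (since $p_{i}\le 1$) that already dominates the $(1-\epsilon)f_{\*X}(\*a_{i})$ increase of the lower-barrier right-hand side. For (upper, sampled), $p_{i}\ge c^{u} l_{i}^{u} \ge c^{u} f_{\*X}(\*a_{i})/U_{i}$ gives $f_{\*X}(\*a_{i})/p_{i} \le U_{i}/c^{u} \le U_{i} + (1+\epsilon) f_{\*X}(\*a_{i})$ because $c^{u}=2/\epsilon+1>1$; rearranging recovers the upper barrier at step $i$. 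The one delicate subcase is (lower, not sampled): since $p_{i}<1$ we have $c^{l} l_{i}^{l} < 1$, hence $L_{i} \ge f_{\*X}(\*a_{i})/l_{i}^{l} > c^{l} f_{\*X}(\*a_{i}) = (2/\epsilon-1)f_{\*X}(\*a_{i})$, which is $\ge (1-\epsilon) f_{\*X}(\*a_{i})$ for all $\epsilon\in(0,1)$ --- exactly the slack needed to absorb the lost $(1-\epsilon) f_{\*X}(\*a_{i})$ from the right-hand side and preserve the lower barrier.

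The main obstacle is the arithmetic coupling of $c^{u}$ and $c^{l}$: they must be picked so that the single combined probability $p_{i} = \min\{1,\, c^{u} l_{i}^{u} + c^{l} l_{i}^{l}\}$ simultaneously closes both barriers, even though a point not being sampled gives information only about their sum and not about each score individually. The choices $c^{u} = 2/\epsilon+1$ and $c^{l} = 2/\epsilon-1$ are tailored precisely to make the (lower, not sampled) inequality $2/\epsilon - 1 \ge 1-\epsilon$ hold with just enough margin and to leave headroom in the (upper, sampled) bound. Because the invariant is quantified over every $\*X$ and the inductive argument never passes to an $\epsilon$-net of the query space, the conclusion holds uniformly for every $\*X$ with at most $n$ centres in $\~R^{d}$.
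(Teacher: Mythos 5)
Your proof is correct and follows essentially the same inductive, double-barrier argument as the paper: identical invariant, identical base case ($p_1 = 1$), and the same four subcases (upper/lower barrier crossed with sampled/not-sampled), with the paper preferring the looser bounds $p_i \ge l_i^u$ and $l_i^l < 1$ where you use $p_i \ge c^u l_i^u$ and $L_i > c^l f_{\*X}(\*a_i)$.
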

\begin{proof}{\label{proof:lwdkmGuarantee}}
 We show this by induction. The proof applies $\forall \*X$ with at most $n$ centres in $\~R^{d}$. Now for $i=1$ this is trivially true, as we have $p_{1} = 1$. So we we have $\*c_{1} = \*a_{1}$ and hence we get,
 \begin{equation}
  (1 - \epsilon)f_{\*X}(\*a_{1}) \leq f_{\*X}(\*c_{1}) \leq (1 + \epsilon)f_{\*X}(\*a_{1})
 \end{equation}
 Consider that at $i-1$ the $\coreset_{i-1}$ ensures the following,
 \begin{equation}
  (1 - \epsilon)f_{\*X}(\*A_{i-1}) - \epsilon f_{\varphi_{i-1}}(\*A_{i-1}) \leq f_{\*X}(\*C_{i-1}) \leq (1 + \epsilon)f_{\*X}(\*A_{i-1}) + \epsilon f_{\varphi_{i-1}}(\*A_{i-1})
 \end{equation}
 Now we show inductively for $\*C_i$. Here the sampling probability $p_{i} = \min\{1,c^{u}l_{i}^{u}+c^{l}l_{i}^{l}\}$ if $p_{i} = 1$ then the following is true,
 \begin{eqnarray*}
  (1 - \epsilon)f_{\*X}(\*A_{i-1}) - \epsilon f_{\varphi_{i-1}}(\*A_{i-1}) &\leq f_{\*X}(\*C_{i-1}) &\leq (1 + \epsilon)f_{\*X}(\*A_{i-1}) + \epsilon f_{\varphi_{i-1}}(\*A_{i-1}) \\
  (1 - \epsilon)f_{\*X}(\*A_{i-1}) - \epsilon f_{\varphi_{i-1}}(\*A_{i-1}) + f_{\*X}(\*a_{i}) &\leq f_{\*X}(\*C_{i-1}) + f_{\*X}(\*a_{i}) &\leq (1 + \epsilon)f_{\*X}(\*A_{i-1}) + \epsilon f_{\varphi_{i-1}}(\*A_{i-1}) + f_{\*X}(\*a_{i}) \\
  (1 - \epsilon)f_{\*X}(\*A_{i}) - \epsilon f_{\varphi_{i}}(\*A_{i}) &\leq f_{\*X}(\*C_{i}) &\leq (1 + \epsilon)f_{\*X}(\*A_{i}) + \epsilon f_{\varphi_{i}}(\*A_{i})
 \end{eqnarray*}
 Note that $f_{\varphi_{i}}(\*A_{i}) \geq f_{\varphi_{i-1}}(\*A_{i-1})$. Now if $p_{i} < 1$ then for the upper barrier function we use the definition of $l_{i}^{u}$,
 \begin{eqnarray*}
  p_{i} &\geq& l_{i}^{u} \\
  p_{i} &\geq& \frac{f_{\*X}(\*a_{i})}{(1+\epsilon)f_{\*X}(\*A_{i-1}) - f_{\*X}(\*C_{i-1}) + \epsilon f_{\varphi_{i}}(\*A_{i})} \\
  (1+\epsilon)f_{\*X}(\*A_{i-1}) - f_{\*X}(\*C_{i-1}) + \epsilon f_{\varphi_{i}}(\*A_{i}) &\geq& \frac{f_{\*X}(\*a_{i})}{p_{i}} \\
  (1+\epsilon)f_{\*X}(\*A_{i-1}) + \epsilon f_{\varphi_{i}}(\*A_{i}) &\geq& f_{\*X}(\*C_{i-1}) + \frac{f_{\*X}(\*a_{i})}{p_{i}} \\
  (1+\epsilon)f_{\*X}(\*A_{i-1}) + \epsilon f_{\varphi_{i}}(\*A_{i}) &\geq& f_{\*X}(\*C_{i}) \\
  (1+\epsilon)f_{\*X}(\*A_{i}) + \epsilon f_{\varphi_{i}}(\*A_{i}) &\geq& f_{\*X}(\*C_{i})
 \end{eqnarray*}
 Note that if $\*a_{i}$ is not sampled the RHS will be smaller. The above analysis shows that the upper barrier claim in the lemma holds even for $i^{th}$ point. Next for the lower barrier we use the definition of $l_{i}^{l}$,
 \begin{eqnarray*}
  1 &>& l_{i}^{l} \\
  1 &\geq& \frac{f_{\*X}(\*a_{i})}{f_{\*X}(\*C_{i-1}) - (1-\epsilon)f_{\*X}(\*A_{i-1}) + \epsilon f_{\varphi_{i}}(\*A_{i})} \\
  f_{\*X}(\*C_{i-1}) - (1-\epsilon)f_{\*X}(\*A_{i-1}) + \epsilon f_{\varphi_{i}}(\*A_{i}) &\geq& f_{\*X}(\*a_{i}) \\
  f_{\*X}(\*C_{i-1}) &\geq& (1-\epsilon)f_{\*X}(\*A_{i-1}) - \epsilon f_{\varphi_{i}}(\*A_{i}) + f_{\*X}(\*a_{i}) \\
  f_{\*X}(\*C_{i-1}) &\geq& (1-\epsilon)f_{\*X}(\*A_{i}) - \epsilon f_{\varphi_{i}}(\*A_{i}) \\
  f_{\*X}(\*C_{i}) &\geq& (1-\epsilon)f_{\*X}(\*A_{i}) - \epsilon f_{\varphi_{i}}(\*A_{i})
 \end{eqnarray*}
 Note that the above is true when the point $\*a_{i}$ is not sampled in the coreset. If $\*a_{i}$ is sampled then the LHS will be bigger. Hence the above analysis shows that the lower barrier claim in the lemma holds even for $i^{th}$ point.
\end{proof}
The above lemma ensures a deterministic guarantee. It is important to note that due to the barrier function based sampling, the guarantee stands $\forall \*X$, and we do not require the knowledge of pseudo dimension of the query space. This is similar to the deterministic spectral sparsification claim of~\cite{batson2012twice}. Now we discuss a supporting lemma which we use to bound the expected sample size.
 \begin{lemma}{\label{lemma:lwdksupport}}
 Given scalars $q, r, s, u, v$ and $w$, where $q, r, s$ and $w$ are positive, we define a random variable $t$ as,
 \begin{equation*}
  t = 
  \begin{cases} 
  q - u\cdot r \qquad \qquad \mbox{with probability } p, \\
  q - v\cdot r \qquad \qquad \mbox{with probability } (1-p). \end{cases}
 \end{equation*}
 Then if $\frac{r}{q+w} = 1$ we get,
 \begin{equation*}
  \~E\bigg[\frac{s}{t + w} - \frac{s}{q + w}\bigg] = \frac{pu+(1-p)v-uv}{(1-u)(1-v)}\bigg(\frac{s}{q + w}\bigg)
 \end{equation*}
\end{lemma}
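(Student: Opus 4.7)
The plan is to unfold the expectation directly from the definition of $t$, then exploit the constraint $r = q+w$ to massage the two denominators into a common, clean form before combining fractions.

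First I would write
\begin{equation*}
 \~E\Big[\tfrac{s}{t+w}\Big] = p\cdot\frac{s}{q - ur + w} + (1-p)\cdot\frac{s}{q - vr + w},
\end{equation*}
and then substitute $r = q+w$ inside each denominator to obtain $q - ur + w = (q+w) - ur = r(1-u)$ and similarly $q - vr + w = r(1-v)$. This is the only non-mechanical observation in the argument: the constraint $r/(q+w)=1$ is precisely what is needed to collapse each denominator into a single factor of $r$ times $(1-u)$ or $(1-v)$. After this substitution the expectation becomes $\frac{s}{r}\bigl(\frac{p}{1-u}+\frac{1-p}{1-v}\bigr)$, and since $\frac{s}{q+w}=\frac{s}{r}$ as well, both terms of the claimed expression sit over the same factor $s/r$.

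Subtracting and combining over the common denominator $(1-u)(1-v)$ gives
\begin{equation*}
 \~E\Big[\tfrac{s}{t+w} - \tfrac{s}{q+w}\Big] = \frac{s}{r}\cdot\frac{p(1-v)+(1-p)(1-u)-(1-u)(1-v)}{(1-u)(1-v)}.
\end{equation*}
The last step is a short expansion of the numerator: $p(1-v)+(1-p)(1-u) = 1 - u - pv + pu$ and $(1-u)(1-v) = 1 - u - v + uv$, and their difference telescopes to $pu + (1-p)v - uv$. Putting these pieces together and using $\frac{s}{r} = \frac{s}{q+w}$ once more yields exactly the claimed identity.

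The main obstacle here is essentially bookkeeping rather than any conceptual difficulty; the only place where one must be careful is in matching the sign pattern when expanding $(1-u)(1-v)$ so that the $-uv$ term survives in the final numerator. Given that everything else is linear algebra over the reals, I would expect the entire proof to fit in a handful of lines.
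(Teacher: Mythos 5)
Your proof is correct and follows essentially the same direct algebraic route as the paper: both exploit the constraint $r = q+w$ to simplify the denominators and then combine over the common denominator $(1-u)(1-v)$. The only cosmetic difference is that the paper phrases the key step as a Sherman--Morrison-style expansion $\frac{1}{q+w-ur} = \frac{1}{q+w} + \frac{u}{1-u}(q+w)^{-1}$, whereas you collapse each denominator directly to $r(1-u)$ and $r(1-v)$; the two manipulations are algebraically identical.
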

\begin{proof}{\label{proof:lwdksupport}}
 The proof is fairly straight forward. Using simple algebra (similar to  \cite{sherman1950adjustment}) we have,
 \begin{eqnarray*}
  \frac{1}{q + w - ur} &=& \frac{1}{q + w} + \frac{ur(q + w)^{-2}}{1-ur(q + w)^{-1}} \\
  &=& \frac{1}{q + w} + \frac{u}{1-u}(q + w)^{-1}
 \end{eqnarray*}
 \begin{eqnarray*}
  \frac{1}{q + w - vr} &=& \frac{1}{q + w} + \frac{vr(q + w)^{-2}}{1-vr(q + w)^{-1}} \\
  &=& \frac{1}{q + w} + \frac{v}{1-v}(q + w)^{-1}
 \end{eqnarray*}
 So we get,
 \begin{equation*}
  \~E\bigg[\frac{s}{t + w} - \frac{s}{q + w}\bigg] = \frac{pu+(1-p)v-uv}{(1-u)(1-v)}\bigg(\frac{s}{q + w}\bigg)
 \end{equation*}
\end{proof}
Let $\*C_{i-1}$ be the coreset at point $i-1$. Let $\pi_{i-1}$ be the sampling/no-sampling choices that \oneshot made while creating $\*C_{i-1}$. Let upper and lower barrier sensitivity scores be $l_{i}^{u}$ and $l_{i}^{l}$ respectively, which depend on the coreset maintained so far. 
\begin{eqnarray}
 \~E_{\pi_{i-1}}[l_{i}^{u}] = \~E_{\pi_{i-1}}\bigg[\sup_{\*X} \frac{f_{\*X}(\*a_{i})}{(1+\epsilon)f_{\*X}(\*A_{i-1}) - f_{\*X}(\*C_{i-1}) + \epsilon f_{\varphi_{i}}(\*A_{i})}\bigg] \\
 \~E_{\pi_{i-1}}[l_{i}^{l}] = \~E_{\pi_{i-1}}\bigg[\sup_{\*X} \frac{f_{\*X}(\*a_{i})}{f_{\*X}(\*C_{i-1}) - (1-\epsilon)f_{\*X}(\*A_{i-1}) + \epsilon f_{\varphi_{i}}(\*A_{i})}\bigg]
\end{eqnarray}

\begin{lemma}{\label{lemma:lwdkScore}}
 For all $\*X \in \~R^{k \times d}$ and for all $i \in [n]$, we have, 
 \begin{eqnarray*}
  \~E_{\pi_{i-1}}[l_{i}^{u}] &\leq& \frac{2f_{\varphi_{i}}^{\*M_{i}}(\*a_{i})}{\mu_{i}\epsilon \sum_{j \leq i}f_{\varphi_{j}}^{\*M_{j}}(\*a_{j})} + \frac{12}{\mu_{i}\epsilon(i-1)} \\
  \~E_{\pi_{i-1}}[l_{i}^{l}] &\leq& \frac{2f_{\varphi_{i}}^{\*M_{i}}(\*a_{i})}{\mu_{i}\epsilon \sum_{j \leq i}f_{\varphi_{j}}^{\*M_{j}}(\*a_{j})} + \frac{12}{\mu_{i}\epsilon(i-1)}
 \end{eqnarray*}
 where $\*M_{i}$ and $\mu_{i}$ is defined for $\*A_{i}$ and for specific Bregman divergence such that $\forall \*x \in \~R^{d}, \*a \in \*A_{i}$ we have $\mu_{i}f_{\*x}^{\*M_{i}}(\*a) \leq f_{\*x}(\*a) \leq f_{\*x}^{\*M_{i}}(\*a)$.
\end{lemma}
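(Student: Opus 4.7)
The plan is to reduce the barrier-sensitivity expectations $\~E[l_i^u]$ and $\~E[l_i^l]$ to the lightweight sensitivity score already controlled in Lemma \ref{lemma:lwkScore}, at a cost of one factor of $1/\epsilon$ and some $O(\epsilon)$ slack in the constants. Fix an arbitrary query $\*X$ with at most $n$ centers. Since \oneshot~constructs $\*C_{i-1}$ as an unbiased estimator (each $\*a_j$ is kept with weight $1/p_j$ with probability $p_j$), we have $\~E_{\pi_{i-1}}[f_{\*X}(\*C_{i-1})] = f_{\*X}(\*A_{i-1})$, so writing $D^{(u,i)}_{\*X} := (1+\epsilon)f_{\*X}(\*A_{i-1}) - f_{\*X}(\*C_{i-1}) + \epsilon f_{\varphi_i}(\*A_i)$ for the upper-barrier denominator appearing in $l_i^u$,
\[
\~E_{\pi_{i-1}}\bigl[D^{(u,i)}_{\*X}\bigr] \;=\; \epsilon\bigl(f_{\*X}(\*A_{i-1}) + f_{\varphi_i}(\*A_i)\bigr),
\]
i.e., exactly $\epsilon$ times the lightweight denominator of Lemma \ref{lemma:lwkScore}. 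The remaining task is to propagate this identity from $\~E[D]$ to $\~E[1/D]$ at a cost of only a $1+O(\epsilon)$ multiplicative factor.

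For the propagation, I would apply Lemma \ref{lemma:lwdksupport} once per previous sampling decision. Viewing the denominator evolve through $D^{(u,i,j)}_{\*X} = D^{(u,i,j-1)}_{\*X} - f_{\*X}(\*c_j)$, Lemma \ref{lemma:lwdksupport} matches with $r = D^{(u,i,j-1)}_{\*X}$, $u_j = f_{\*X}(\*a_j)/(p_j D^{(u,i,j-1)}_{\*X})$, $v_j = 0$, and $p = p_j$, giving a one-step inflation of $\~E[1/D]$ equal to $1 + p_j u_j/(1-u_j)$. The sampling rule $p_j \geq c^u l_j^u \geq c^u f_{\*X}(\*a_j)/D^{(u,j,j-1)}_{\*X}$ together with the monotonicity $D^{(u,i,j-1)}_{\*X} \geq D^{(u,j,j-1)}_{\*X}$ (since $f_{\*X}(\*A_{i-1}) \geq f_{\*X}(\*A_{j-1})$ and $f_{\varphi_i}(\*A_i) \geq f_{\varphi_j}(\*A_j)$) forces $u_j \leq 1/c^u = \epsilon/(\epsilon+2)$, keeping the inflation factor $\leq 1 + O(\epsilon)$. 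Telescoping across the $i-1$ steps gives $\~E_{\pi_{i-1}}[1/D^{(u,i)}_{\*X}] \leq (1+O(\epsilon))/\~E_{\pi_{i-1}}[D^{(u,i)}_{\*X}]$, so
\[
\~E_{\pi_{i-1}}\!\left[\frac{f_{\*X}(\*a_i)}{D^{(u,i)}_{\*X}}\right] \;\leq\; \frac{1+O(\epsilon)}{\epsilon}\cdot\frac{f_{\*X}(\*a_i)}{f_{\*X}(\*A_{i-1}) + f_{\varphi_i}(\*A_i)}.
\]
Taking $\sup_{\*X}$ and invoking Lemma \ref{lemma:lwkScore} then yields the claimed bound on $\~E[l_i^u]$, with the $(1+O(\epsilon))$ slack converting the constant $8$ in Lemma \ref{lemma:lwkScore} into the $12$ in the statement. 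The $\~E[l_i^l]$ bound follows by the symmetric argument: the lower-barrier denominator has the same expected value $\epsilon(f_{\*X}(\*A_{i-1}) + f_{\varphi_i}(\*A_i))$ and Lemma \ref{lemma:lwdksupport} applies with the roles of $u_j$ and $v_j$ swapped.

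The principal obstacle is commuting the supremum over $\*X$ with the expectation over $\pi_{i-1}$: the argument above controls $\sup_{\*X}\~E[\mathrm{ratio}_{\*X}]$, whereas $\~E[l_i^u] = \~E[\sup_{\*X}\mathrm{ratio}_{\*X}]$. What rescues the proof is that every ingredient of the telescoping step, in particular $u_j \leq \epsilon/(\epsilon+2)$ and the monotonicity $D^{(u,i,j-1)}_{\*X} \geq D^{(u,j,j-1)}_{\*X}$, holds uniformly in $\*X$ along any realization of the sampling path (all use only the deterministic invariant of Lemma \ref{lemma:lwdkmGuarantee} and the algorithm's sampling probabilities). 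Hence the $(1+O(\epsilon))$ inflation is a single pathwise constant that can be pulled outside $\sup_{\*X}$, after which the pointwise supremum of the resulting ratio reduces once more to the deterministic quantity already bounded in Lemma \ref{lemma:lwkScore}.
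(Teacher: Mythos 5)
Your proposal has a genuine gap in the telescoping step, and as a consequence it does not follow the paper's actual argument, which is structured precisely to avoid the issue you run into.

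You observe (correctly) that $\~E_{\pi_{i-1}}[D^{(u,i)}_{\*X}] = \epsilon\bigl(f_{\*X}(\*A_{i-1}) + f_{\varphi_i}(\*A_i)\bigr)$ because $\*C_{i-1}$ is an unbiased estimator, and you then attempt to establish $\~E[1/D] \leq (1+O(\epsilon))/\~E[D]$ by applying Lemma~\ref{lemma:lwdksupport} once per sampling decision with $v_j = 0$. But with $v_j=0$ the one-step factor is $1 + p_j u_j/(1-u_j) > 1$, so each step strictly inflates the expectation. Even using $u_j \leq 1/c^u$ you only get a per-step factor $\leq 1 + p_j\epsilon/2$; compounding over all $j < i$ gives at best something like $\exp\bigl(\tfrac{\epsilon}{2}\sum_{j<i} p_j\bigr)$, and $\sum_j p_j$ is the expected coreset size, which is $\Theta(\mathrm{poly}(\log n)/\epsilon^2)$ --- nowhere near $1+O(\epsilon)$. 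Moreover, the telescoping endpoint of your recursion is $1/D^{(u,i,0)}_{\*X}$ with $D^{(u,i,0)}_{\*X} = (1+\epsilon)f_{\*X}(\*A_{i-1}) + \epsilon f_{\varphi_i}(\*A_i)$, not $1/\~E[D^{(u,i)}_{\*X}]$; these differ by essentially a factor of $1/\epsilon$, so the $1/\epsilon$ you attribute to the unbiasedness computation is in fact not what the telescoping produces. Finally, the identity $\~E[1/D] \leq (1+O(\epsilon))/\~E[D]$ cannot hold unconditionally: Jensen gives the reverse inequality $\~E[1/D] \geq 1/\~E[D]$, and the gap is governed by the variance of $D$, which you have not controlled.

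The paper circumvents all of this by not freezing the deterministic part of the denominator. It introduces the interpolating family $(\zeta_{\*X})_{i,j}^u = \tfrac{\epsilon}{2}f_{\*X}(\*A_i) + (1+\tfrac{\epsilon}{2})f_{\*X}(\*A_j)$ and sets $(\gamma_{\*X})_{i,j}^u = (\zeta_{\*X})_{i,j}^u - f_{\*X}(\*C_j)$. Unwinding the $(j+1)$-th decision then simultaneously removes the random contribution $f_{\*X}(\*c_{j+1})$ from $f_{\*X}(\*C_{j+1})$ \emph{and} removes the deterministic $(1+\tfrac{\epsilon}{2})f_{\*X}(\*a_{j+1})$ from $\zeta$. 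In the notation of Lemma~\ref{lemma:lwdksupport} this corresponds to $u = (h_{\*X})^u_{j+1}\bigl(1-p_{j+1}(1+\tfrac{\epsilon}{2})\bigr)$ and $v = -(h_{\*X})^u_{j+1}\,p_{j+1}(1+\tfrac{\epsilon}{2})$ (not $v=0$), and one checks, using $(h_{\*X})_{j+1}^u \le 1/c^u = \epsilon/(\epsilon+2)$, that the one-step coefficient $\frac{pu+(1-p)v-uv}{(1-u)(1-v)}$ is $\leq 0$. Thus each conditional expectation is \emph{non-increasing}, the telescoping has no inflation to control, and the recursion lands on $(\gamma_{\*X})^u_{i-1,0} + \epsilon f_{\varphi_i}(\*A_i) = \tfrac{\epsilon}{2}f_{\*X}(\*A_{i-1}) + \epsilon f_{\varphi_i}(\*A_i)$, from which the $1/\epsilon$ factor and the reduction to the lightweight-score triangle-inequality bound (as in Lemma~\ref{lemma:lwkScore}) follow cleanly. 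Your idea of reducing to Lemma~\ref{lemma:lwkScore} at the end is right, but you need the $\zeta$-family and the $v\neq 0$ choice to make the product of one-step factors benign; without it, the compounded inflation dominates the bound.
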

\begin{proof}{\label{proof:lwdkScore}}
  Let $\*A_{i}$ represents the first $i$ points. Further in this proof $\*X \in \~R^{k \times d}$. For a fixed $\*X$ we define two scalars $(\zeta_{\*X})_{i,j}^{u}$ and $(\zeta_{\*X})_{i,j}^{l}$ as follows,
 \begin{eqnarray*}
  (\zeta_{\*X})_{i,j}^{u} &=& \frac{\epsilon}{2}f_{\*X}(\*A_{i}) + (1+\frac{\epsilon}{2})f_{\*X}(\*A_{j}) \\
  (\zeta_{\*X})_{i,j}^{l} &=& -\frac{\epsilon}{2}f_{\*X}(\*A_{i}) + (1-\frac{\epsilon}{2})f_{\*X}(\*A_{j})
 \end{eqnarray*}
 So we have $(\zeta_{\*X})_{i,i}^{u} = (1+\epsilon)f_{\*X}(\*A_{i})$ and $(\zeta_{\*X})_{i,i}^{l} = (1-\epsilon)f_{\*X}(\*A_{i})$. It is clear that for $j \leq i-1$ we have $(\zeta_{\*X})_{i-1,j}^{u} \geq (\zeta_{\*X})_{j,j}^{u}$ and $(\zeta_{\*X})_{i-1,j}^{l} \leq (\zeta_{\*X})_{j,j}^{l}$. Further two more scalars $(\gamma_{\*X})_{i,j}^{u}$ and $(\gamma_{\*X})_{i,j}^{l}$ are defined as follows,
 \begin{eqnarray*}
  (\gamma_{\*X})_{i,j}^{u} &=& (\zeta_{\*X})_{i,j}^{u} - f_{\*X}(\*C_{j}) \\
  (\gamma_{\*X})_{i,j}^{l} &=& f_{\*X}(\*C_{j}) - (\zeta_{\*X})_{i,j}^{l}
 \end{eqnarray*}
 Note that $(\gamma_{\*X})_{i,i}^{u} = (1+\epsilon)f_{\*X}(\*A_{i}) - f_{\*X}(\*C_{i})$ and $(\gamma_{\*X})_{i,i}^{l} = f_{\*X}(\*C_{i}) - (1-\epsilon)f_{\*X}(\*A_{i})$. For $j \leq i-1$ we get $(\gamma_{\*X})_{i-1,j}^{u} \geq (\gamma_{\*X})_{j,j}^{u}$ and $(\gamma_{\*X})_{i-1,j}^{l} \geq (\gamma_{\*X})_{j,j}^{l}$. Let, $(d_{\*X})_{j+1} = \frac{f_{\*X}(\*a_{j+1})}{p_{j+1}}$. If $p_{j+1} < 1$, then we have $p_{j+1} \ge c_u l_{j+1}^u$, and hence we have the following for upper barrier,
 \begin{eqnarray*}
  p_{j+1} &\geq& \frac{c^{u}f_{\*X}(\*a_{j+1})}{(\gamma_{\*X})_{j,j}^{u} + f_{\varphi_{j+1}}(\*A_{j+1})} \\
  &\geq& \frac{c^{u}f_{\*X}(\*a_{j+1})}{(\gamma_{\*X})_{i-1,j}^{u} + f_{\varphi_{j+1}}(\*A_{j+1})} \\
  &\geq& \frac{c^{u}f_{\*X}(\*a_{j+1})}{(\gamma_{\*X})_{i-1,j}^{u} + f_{\varphi_{i}}(\*A_{i})} \\
  \frac{(d_{\*X})_{j+1}}{(\gamma_{\*X})_{i-1,j}^{u} + f_{\varphi_{i}}(\*A_{i})} &\leq& \frac{1}{c^{u}} 
 \end{eqnarray*}
Let $\frac{(d_{\*X})_{j+1}}{(\gamma_{\*X})_{i-1,j}^{u} + f_{\varphi_{i}}(\*A_{i})} = (h_{\*X})_{j+1}^{u}$, which is bounded by $\frac{1}{c^{u}}$. Similarly for the lower barrier we have,
 \begin{eqnarray*}
  p_{j+1} &\geq& \frac{c^{l}f_{\*X}(\*a_{j+1})}{(\gamma_{\*X})_{j,j}^{l} + f_{\varphi_{j+1}}(\*A_{j+1})} \\
  &\geq& \frac{c^{l}f_{\*X}(\*a_{j+1})}{(\gamma_{\*X})_{i-1,j}^{l} + f_{\varphi_{j+1}}(\*A_{j+1})} \\
  &\geq& \frac{c^{l}f_{\*X}(\*a_{j+1})}{(\gamma_{\*X})_{i-1,j}^{l} + f_{\varphi_{i}}(\*A_{i})} \\
  \frac{(d_{\*X})_{j+1}}{(\gamma_{\*X})_{i-1,j}^{l} + f_{\varphi_{i}}(\*A_{i})} &\leq& \frac{1}{c^{l}} 
 \end{eqnarray*}
 Let $\frac{(d_{\*X})_{j+1}}{(\gamma_{\*X})_{i-1,j}^{l} + f_{\varphi_{i}}(\*A_{i})} = (h_{\*X})_{j+1}^{l}$, which is bounded by $\frac{1}{c^{l}}$. Next we apply the Lemma \ref{lemma:lwdksupport} to get an upper bound on the sensitivity scores i.e.,
 \begin{eqnarray*}
  \sup_{\*X} \frac{f_{\*X}(\*a_{i})}{(1+\epsilon)f_{\*X}(\*A_{i-1}) - f_{\*X}(\*C_{i-1}) + f_{\varphi_{i}}(\*A_{i})}  \\
  \sup_{\*X} \frac{f_{\*X}(\*a_{i})}{f_{\*X}(\*C_{i-1}) - (1-\epsilon)f_{\*X}(\*A_{i-1}) + f_{\varphi_{i}}(\*A_{i})} 
 \end{eqnarray*}

 We apply Lemma \ref{lemma:lwdksupport} and set $q = (\gamma_{\*X})_{i-1,j}^{u}, r = (d_{\*X})_{j+1}/(h_{\*X})_{j+1}^{u}, s = f_{\*X}(\*a_{i})$ and $w = \epsilon f_{\varphi_{i}}(\*A_{i})$. Further let $u = (h_{\*X})_{j+1}^{u}(1 - p_{j+1}(1 + \epsilon/2)), v = -(h_{\*X})_{j+1}^{u}p_{j+1}(1 + \epsilon/2)$ and $p = p_{j+1}$.
 
 Note that with the above substitution we have $\frac{r}{q + w} = 1$ and $t = (\gamma_{\*X})_{i-1,j+1}^{u}$. Further with $c^{u} \geq 2/\epsilon + 1$ we also have the RHS of the lemma \ref{lemma:lwdksupport}, $\frac{pu + (1-p)v - uv}{(1-u)(1-v)} \leq 0$, So we have,
 \begin{equation*}
 \~E_{\pi_{j+1}}\bigg[\frac{f_{\*X}(\*a_{i})}{(\gamma_{\*X})_{i-1,j+1}^{u} + \epsilon f_{\varphi_{i}}(\*A_{i})} \bigg] \leq \~E_{\pi_{j}}\bigg[\frac{f_{\*X}(\*a_{i})}{(\gamma_{\*X})_{i-1,j}^{u} + \epsilon f_{\varphi_{i}}(\*A_{i})} \bigg]
\end{equation*}

Let $(1+\epsilon)f_{\*X}(\*A_{i-1}) - f_{\*X}(\*C_{i-1}) = f_{\*X}(\*A_{i-1}^{u})$ where $f_{\*X}(\*A_{i-1}^{u}) = \sum_{j \leq i-1} f_{\*X}(\*a_{j}^{u})$. Here each term $f_{\*X}(\*a_{j}^{u}) = (1+\epsilon-p_{j}^{-1})f_{\*X}(\*a_{j})$ if $\*a_{j}$ is present in $\*C_{i-1}$ else $f_{\*X}(\*a_{j}^{u}) = (1+\epsilon)f_{\*X}(\*a_{j})$. 
Now the upper sensitivity score with respect to $\*X \in \~R^{k \times d}$ can be bounded as follows. 

 \begin{eqnarray*}
  \~E_{\pi_{i-1}}\bigg[\frac{f_{\*X}(\*a_{i})}{f_{\*X}(\*A_{i-1}^{u}) + \epsilon f_{\varphi_{i}}(\*A_{i})}\bigg] &\stackrel{(i)}{\leq}& \~E_{\pi_{i-1}}\bigg[\frac{f_{\*X}^{\*M_{i}}(\*a_{i})}{f_{\*X}(\*A_{i-1}^{u}) + \epsilon f_{\varphi_{i}}(\*A_{i})}\bigg] \\
  &\stackrel{(ii)}{\leq}& \~E_{\pi_{i-1}}\bigg[\frac{\Big[2f_{\varphi_{i}}^{\*M_{i}}(\*a_{i}) + \frac{4}{i-1}\sum_{\*a_{j} \in \*A_{i-1}}[f_{\varphi_{i}}^{\*M_{i}}(\*a_{j}) + f_{\*X}^{\*M_{i}}(\*a_{j})]\Big]}{(1+\epsilon)f_{\*X}(\*A_{i-1}) - f_{\*X}(\*C_{i-1}) + \epsilon f_{\varphi_{i}}(\*A_{i})} \Big| \pi_{i-2}\bigg] \\
  &=& \~E_{\pi_{i-1}}\bigg[\frac{\Big[2f_{\varphi_{i}}^{\*M_{i}}(\*a_{i}) + \frac{4}{i-1}\sum_{\*a_{j} \in \*A_{i-1}}f_{\varphi_{i}}^{\*M_{i}}(\*a_{j})\Big]}{(1+\epsilon)f_{\*X}(\*A_{i-1}) - f_{\*X}(\*C_{i-1}) + \epsilon f_{\varphi_{i}}(\*A_{i})} \Big| \pi_{i-2}\bigg] \\
  &+& \~E_{\pi_{i-1}}\bigg[\frac{\frac{4}{i-1}f_{\*X}^{\*M_{i}}(\*A_{i-1})}{(1 + \epsilon)f_{\*X}(\*A_{i-1}) - f_{\*X}(\*C_{i-1}) + \epsilon f_{\varphi_{i}}(\*A_{i})} \Big| \pi_{i-2}\bigg] \\
  &\stackrel{(iii)}{=}& \~E_{\pi_{i-1}}\bigg[\frac{2f_{\varphi_{i}}^{\*M_{i}}(\*a_{i}) + \frac{4}{i-1}f_{\varphi_{i}}^{\*M_{i}}(\*A_{i-1})} {(\gamma_{\*X})_{i-1,i-1}^{u} + \epsilon f_{\varphi_{i}}(\*A_{i})} \Big| \pi_{i-2}\bigg] + \~E_{\pi_{i-1}}\bigg[\frac{\frac{4}{i-1}f_{\*X}^{\*M_{i}}(\*A_{i-1})}{(\gamma_{\*X})_{i-1,i-1}^{u} + \epsilon f_{\varphi_{i}}(\*A_{i})} \Big| \pi_{i-2}\bigg] \\
  &\stackrel{(iv)}{\leq}& \~E_{\pi_{i-2}}\bigg[\frac{2f_{\varphi_{i}}^{\*M_{i}}(\*a_{i}) + \frac{4}{i-1}f_{\varphi_{i}}^{\*M_{i}}(\*A_{i-1})} {(\gamma_{\*X})_{i-1,i-2}^{u} + \epsilon f_{\varphi_{i}}(\*A_{i})} \Big| \pi_{i-3}\bigg] + \~E_{\pi_{i-2}}\bigg[\frac{\frac{4}{i-1}f_{\*X}^{\*M_{i}}(\*A_{i-1})}{(\gamma_{\*X})_{i-1,i-2}^{u} + \epsilon f_{\varphi_{i}}(\*A_{i})} \Big| \pi_{i-3}\bigg] \\
  &\stackrel{(v)}{\leq}& \~E_{\pi_{0}}\bigg[\frac{2f_{\varphi_{i}}^{\*M_{i}}(\*a_{i}) + \frac{4}{i-1} f_{\varphi_{i}}^{\*M_{i}}(\*A_{i-1})} {(\gamma_{\*X})_{i-1,0}^{u} + \epsilon f_{\varphi_{i}}(\*A_{i})}\bigg] + \~E_{\pi_{0}}\bigg[\frac{\frac{4}{i-1}f_{\*X}^{\*M_{i}}(\*A_{i-1})}{(\gamma_{\*X})_{i-1,0}^{u}+\epsilon f_{\varphi_{i}}(\*A_{i})}\bigg] \\
  &=& \frac{2f_{\varphi_{i}}^{\*M_{i}}(\*a_{i}) + \frac{4}{i-1}f_{\varphi_{i}}^{\*M_{i}}(\*A_{i-1})}{\epsilon/2f_{\*X}(\*A_{i-1}) + \epsilon f_{\varphi_{i}}(\*A_{i})} + \frac{\frac{4}{(i-1)}f_{\*X}^{\*M_{i}}(\*A_{i-1})}{\epsilon/2f_{\*X}(\*A_{i-1}) + \epsilon f_{\varphi_{i}}(\*A_{i})} \\
  &\stackrel{(vi)}{\leq}& \frac{2f_{\varphi_{i}}^{\*M_{i}}(\*a_{i}) + \frac{4}{i-1}f_{\varphi_{i}}^{\*M_{i}}(\*A_{i-1})} {\mu_{i}\epsilon(0.5f_{\*X}^{\*M_{i}}(\*A_{i-1}) + f_{\varphi_{i}}^{\*M_{i}}(\*A_{i}))} + \frac{\frac{4}{(i-1)}f_{\*X}^{\*M_{i}}(\*A_{i-1})}{\mu_{i}\epsilon(0.5f_{\*X}^{\*M_{i}}(\*A_{i-1})+f_{\varphi_{i}}^{\*M_{i}}(\*A_{i}))} \\
  &\leq& \frac{2f_{\varphi_{i}}^{\*M_{i}}(\*a_{i})}{\mu_{i}\epsilon f_{\varphi_{i}}^{\*M_{i}}(\*A_{i})} + \frac{4f_{\varphi_{i}}^{\*M_{i}}(\*A_{i-1})}{\mu_{i}\epsilon(i-1) f_{\varphi_{i}}^{\*M_{i}}(\*A_{i})} + \frac{8f_{\*X}^{\*M_{i}}(\*A_{i-1})}{\mu_{i}\epsilon(i-1)f_{\*X}^{\*M_{i}}(\*A_{i-1})} \\
  &\stackrel{(vii)}{\leq}& \frac{2f_{\varphi_{i}}^{\*M_{i}}(\*a_{i})}{\mu_{i}\epsilon f_{\varphi_{i}}^{\*M_{i}}(\*A_{i})} + \frac{4}{\mu_{i}\epsilon(i-1)} + \frac{8}{\mu_{i}\epsilon(i-1)} \\
  &\leq& \frac{2f_{\varphi_{i}}^{\*M_{i}}(\*a_{i})}{\mu_{i}\epsilon \sum_{j \leq i}f_{\varphi_{j}}^{\*M_{j}}(\*a_{j})} + \frac{12}{\mu_{i}\epsilon(i-1)}
 \end{eqnarray*}

 The inequality $(i)$ is by upper bounding Bregman divergence by squared Mahalanobis distance. The inequality $(ii)$ is due to applying triangle inequality on the numerator. The $(iii)$ equality is by replacing the denominator with the above assumption. The $(iv)$ inequality is by applying the supporting Lemma \ref{lemma:lwdksupport}. By recursively applying Lemma \ref{lemma:lwdksupport} we get the inequality $(v)$ which is independent of the random choices made by \oneshot. The inequality $(vi)$ is by using the lower bound on the denominator. The inequality $(vii)$ an upper bound on the second and the third term. In the final inequality we use the fact that for any $\mu$ similar Bregman divergence from \cite{ackermann2009coresets, lucic2016strong} we have $\*M_{j} \preceq \*M_{i}$ for $j \leq i$. Further by the property of Bregman divergence we know that $f_{\varphi_{i}}(\*A_{i-1}) \geq f_{\varphi_{i-1}}(\*A_{i-1})$. Hence we have $f_{\varphi_{i}}^{\*M_{i}}(\*A_{i}) \geq \sum_{j\leq i}f_{\varphi_{j}}^{\*M_{j}}(\*a_{j})$. Now note that we have this upper bound for all $\*X \in \~R^{k \times d}$, which is independent of $k$. So we have,
 \begin{equation*}
  \~E_{\pi_{i-1}}\bigg[\frac{f_{\*X}(\*a_{i})}{(1+\epsilon)f_{\*X}(\*A_{i-1}) - f_{\*X}(\*C_{i-1}) + f_{\varphi_{i}}(\*A_{i})}\bigg] \leq \frac{2f_{\varphi_{i}}(\*a_{i})}{\epsilon \sum_{j \leq i}f_{\varphi_{j}}(\*a_{j})} + \frac{12}{\epsilon(i-1)}
 \end{equation*}

 Now for the lower barrier first apply Lemma \ref{lemma:lwdksupport} by setting $q = (\gamma_{\*X})_{i-1,j}^{l}, r = (d_{\*X})_{j+1}/(h_{\*X})_{j+1}^{l}, s = f_{\*X}(\*a_{i})$ and $w = \epsilon f_{\varphi_{i}}(\*A_{i})$. Further let $u = -(h_{\*X})_{j+1}^{l}(1-p_{j+1}(1-\epsilon/2)), v = (h_{\*X})_{j+1}^{l}p_{j+1}(1-\epsilon/2))$ and $p = p_{j+1}$ we get,
 \begin{equation*}
  \~E_{\pi_{j+1}}\bigg[\frac{f_{\*X}(\*a_{i})}{(\gamma_{\*X})_{i-1,j+1}^{l} + \epsilon f_{\varphi_{i}}(\*A_{i})} \bigg] \leq \~E_{\pi_{j}}\bigg[\frac{f_{\*X}(\*a_{i})}{(\gamma_{\*X})_{i-1,j}^{l} + \epsilon f_{\varphi_{i}}(\*A_{i})} \bigg]
 \end{equation*}

 Let $(f_{\*X}(\*C_{i-1}) - (1-\epsilon)f_{\*X}(\*A_{i-1}) = f_{\*X}(\*A_{i-1}^{l})$ where $f_{\*X}(\*A_{i-1}^{l}) = \sum_{j \leq i-1} f_{\*X}(\*a_{j}^{l})$. Here each term $f_{\*X}(\*a_{j}^{l}) = (p_{j}^{-1}-1+\epsilon)f_{\*X}(\*a_{j})$ if $\*a_{j}$ is present in $\*C_{i-1}$ else $f_{\*X}(\*a_{j}^{l}) = (-1+\epsilon)f_{\*X}(\*a_{j})$. 

 \begin{eqnarray*}
  \~E_{\pi_{i-1}}\bigg[\frac{f_{\*X}(\*a_{i})}{f_{\*X}(\*A_{i-1}^{l}) + \epsilon f_{\varphi_{i}}(\*A_{i})}\bigg] &\stackrel{(i)}{\leq}& \~E_{\pi_{i-1}}\bigg[\frac{f_{\*X}^{\*M_{i}}(\*a_{i})}{f_{\*X}(\*A_{i-1}^{l}) + \epsilon f_{\varphi_{i}}(\*A_{i})}\bigg]\\ 
  &\stackrel{(ii)}{\leq}& \~E_{\pi_{i-1}}\bigg[\frac{\Big[2f_{\varphi_{i}}^{\*M_{i}}(\*a_{i}) + \frac{4}{i-1}\sum_{\*a_{j} \in \*A_{i-1}}[f_{\varphi_{i}}^{\*M_{i}}(\*a_{j}) + f_{\*X}^{\*M_{i}}(\*a_{j})]\Big]}{f_{\*X}(\*C_{i-1}) - (1-\epsilon)f_{\*X}^{\*M_{i}}(\*A_{i-1}) + \epsilon f_{\varphi_{ii}}(\*A_{i})} \Big| \pi_{i-2}\bigg] \\
  &=& \~E_{\pi_{i-1}}\bigg[\frac{\Big[2f_{\varphi_{i}}^{\*M_{i}}(\*a_{i}) + \frac{4}{i-1}f_{\varphi_{i}}^{\*M_{i}}(\*A_{i-1})\Big]}{f_{\*X}(\*C_{i-1}) - (1-\epsilon)f_{\*X}(\*A_{i-1}) + \epsilon f_{\varphi_{i}}(\*A_{i})} \Big| \pi_{i-2}\bigg] \\
  &+& \~E_{\pi_{i-1}}\bigg[\frac{\frac{4}{\epsilon(i-1)}f_{\*X}^{\*M_{i}}(\*A_{i-1})}{f_{\*X}(\*C_{i-1}) - (1-\epsilon)f_{\*X}(\*A_{i-1}) + \epsilon f_{\varphi_{i}}(\*A_{i})} \Big| \pi_{i-2}\bigg] \\
  &\stackrel{(iii)}{=}& \~E_{\pi_{i-1}}\bigg[\frac{2f_{\varphi_{i}}^{\*M_{i}}(\*a_{i}) + \frac{4}{i-1}f_{\varphi_{i}}^{\*M_{i}}(\*A_{i-1})} {(\gamma_{\*X})_{i-1,i-1}^{l} + \epsilon f_{\varphi_{i}}(\*A_{i})} \Big| \pi_{i-2}\bigg] + \~E_{\pi_{i-1}}\bigg[\frac{\frac{4}{i-1}f_{\*X}^{\*M_{i}}(\*A_{i-1})}{(\gamma_{\*X})_{i-1,i-1}^{l} + \epsilon f_{\varphi_{i}}(\*A_{i})} \Big| \pi_{i-2}\bigg] \\
  &\stackrel{(iv)}{\leq}& \~E_{\pi_{i-2}}\bigg[\frac{2f_{\varphi_{i}}^{\*M_{i}}(\*a_{i}) + \frac{4}{i-1}f_{\varphi_{i}}^{\*M_{i}}(\*A_{i-1})} {(\gamma_{\*X})_{i-1,i-2}^{l} + \epsilon f_{\varphi_{i}}(\*A_{i})} \Big| \pi_{i-3}\bigg] + \~E_{\pi_{i-3}}\bigg[\frac{\frac{4}{i-1}f_{\*X}^{\*M_{i}}(\*A_{i-1})}{(\gamma_{\*X})_{i-1,i-2}^{l} + \epsilon f_{\varphi_{i}}(\*A_{i})} \Big| \pi_{i-3}\bigg] \\
  &\stackrel{(v)}{\leq}& \~E_{\pi_{0}}\bigg[\frac{2f_{\varphi_{i}}^{\*M_{i}}(\*a_{i}) + \frac{4}{i-1}f_{\varphi_{i}}^{\*M_{i}}(\*A_{i-1})} {(\gamma_{\*X})_{i-1,i-2}^{l} + \epsilon f_{\varphi_{i}}(\*A_{i})}\bigg] + \~E_{\pi_{0}}\bigg[\frac{\frac{4}{i-1}f_{\*X}^{\*M_{i}}(\*A_{i-1})} {(\gamma_{\*X})_{i-1,i-2}^{l} + \epsilon f_{\varphi_{i}}(\*A_{i})}\bigg] \\
  &\stackrel{(vi)}{\leq}& \frac{2f_{\varphi_{i}}^{\*M_{i}}(\*a_{i}) + \frac{4}{i-1}f_{\varphi_{i}}^{\*M_{i}}(\*A_{i-1})} {\mu_{i}\epsilon(0.5f_{\*X}^{\*M_{i}}(\*A_{i-1}) + f_{\varphi_{i}}^{\*M_{i}}(\*A_{i}))} + \frac{\frac{4}{(i-1)}f_{\*X}^{\*M_{i}}(\*A_{i-1})}{\mu_{i}\epsilon(0.5f_{\*X}^{\*M_{i}}(\*A_{i-1})+f_{\varphi_{i}}^{\*M_{i}}(\*A_{i}))} \\
  &\leq& \frac{2f_{\varphi_{i}}^{\*M_{i}}(\*a_{i})}{\mu_{i}\epsilon f_{\varphi_{i}}^{\*M_{i}}(\*A_{i})} + \frac{4f_{\varphi_{i}}^{\*M_{i}}(\*A_{i-1})}{\mu_{i}\epsilon(i-1) f_{\varphi_{i}}^{\*M_{i}}(\*A_{i})} + \frac{8f_{\*X}^{\*M_{i}}(\*A_{i-1})}{\mu_{i}\epsilon(i-1)f_{\*X}^{\*M_{i}}(\*A_{i-1})} \\
  &\stackrel{(vii)}{\leq}& \frac{2f_{\varphi_{i}}^{\*M_{i}}(\*a_{i})}{\mu_{i}\epsilon f_{\varphi_{i}}^{\*M_{i}}(\*A_{i})} + \frac{4}{\mu_{i}\epsilon(i-1)} + \frac{8}{\mu_{i}\epsilon(i-1)} \\
  &\leq& \frac{2f_{\varphi_{i}}^{\*M_{i}}(\*a_{i})}{\mu_{i}\epsilon \sum_{j \leq i}f_{\varphi_{j}}^{\*M_{j}}(\*a_{j})} + \frac{12}{\mu_{i}\epsilon(i-1)}
 \end{eqnarray*}
 
 The inequality $(i)$ is by upper bounding Bregman divergence by squared Mahalanobis distance. The inequality $(ii)$ is due to applying triangle inequality on the numerator. The $(iii)$ equality is by replacing the denominator with the above assumption. The $(iv)$ inequality is by applying the supporting Lemma \ref{lemma:lwdksupport}. By recursively applying Lemma \ref{lemma:lwdksupport} we get the inequality $(v)$ which is independent of the random choices made by \oneshot. The inequality $(vi)$ is by using the lower bound on the denominator. The inequality $(vii)$ an upper bound on the second and the third term. In the final inequality is due to the same reason as for the upper barrier upper bound. Further the expected upper bound is independent of $k$. So we have,
 \begin{equation*}
  \~E_{\pi_{i-1}}\bigg[\frac{f_{\*X}(\*a_{i})}{(1+\epsilon)f_{\*X}(\*A_{i-1}) - f_{\*X}(\*C_{i-1}) + f_{\varphi_{i}}(\*A_{i})}\bigg] \leq \frac{2f_{\varphi_{i}}^{\*M_{i}}(\*a_{i})}{\mu_{i}\epsilon \sum_{j \leq i}f_{\varphi_{j}}^{\*M_{j}}(\*a_{j})} + \frac{12}{\mu_{i}\epsilon(i-1)}
 \end{equation*}
\end{proof}
In this lemma it is worth noting that the expected upper bounds on $l_{i}^{u}$ and $l_{i}^{l}$ do not use any bicriteria approximation, nor are dependent on $k$. 
In order to have these upper bounds valid for any $\*X$ with at most $k$ centres in $\~R^{d}$, where $k \leq n$, we take a union bound over all $k \in [n]$. By lemma \ref{lemma:lwdkmGuarantee} and lemma \ref{lemma:lwdkScore} we claim that our coresets are non-parametric in nature. Next we bound the expected sample size.


\begin{lemma}{\label{lemma:lwdkComplexity}}
  For the above setup the term $\sum_{i \leq n} c^{u}l_{i}^{u} + c^{l}l_{i}^{l}$ is $O\Big(\frac{1}{\mu\epsilon^{2}}\Big(\log n + \log \big(f_{\varphi}^{\*M}(\*A)\big) - \log \big(f_{\varphi_{2}}^{\*M_{2}}(\*a_{2})\big)\Big)\Big)$.
\end{lemma}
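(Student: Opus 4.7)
The plan is to bound the expected sum $\sum_{i \leq n} \~E_{\pi_{i-1}}[c^{u}l_{i}^{u}+c^{l}l_{i}^{l}]$ by substituting the per-step bounds of Lemma \ref{lemma:lwdkScore} and then reducing the resulting sum to the telescoping argument already carried out in the proof of Lemma \ref{lemma:lwkScore}. By linearity of expectation this will imply the claimed bound on $\sum_i c^u l_i^u + c^l l_i^l$ in expectation (which is all that is needed to control the expected sample size).

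First I would observe that $c^{u}+c^{l} = 4/\epsilon$, so that combining the two inequalities of Lemma \ref{lemma:lwdkScore} gives, for every $i \geq 2$,
\begin{equation*}
  \~E_{\pi_{i-1}}\bigl[c^{u}l_{i}^{u}+c^{l}l_{i}^{l}\bigr] \;\leq\; \frac{4}{\epsilon}\!\left(\frac{2\,f_{\varphi_{i}}^{\*M_{i}}(\*a_{i})}{\mu_{i}\,\epsilon \sum_{j \leq i} f_{\varphi_{j}}^{\*M_{j}}(\*a_{j})} + \frac{12}{\mu_{i}\,\epsilon\,(i-1)}\right) \;=\; O\!\left(\frac{1}{\mu_{i}\epsilon^{2}}\right)\!\left(q_{i} + \frac{1}{i-1}\right),
\end{equation*}
where $q_{i} = f_{\varphi_{i}}^{\*M_{i}}(\*a_{i})/\sum_{j \leq i} f_{\varphi_{j}}^{\*M_{j}}(\*a_{j})$, exactly the quantity analysed in Lemma \ref{lemma:lwkScore}. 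Using Lemma \ref{lemma:obs}, $\mu_{i} \geq \mu_{n} =: \mu$ for every $i \leq n$, so the factor $1/\mu_{i}$ can be replaced by $1/\mu$ uniformly and pulled outside the sum.

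Next I would sum over $i$ and treat the two resulting series separately. For the series in $q_{i}$, the telescoping computation already established in the proof of Lemma \ref{lemma:lwkScore} (using $1+q_{i} \geq \exp(q_{i}/2)$ and forming the product over $i \in \{2,\ldots,n\}$) gives
\begin{equation*}
  \sum_{i=2}^{n} q_{i} \;\leq\; 2\log\!\bigl(f_{\varphi}^{\*M}(\*A)\bigr) - 2\log\!\bigl(f_{\varphi_{2}}^{\*M_{2}}(\*a_{2})\bigr).
\end{equation*}
For the series in $1/(i-1)$, the elementary harmonic bound $\sum_{i=2}^{n} 1/(i-1) \leq 1 + \ln n = O(\log n)$ applies. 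Adding the two contributions and reinserting the $O(1/(\mu\epsilon^{2}))$ prefactor yields the stated bound
\begin{equation*}
  \sum_{i \leq n} \~E_{\pi_{i-1}}[c^{u}l_{i}^{u}+c^{l}l_{i}^{l}] \;=\; O\!\left(\frac{1}{\mu\epsilon^{2}}\Bigl(\log n + \log\!\bigl(f_{\varphi}^{\*M}(\*A)\bigr) - \log\!\bigl(f_{\varphi_{2}}^{\*M_{2}}(\*a_{2})\bigr)\Bigr)\right).
\end{equation*}

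The main obstacle is conceptual rather than computational: the quantities $l_{i}^{u}, l_{i}^{l}$ depend on the random coreset $\*C_{i-1}$, so one must be careful that the per-step inequalities from Lemma \ref{lemma:lwdkScore} hold \emph{in expectation over the history} $\pi_{i-1}$ rather than pointwise, and then invoke the tower property (or linearity of expectation over $\~E_{\pi_{n}} = \~E_{\pi_{0}}\!\cdots\~E_{\pi_{n-1}}$) to justify summing them. Once that is in place, the telescoping argument is purely deterministic in the quantities $q_{i}$ and $1/(i-1)$, which are both functions only of the input stream and the (deterministic) maintained statistics $\varphi_{i}, \*M_{i}$, not of the random sampling choices, so no additional probabilistic machinery is required.
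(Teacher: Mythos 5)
Your proof is correct and follows essentially the same route as the paper: substitute the per-step expected upper bounds from Lemma \ref{lemma:lwdkScore} with the explicit weights $c^u = 2/\epsilon + 1$, $c^l = 2/\epsilon - 1$ (so $c^u + c^l = 4/\epsilon$), split the resulting sum into the $q_i$ series (handled by the telescoping/$\exp(q_i/2)$ trick from Lemma \ref{lemma:lwkScore}) and the harmonic series in $1/(i-1)$, and uniformize $\mu_i \geq \mu$ via Lemma \ref{lemma:obs}. Your remark about being careful with the tower property when summing expectations over the history $\pi_{i-1}$, and the observation that the telescoping argument is deterministic in the input statistics, is a point the paper glosses over and is a useful clarification rather than a deviation.
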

\begin{proof}{\label{proof:lwdkComplexity}}
 Here in order to bound the expected sample size we first bound the expected sample size of coreset the coreset $\forall \*X \in \~R^{k \times d}$. For this we bound the expected sampling probability i.e., $\~E_{\pi_{i-1}}[p_{i}]$. 
 \begin{eqnarray*}
  \~E_{\pi_{i-1}}[p_{i}] &\stackrel{(i)}{=}& c^{u}l_{i}^{u} + c^{l}l_{i}^{l} \\
  &\leq& \frac{2c^{u}f_{\varphi_{i}}^{\*M_{i}}(\*a_{i})}{\mu_{i}\epsilon \sum_{j \leq i}f_{\varphi_{j}}^{\*M_{j}}(\*a_{j})} + \frac{12c^{u}}{\mu_{i}\epsilon(i-1)} + \frac{2c^{l}f_{\varphi_{i}}^{\*M_{i}}(\*a_{i})}{\epsilon \sum_{j \leq i}f_{\varphi_{j}}^{\*M_{j}}(\*a_{j})} + \frac{12c^{l}}{\mu_{i}\epsilon(i-1)} \\
  &\leq& \frac{8f_{\varphi_{i}}^{\*M_{i}}(\*a_{i})}{\mu_{i}\epsilon^{2} \sum_{j \leq i}f_{\varphi_{j}}^{\*M_{j}}(\*a_{j})} + \frac{48}{\mu_{i}\epsilon^{2}(i-1)} 
 \end{eqnarray*}
 Now we bound the total expected sample size,
\begin{eqnarray*}
 \sum_{2 \leq i \leq n}\~E[p_{i}] &\leq& \sum_{2 \leq i \leq n}\Bigg(\frac{8f_{\varphi_{i}}^{\*M_{i}}(\*a_{i})}{\mu_{i}\epsilon^{2} \sum_{j \leq i}f_{\varphi_{j}}^{\*M_{j}}(\*a_{j})} + \frac{48}{\mu_{i}\epsilon^{2}(i-1)}\Bigg) \\
 &\leq& \frac{48\log n}{\mu_{i}\epsilon^{2}} + \sum_{2 \leq i \leq n}\Bigg(\frac{8f_{\varphi_{i}}^{\*M_{i}}(\*a_{i})}{\mu_{i}\epsilon^{2} \sum_{j \leq i}f_{\varphi_{j}}^{\*M_{j}}(\*a_{j})}\Bigg)
\end{eqnarray*}
Let the term $\frac{f_{\varphi_{i}}^{\*M_{i}}(\*a_{i})}{\sum_{j \leq i}f_{\varphi_{j}}^{\*M_{j}}(\*a_{j})} = q_{i} \leq 1$. In the following analysis we bound summation of this term i.e., $\sum_{i \leq n} q_{i}$. For that consider the term $\sum_{j \leq i}f_{\varphi_{j}}^{\*M_{j}}(\*a_{j})$ as follows,
\begin{eqnarray*}
 \sum_{j \leq i}f_{\varphi_{j}}^{\*M_{j}}(\*a_{j}) &=& \sum_{j \leq i-1}f_{\varphi_{j}}^{\*M_{j}}(\*a_{j})\bigg(1 + \frac{f_{\varphi_{i}}^{\*M_{i}}(\*a_{i})}{\sum_{j \leq i-1} f_{\varphi_{j}}^{\*M_{j}}(\*a_{j})}\bigg) \\
 &\geq& \sum_{j \leq i-1}f_{\varphi_{j}}^{\*M_{j}}(\*a_{j})\bigg(1 + \frac{f_{\varphi_{i}}^{\*M_{i}}(\*a_{i})}{\sum_{j \leq i}f_{\varphi_{j}}^{\*M_{j}}(\*a_{j})}\bigg) \\
 &=& \sum_{j \leq i-1}f_{\varphi_{j}}^{\*M_{j}}(\*a_{j})(1 + q_{i}) \\
 &\geq& \exp(q_{i}/2)\sum_{j \leq i-1}f_{\varphi_{j}}^{\*M_{j}}(\*a_{j}) \\
 \exp(q_{i}/2) &\leq& \frac{\sum_{j \leq i}f_{\varphi_{j}}^{\*M_{j}}(\*a_{j})}{\sum_{j \leq i-1}f_{\varphi_{j}}^{\*M_{j}}(\*a_{j})}
\end{eqnarray*}
Now as we know that $\sum_{j \leq i}f_{\varphi_{j}}^{\*M_{j}}(\*a_{j}) \geq \sum_{j \leq i-1}f_{\varphi_{j}}^{\*M_{j}}(\*a_{j})$ hence following product results into a telescopic product and we get,
\begin{eqnarray*}
 \prod_{2 \leq i \leq n} \exp(q_{i}/2) &\leq& \frac{\sum_{j \leq n}f_{\varphi_{j}}^{\*M_{j}}(\*a_{j})}{f_{\varphi_{2}}^{\*M_{2}}(\*a_{2})} \\
 &\leq&\frac{f_{\varphi}^{\*M}(\*A)}{f_{\varphi_{2}}^{\*M_{2}}(\*a_{2})}
\end{eqnarray*}
Now taking $\log$ in both sides we get $\sum_{2 \leq i \leq n} q_{i} \leq 2\log \big(f_{\varphi}^{\*M}(\*A)\big) - 2\log \big(f_{\varphi_{2}}^{\*M_{2}}(\*a_{2})\big)$.
\begin{equation*}
 \sum_{i \leq n}\~E[p_{i}] \leq 1 + \frac{32}{\mu\epsilon^{2}}\Big(3\log n + \log \big(f_{\varphi}^{\*M}(\*A)\big) - \log \big(f_{\varphi_{2}}^{\*M_{2}}(\*a_{2})\big)\Big)
\end{equation*}

 Here we consider $\forall i \in [n]$ we have $\mu = \mu_{n} \leq \mu_{i}$ and $\*M = \*M_{n} \succeq \*M_{i}$ for $\*A$. 
\end{proof}

The above sum is independent of $k$ and $d$. Now to ensure a non-parametric coreset, we have to ensure this event (bounded sampling complexity) for all $\*X$ with at most $n$ centres in $\~R^{d}$. Hence by taking a union bound over $k \in [n]$, the expected sample size for a non-parametric coreset from \oneshot~which ensures a deterministic guarantee is $O\bigg(\frac{\log n}{\mu\epsilon^{2}}\Big(\log n + \log \big(f_{\varphi}^{\*M}(\*A)\big) - \log \big(f_{\varphi_{2}}^{\*M_{2}}(\*a_{2})\big)\Big)\bigg)$. Note that to get such non-parametric coreset, it is necessary to have such an oracle which upper bounds the upper and lower sensitivity scores. 


Now we present an algorithmic version of \oneshot under certain assumption. Let $\@X_{j}$ for $j \in [n]$ represents the query space where each query $\*X \in \@X_{j}$ has $j$ centres in $\~R^{d}$. Now for each point $\*a_{i}$ and $\@X_{j}$ we consider the following two random variables,
\begin{eqnarray*}
 r_{(i,j)}^{u}(\*X) &=& \frac{f_{\*X}(\*a_{i})}{(1+\epsilon)f_{\*X}(\*A_{i-1}) - f_{\*X}(\*C_{i-1}) + \epsilon f_{\varphi_{i}}(\*A_{i})} \\
 r_{(i,j)}^{l}(\*X) &=& \frac{f_{\*X}(\*a_{i})}{f_{\*X}(\*C_{i-1}) - (1-\epsilon)f_{\*X}(\*A_{i-1}) + \epsilon f_{\varphi_{i}}(\*A_{i})}
\end{eqnarray*}

Here randomness is over $\*X \in \@X_{j}$. Let both $r_{(i,j)}^{u}$ and $r_{(i,j)}^{l}$ follow the bounded CDF assumption in \cite{baykal2018data}. The bellow assumption is stated for $r_{(i,j)}^{u}$ for some appropriate query space $\@X_{j}$. We consider that a similar assumption is also true for $r_{(i,j)}^{l}$.
\begin{assumption}{\label{asm:cdf}}
 There is a pair of universal constants $K$ and $K'$ such that such that for each pair $(i,j)$ with $i \in [n]$ and $\@X_{j} \in \cup_{l}\@X_{l}$, the CDF of the random variables $r_{i}^{u}(\*X)$ for $\*X \in \@X_{j}$ denoted by $G_{(i,j)}()$ satisfies,
 \begin{equation*}
  G_{(i,j)}(x^{*}/K) \leq \exp(-1/K')
 \end{equation*}
 where $x^{*} = \min\{y \in [0,1]: G_{(i,j)}(y) = 1\}$.
\end{assumption}

We consider the above assumption is true for all pair of $(i,j)$, where $i$ corresponds to point $\*a_{i}$ and $j$ corresponds to the query space $\@X_{j}$. Now the following two lemma's are similar to lemma 6 and 7 in \cite{baykal2018data}. Here we state them for completeness. Lemma \ref{lemma:inputDistribtution} is stated for all pairs of $(i,j)$ such that $i \leq n$ and $\@X_{j} \in \cup_{l \leq n}\@X_{l}$.
\begin{lemma}{\label{lemma:inputDistribtution}}
 Let $K, K' > 0$ be universal constants and let $\@X_{j}$ be the query space as defined above with CDF $G_{(i,j)}(\cdot)$ satisfying $G_{(i,j)}(x^{*}/K) \leq \exp(-1/K')$, where $x^{*} = \min\{y \in [0,1]: G_{(i,j)}(y) = 1\}$. Let $\@Y_{j} = \{\*X_{1}, \*X_{2}, \ldots, \*X_{m}\}$ be a set of $m = |\@Y_{j}|$ i.i.d. samples each drawn from $\@X_{j}$. Let $\*X_{m+1} \sim \@X_{j}$ be an iid samples, then
 \begin{eqnarray*}
  \~P\Big(K\max_{\*X \in \@X_{j}}r_{(i,j)}^{u}(\*X) \leq r_{(i,j)}^{u}(\*X_{m+1})\Big) &\leq& \exp(-m/K') \\
  \~P\Big(K\max_{\*X \in \@X_{j}}r_{(i,j)}^{l}(\*X) \leq r_{(i,j)}^{l}(\*X_{m+1})\Big) &\leq& \exp(-m/K')
 \end{eqnarray*}
\end{lemma}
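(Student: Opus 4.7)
The plan is to view the quantity $r = r_{(i,j)}^{u}$ (or $r_{(i,j)}^{l}$) as a real valued random variable whose law is the CDF $G = G_{(i,j)}$ on $[0,1]$, and to exploit the defining property of $x^{*}$ as an essential supremum. Since $x^{*} = \min\{y \in [0,1]: G(y) = 1\}$, we have $G(x^{*}) = 1$, i.e.\ $r(\*X) \le x^{*}$ almost surely for every fresh i.i.d.\ draw $\*X \sim \@X_{j}$. In particular, $r(\*X_{m+1}) \le x^{*}$ a.s.

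On the event $E = \{K\max_{l \le m} r(\*X_{l}) \le r(\*X_{m+1})\}$, combining with $r(\*X_{m+1}) \le x^{*}$ yields $K \max_{l \le m} r(\*X_{l}) \le x^{*}$, i.e.\ $r(\*X_{l}) \le x^{*}/K$ for every $l \in [m]$ simultaneously. This gives the inclusion
\begin{equation*}
E \subseteq \bigcap_{l=1}^{m} \{ r(\*X_{l}) \le x^{*}/K \}.
\end{equation*}

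Since the $\*X_{l}$ are i.i.d.\ samples from $\@X_{j}$, the events $\{r(\*X_{l}) \le x^{*}/K\}$ are mutually independent, so by the CDF assumption
\begin{equation*}
\~P(E) \le \prod_{l=1}^{m} \~P(r(\*X_{l}) \le x^{*}/K) = G(x^{*}/K)^{m} \le \big(\exp(-1/K')\big)^{m} = \exp(-m/K').
\end{equation*}
The argument for $r_{(i,j)}^{l}$ is verbatim identical, invoking the analogous bounded-CDF hypothesis for the lower barrier scores.

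There is essentially no technical obstacle here: everything reduces to (i) recognizing that $x^{*}$ is an almost sure upper bound on $r$ so that $r(\*X_{m+1})$ can be dominated by $x^{*}$ on the event of interest, and (ii) using independence of the $m$ draws to turn the per-sample CDF bound $\exp(-1/K')$ into the product bound $\exp(-m/K')$. The only place to be a little careful is the boundary case in the definition of $x^{*}$: if one works with right-continuous CDFs then $G(x^{*}) = 1$ is immediate, and if the infimum in the definition is not attained one replaces $x^{*}$ by $x^{*}+\eta$ and lets $\eta \downarrow 0$, which does not affect the final inequality.
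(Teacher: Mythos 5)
Your proof is correct and takes essentially the same approach as the paper's: independence of the $m$ i.i.d.\ draws gives the $m$-th power of the CDF, and the bounded-CDF hypothesis at $x^{*}/K$ gives $\exp(-m/K')$. The paper reaches the same product bound by conditioning on $r(\*X_{m+1}) = y$, integrating over $y \in [0, x^{*}]$, and using monotonicity of $G$ to bound $G(y/K)^{m} \le G(x^{*}/K)^{m}$; you instead use the almost-sure bound $r(\*X_{m+1}) \le x^{*}$ directly, which is a slightly cleaner route to the same inequality (and correctly reads the ``$\max_{\*X \in \@X_{j}}$'' in the lemma statement as the intended maximum over the $m$ sampled queries $\@Y_{j}$, which is what the exponent $m$ in the paper's proof presupposes).
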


\begin{proof}{\label{proof:inputDistribtution}}
 Let $\*X_{\max} = \mbox{arg}\max_{\*X \in \@X_{j}} r_{(i,j)}^{u}(\*X)$, then
 \begin{eqnarray*}
  \~P(K\max_{\*X \in \@X_{j}}r_{(i,j)}^{u}(\*X) \leq r_{(i,j)}^{u}(\*X_{m+1})) &=& \int_{0}^{x^{*}} \~P(Kr_{(i,j)}^{u}(\*X_{\max}) \leq y/K | r_{(i,j)}^{u}(\*X_{m+1}) = y) d\~P(y) \\
  &\stackrel{i}{=}& \int_{0}^{x^{*}} \~P(Kr_{(i,j)}^{u}(\*X_{\max}) \leq y/K)^{m} \~P(y) \\
  &\leq& \int_{0}^{x^{*}} G_{(i,j)}(y/K)^{m} \~P(y) \\
  &\stackrel{ii}{\leq}& G_{(i,j)}(x^{*}/K)^{m} \int_{0}^{x^{*}} \~P(y) \\
  &=& G_{(i,j)}(x^{*}/K)^{m} \\
  &\leq& \exp(-m/K')
 \end{eqnarray*}
 Here $(i)$ is because $\{\*X_{1}, \*X_{2}, \ldots, \*X_{m}\}$ are i.i.d. from $\@X_{j}$. Further $(ii)$ is due to the assumption \ref{asm:cdf}. Similarly for $r_{i}^{l}$ is also proved.
\end{proof}

Let for all $j \leq n$, there is a finite set $\@Y_{j} \subset \@X_{j}$ such that the empirical sensitivity scores $\tilde{l}_{i}^{u} = \max_{j} \tilde{l}_{(i,j)}^{u}$ and $\tilde{l}_{i}^{l} = \max_{j} \tilde{l}_{(i,j)}^{l}$, such that $\tilde{l}_{(i,j)}^{u}$ and $\tilde{l}_{(i,j)}^{l}$ are defined as follows,
\begin{eqnarray*}
 \tilde{l}_{(i,j)}^{u} &=& \max_{\*X \in \@Y_{j}} \frac{f_{\*X}(\*a_{i})}{(1+\epsilon)f_{\*X}(\*A_{i-1}) - f_{\*X}(\*C_{i-1}) + \epsilon f_{\varphi_{i}}(\*A_{i})} \\
 \tilde{l}_{(i,j)}^{l} &=& \max_{\*X \in \@Y_{j}} \frac{f_{\*X}(\*a_{i})}{f_{\*X}(\*C_{i-1}) - (1+\epsilon)f_{\*X}(\*A_{i-1}) + \epsilon f_{\varphi_{i}}(\*A_{i})}
\end{eqnarray*}

Now in the following lemma we establish the notion that empirical sensitivity scores are good approximation to the true sensitivity scores. It is also used to decide the size of each finite set $\@Y_{j}$.
\begin{lemma}{\label{lemma:empircalSensitivity}}
 Let $\delta \in (0,1)$, consider the set $\@Y_{j} \subset \@X_{j}$ of size $|\@Y_{j}| \geq \lceil K' \log (n/\delta) \rceil$, then
 \begin{eqnarray*}
  \~P_{\*X \in \@X_{j}}(\exists i \in [n]: K\tilde{l}_{(i,j)}^{u} \leq r_{(i,j)}^{u}(\*X)) &\leq& \delta \\
  \~P_{\*X \in \@X_{j}}(\exists i \in [n]: K\tilde{l}_{(i,j)}^{l} \leq r_{(i,j)}^{l}(\*X)) &\leq& \delta
 \end{eqnarray*}
\end{lemma}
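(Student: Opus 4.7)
The plan is to reduce this to the single-point tail bound from Lemma \ref{lemma:inputDistribtution} and then sweep up a factor of $n$ by a union bound over the data indices.

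First I would fix an arbitrary $i \in [n]$ and recall that $\tilde{l}_{(i,j)}^{u} = \max_{\*X \in \@Y_{j}} r_{(i,j)}^{u}(\*X)$, where $\@Y_{j} = \{\*X_{1},\ldots,\*X_{m}\}$ consists of $m = |\@Y_{j}|$ i.i.d.\ draws from $\@X_{j}$. Treating the test point $\*X \sim \@X_{j}$ as $\*X_{m+1}$, Lemma \ref{lemma:inputDistribtution} combined with Assumption \ref{asm:cdf} gives
\begin{equation*}
 \~P_{\*X \in \@X_{j}}\big(K\tilde{l}_{(i,j)}^{u} \leq r_{(i,j)}^{u}(\*X)\big) \;\leq\; \exp(-m/K').
\end{equation*}
The analogous bound with $\exp(-m/K')$ on the right-hand side holds for the lower barrier random variable $r_{(i,j)}^{l}$ by the second half of Lemma \ref{lemma:inputDistribtution}.

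Next I would plug in the hypothesized sample size $m = |\@Y_{j}| \geq \lceil K'\log(n/\delta) \rceil$. This choice is precisely calibrated so that $\exp(-m/K') \leq \exp(-\log(n/\delta)) = \delta/n$, giving a failure probability of at most $\delta/n$ for each individual index $i$.

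Finally, I would take a union bound over the $n$ choices of $i$:
\begin{equation*}
 \~P_{\*X \in \@X_{j}}\big(\exists\, i \in [n]:\ K\tilde{l}_{(i,j)}^{u} \leq r_{(i,j)}^{u}(\*X)\big) \;\leq\; \sum_{i=1}^{n} \frac{\delta}{n} \;=\; \delta,
\end{equation*}
and the identical calculation applied to the lower-barrier scores yields the second inequality. There is no real obstacle here; the only subtle point is making sure the CDF-based tail bound from Assumption \ref{asm:cdf} is applied to each fixed $i$ separately (so the $m$ samples in $\@Y_{j}$ can be viewed as i.i.d.\ draws independent of $\*X_{m+1}$) before the union bound is invoked, which is what Lemma \ref{lemma:inputDistribtution} already encapsulates.
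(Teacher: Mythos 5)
Your proposal is correct and follows essentially the same route as the paper: apply Lemma \ref{lemma:inputDistribtution} for each fixed $i$, observe that $|\@Y_{j}| \geq \lceil K'\log(n/\delta)\rceil$ makes the per-index failure probability at most $\delta/n$, and close with a union bound over $i \in [n]$. The paper's own proof is terser but invokes exactly the same two steps.
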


\begin{proof}{\label{proof:empircalSensitivity}}
 The proof is very simple, which mainly follows from lemma \ref{lemma:inputDistribtution}, 
 \begin{equation*}
  \~P(\@E_{i}): \~P_{\*X \in \@X_{j}}(K\max_{\*X' \in \@Y_{j}}r_{(i,j)}^{u}(\*X') \leq r_{(i,j)}^{u}(\*X)) \leq \exp(-|\@Y_{j}|/K')
 \end{equation*}
 Next, in order to ensure that there exists an $i \in [n]$ such that $\~P(\@E_{i}) \leq \delta$, we take a union bound over all $i \in [n]$ and get $|\@Y_{j}| \geq \lceil K' \log(n/\delta) \rceil$. Similarly, for the $\tilde{l}_{(i,j)}^{l}$ is also proved.
\end{proof}


Now based on above assumption we present  algorithm (\ref{alg:lwnpkm}), which returns coreset for clustering via Bregman divergence. Note that without the above assumption \ref{asm:cdf} our algorithm acts as an heuristic. 
In this algorithm instead of getting the $l_{i}^{u}$ and $l_{i}^{l}$ value from an oracle, we use $(\tilde{l}_{i}^{u}, \tilde{l}_{i}^{l})$ and the expected upper bounds as in Lemma \ref{lemma:lwdkScore}. The algorithm requires $\{\@Y_{1}, \@Y_{2}, \ldots, \@Y_{n}\}$ where each $\@Y_{j}$ has $O(\log (n/\delta))$ queries. Note although the algorithm uses the knowledge on upper bound of the cluster centres due to the query sets $\{\@Y_{1}, \@Y_{2}, \ldots, \@Y_{n}\}$, but the expected coreset size is independent of cluster centres, mainly by lemma \ref{lemma:lwdkScore} and \ref{lemma:lwdkComplexity}. In practice one might have the knowledge of upper bound of cluster centres e.g., $B$ such that $B \ll n$. 

Further as we also use expected upper bounds while taking the sampling decision, hence here we maintain $t$ coresets instead of $1$ and have an additional reweighing of $1/t$ for each sampled points. We maintain $t$ coresets in order to improve the chance that the expected upper bound from Lemma \ref{lemma:lwdkScore} actually upper bounds the true sensitivity scores, i.e., 
\begin{eqnarray*}
 \frac{1}{t}\sum_{j \leq t}\sup_{\*X \in \mathcal{X}} \frac{f_{\*X}(\*a_{i})}{(1+\epsilon)f_{\*X}(\*A_{i-1}) - f_{\*X}(\*C_{i-1}^{j}) + \epsilon f_{\varphi_{i}}(\*A_{i})} \leq \frac{2f_{\varphi_{i}}(\*a_{i})}{\epsilon \mu_{i}S} + \frac{12}{\epsilon\mu_{i}(i-1)} \\
 \frac{1}{t}\sum_{j \leq t}\sup_{\*X \in \mathcal{X}} \frac{f_{\*X}(\*a_{i})}{f_{\*X}(\*C_{i-1}^{j}) - (1-\epsilon)f_{\*X}(\*A_{i-1}) + \epsilon f_{\varphi_{i}}(\*A_{i})} \leq \frac{2f_{\varphi_{i}}(\*a_{i})}{\epsilon \mu_{i}S} + \frac{12}{\epsilon\mu_{i}(i-1)}
\end{eqnarray*}

Here an important point to note is that unlike \oneshot, the coreset from \oneshotnp only ensures the guarantee with high probability. This is due to lemma \ref{lemma:empircalSensitivity} where the empirical sensitivity score only approximate the true sensitivity scores with some probability and also the above upper bound is only over expectation. We propose our algorithm as \oneshotnp.

\begin{algorithm}[htpb]
  \caption{\oneshotnp}{\label{alg:lwnpkm}}
  \begin{algorithmic}
  \REQUIRE Points $\*a_{i}, i = 1,\ldots n; t > 1; \epsilon \in (0,1); \@Y = \cup_{j \leq}\@Y_{j}$
  \ENSURE $\{\text{Coreset ,Weights}\}: \{(\*C^{1},\*C^{2},\ldots,\*C^{t}), (\Omega^{1},\Omega^{2},\ldots,\Omega^{t})\}$
  \STATE $c^{u} = 2/\epsilon + 1; c^{l} = 2/\epsilon - 1; \varphi_{0} = \emptyset; S = 0; \*C_{0}^{1} = \ldots = \Omega_{0}^{t} = \emptyset$
  \STATE $\lambda = \|\*a_{1}\|_{\min}; \quad \nu = \|\*a_{1}\|_{\max}$
  \WHILE {$i \leq n$}
    \STATE $\lambda = \min\{\lambda,\|\*a_{i}\|_{\min}\}; \nu = \max\{\nu,\|\*a_{i}\|_{\max}\}$
    \STATE Update $\*M_{i};\ \mu_{i} = \lambda/\nu$ 
    \STATE $\varphi_{i} = ((i-1)\varphi_{i-1} + \*a_{i})/i ; S = S + f_{\varphi_{i}}^{\*M_{i}}(\*a_{i})$
    \IF{$i = 1$}
      \STATE $p_{i} = 1$
    \ELSE
      \STATE $\hat{l}_{i}^{u} = \frac{2f_{\varphi_{i}}^{\*M_{i}}(\*a_{i})}{\epsilon \mu_{i}S} + \frac{12}{\epsilon\mu_{i}(i-1)}$
      \STATE $\hat{l}_{i}^{l} = \frac{2f_{\varphi_{i}}^{\*M_{i}}(\*a_{i})}{\epsilon \mu_{i}S} + \frac{12}{\epsilon\mu_{i}(i-1)}$
      \STATE $\tilde{l}_{i}^{u} = \max_{\*X \in \@Y} \frac{f_{\*X}^{\*M_{i}}(\*a_{i})}{\mu_{i}((1+\epsilon)f_{\*X}^{\*M_{i}}(\*A_{i-1}) - \sum_{k \leq t}f_{\*X}^{\*M_{i}}(\*C_{i-1}^{k}) + \epsilon f_{\varphi_{i}}^{\*M_{i}}(\*A_{i}))}$
      \STATE $\tilde{l}_{i}^{l} = \max_{\*X \in \@Y} \frac{f_{\*X}^{\*M_{i}}(\*a_{i})}{\mu_{i}(\sum_{k \leq t}f_{\*X}^{\*M_{i}}(\*C_{i-1}^{k}) - (1+\epsilon)f_{\*X}^{\*M_{i}}(\*A_{i-1}) + \epsilon f_{\varphi_{i}}^{\*M_{i}}(\*A_{i}))}$ 
      \STATE $p_{i} = \min\{1,(c^{u}(\hat{l}_{i}^{u}+\tilde{l}_{i}^{u}) + c^{l}(\hat{l}_{i}^{l}+\tilde{l}_{i}^{l})\}$
    \ENDIF
    \STATE Set $\*c_{i}^{1}$ and $\omega_{i}^{1}$ as \\
    $\begin{cases}\*a_{i} \mbox{ and } 1/(tp_{i}) \qquad \quad \mbox{w. p. } p_{i} \\  
    \emptyset \mbox{ and } 0 \qquad \qquad \qquad \mbox{else} \end{cases}$
    \STATE $(\*C_{i}^{1}, \Omega_{i}^{1}) = (\*C_{i-1}^{1}, \Omega_{i-1}^{1}) \cup (\*c_{i}^{1},\omega_{i}^{1})$
    \STATE Set $\*c_{i}^{2}$ and $\omega_{i}^{2}$ as \\
    $\begin{cases}\*a_{i} \mbox{ and } 1/(tp_{i}) \qquad \quad \mbox{w. p. } p_{i} \\  
    \emptyset \mbox{ and } 0 \qquad \qquad \qquad \mbox{else} \end{cases}$
    \STATE $(\*C_{i}^{2}, \Omega_{i}^{2}) = (\*C_{i-1}^{2}, \Omega_{i-1}^{2}) \cup (\*c_{i}^{2},\omega_{i}^{2})$
    \STATE $\vdots$
    \STATE Set $\*c_{i}^{t}$ and $\omega_{i}^{t}$ as \\
    $\begin{cases}\*a_{i} \mbox{ and } 1/(tp_{i}) \qquad \quad \mbox{w. p. } p_{i} \\  
    \emptyset \mbox{ and } 0 \qquad \qquad \qquad \mbox{else} \end{cases}$
    \STATE $(\*C_{i}^{t}, \Omega_{i}^{t}) = (\*C_{i-1}^{t}, \Omega_{i-1}^{t}) \cup (\*c_{i}^{t},\omega_{i}^{t})$
  \ENDWHILE
  \STATE Return $\{(\*C^{1}, \*C^{2}, \ldots \*C^{t}), (\Omega^{1}, \Omega^{2}, \ldots, \Omega^{t})\}$
  \end{algorithmic}
\end{algorithm}

Note that \oneshotnp is computationally expensive for computing the terms $\tilde{l}_{i}^{u}$ and $\tilde{l}_{i}^{l}$. Now if the assumption \ref{asm:cdf} on $G_{(i,j)}$ is not true then the algorithm becomes a heuristic. The possible query spaces $\@X_{j}$ can have a query $\*X$ which has centres from the set of input points, or centres chosen randomly from $\{\mu-R,\mu+R\}^{d}$ where $\mu$ is the mean of input points and the farthest point from $\mu$ is at a distance $R$. We discuss these in detail in our revised version, where we also present appropriate empirical results for non-parametric coreset\footnote{Work under progress}.
\subsection{Coresets for k-means Clustering}
Again in the case of k-means clustering the algorithm \oneshot shows an existential coreset $\*C$ which is non-parametric in nature. In the following corollary we state guarantee that \oneshot ensures in the case of k-means clustering.
\begin{corollary}{\label{cor:nplwk}}
 Let $\*A \in \~R^{n \times d}$ and the points fed to \oneshot, it returns a set of coresets $\*C$ which ensures the guarantee as in equation \eqref{eq:lwkm} for any $\*X$ with at most $n$ centres in $\~R^{d}$. The returned coresets has expected samples size as $O\bigg(\frac{\log n}{\epsilon^{2}}\Big(\log n + \log \big(f_{\varphi}(\*A)\big) - \log \big(f_{\varphi_{2}}(\*a_{2})\big)\Big)\bigg)$. 
\end{corollary}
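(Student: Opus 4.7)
The plan is to specialize Theorem \ref{thm:lwdkm} to the squared Euclidean setting, exactly mirroring how Corollary \ref{cor:lwk} was derived from Theorem \ref{thm:lwkm}. Since squared Euclidean distance is a Bregman divergence with $\Phi(\*x) = \|\*x\|_2^2$, the three structural ingredients of the proof of Theorem \ref{thm:lwdkm} (the deterministic guarantee of Lemma \ref{lemma:lwdkmGuarantee}, the expected upper bounds on the barrier sensitivities of Lemma \ref{lemma:lwdkScore}, and the summation bound of Lemma \ref{lemma:lwdkComplexity}) all apply verbatim. The only simplifications are that $\*M_i = \*I_d$ and $\mu_i = 1$ for every $i \in [n]$, so the per-step updates of $\*M_i$ and $\mu_i$ in \oneshot become vacuous, and every occurrence of $f_{\*x}^{\*M_i}(\cdot)$ collapses to $f_{\*x}(\cdot)$.

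First I would invoke Lemma \ref{lemma:lwdkmGuarantee} directly. The deterministic invariant maintained by \oneshot is that at every step $i$, $(1-\epsilon) f_{\*X}(\*A_i) - \epsilon f_{\varphi_i}(\*A_i) \le f_{\*X}(\*C_i) \le (1+\epsilon) f_{\*X}(\*A_i) + \epsilon f_{\varphi_i}(\*A_i)$ for all $\*X$. Since this holds for arbitrary $\*X$, it in particular holds for any query $\*X$ consisting of at most $n$ centres in $\~R^d$, which gives the non-parametric guarantee \eqref{eq:lwkm} for the Euclidean $k$-means cost. This part of the argument is \emph{query-dimension free}: unlike the proof of Lemma \ref{lemma:lwkGuarantee} for \okm, no $\epsilon$-net union bound over $\~R^{k\times d}$ is needed, which is precisely why the final coreset size has no $dk$ factor.

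Next I would bound the expected sample size by plugging $\mu_i = 1$ and $\*M_i = \*I_d$ into Lemma \ref{lemma:lwdkScore}, yielding
\[
\~E_{\pi_{i-1}}[l_i^u] \le \frac{2 f_{\varphi_i}(\*a_i)}{\epsilon \sum_{j \le i} f_{\varphi_j}(\*a_j)} + \frac{12}{\epsilon(i-1)},
\]
and identically for $l_i^l$. Lemma \ref{lemma:lwdkComplexity} then gives $\sum_{i \le n} \~E[p_i] = O\bigl(\epsilon^{-2}(\log n + \log f_{\varphi}(\*A) - \log f_{\varphi_2}(\*a_2))\bigr)$ via the same telescoping argument on $q_i = f_{\varphi_i}(\*a_i) / \sum_{j \le i} f_{\varphi_j}(\*a_j)$ (using $\log(1+q_i) \ge q_i/2$ and the fact that for Euclidean means $f_{\varphi_i}(\*A_i) \ge \sum_{j\le i} f_{\varphi_j}(\*a_j)$, which is the $\mu=1$ case of the property used earlier).

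Finally, to lift the guarantee from a fixed $k$ to ``any $\*X$ with at most $n$ centres'', I take a union bound over $k \in [n]$, as in the proof of Theorem \ref{thm:lwdkm}. Since the per-$k$ bound is already deterministic for each realization of $\*C$, and the expected sample-size bound on $\sum_i p_i$ is independent of $k$, this union bound costs only a multiplicative $\log n$ factor in the sample size, producing the claimed $O\bigl(\epsilon^{-2} \log n (\log n + \log f_{\varphi}(\*A) - \log f_{\varphi_2}(\*a_2))\bigr)$ bound. I do not anticipate a genuine obstacle here — the only subtlety is to confirm that Lemma \ref{lemma:lwdkScore}'s expected upper bound is uniform over $\*X$ before the supremum, so the union over $k$ does not inflate the per-point expected probability, which is already verified inside that lemma's proof.
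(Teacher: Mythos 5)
Your proposal is correct and takes essentially the same route as the paper: specialize $\*M_i = \*I_d$, $\mu_i = 1$ in Lemmas \ref{lemma:lwdkmGuarantee}, \ref{lemma:lwdkScore}, and \ref{lemma:lwdkComplexity}, and then take the union bound over $k \in [n]$ as in Theorem \ref{thm:lwdkm} to get the extra $\log n$ factor. The only cosmetic difference is that the paper's proof text cites Lemma \ref{lemma:lwdksupport} rather than Lemma \ref{lemma:lwdkmGuarantee} explicitly, but the underlying chain of lemmas invoked is the same.
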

\begin{proof}{\label{proof:lwdkbd}}
 We prove it using the Lemmas \ref{lemma:lwdksupport}, \ref{lemma:lwdkScore} and \ref{lemma:lwdkComplexity}. As for k-means clustering we have $\*M_{i} = \*I_{d}$ and $\mu_{i} = 1$ for each $i \leq n$, hence $\forall i \in [n]$ we have,
 \begin{equation*}
  l_{i} = \frac{f_{\varphi_{i}}(\*a_{i})}{\epsilon \sum_{j \leq i}f_{\varphi_{j}}(\*a_{j})} + \frac{12}{\epsilon(i-1)}
 \end{equation*} 
 It can be verified by a similar analysis as in the proof of Lemma \ref{lemma:lwdksupport} and \ref{lemma:lwdkScore}. The rest of the lemma's proof follows as it is and we get a required guarantee, i.e., a non-parametric coreset for k-means clustering with $k \leq n$. The coreset returned by \oneshot has expected sample size as $O\bigg(\frac{\log n}{\epsilon^{2}}\Big(\log n + \log \big(f_{\varphi}(\*A)\big) - \log \big(f_{\varphi_{2}}(\*a_{2})\big)\Big)\bigg)$.
\end{proof}
As this is just an existential result, we present a heuristic algorithm for the same problem in section \ref{sec:npHeuristic}. Further we show that the coreset returned by our heuristic algorithm captures the non-parametric nature of the coreset and performs well on real world data.
\subsubsection{Coresets for DP-Means Clustering}
Here we discuss that our existential non-parametric coresets from algorithm \ref{alg:lwdkm} can also be used to approximate DP-Means clustering \cite{bachem2015coresets}, based on squared euclidean. We define a slightly different cost. 
\begin{equation*}
  cost_{DP}(\*A,\*X) = \sum_{\*a_{i} \in \*A}f_{\*X}(\*a_{i}) + |\*X|\lambda
\end{equation*}

Here $f_{\*X}(\*a_{i})$ is the cost based on some $d_{\Phi}$ Bregman divergences as introduced earlier. It is not difficult to see that the coreset from \oneshot ensures an additive error approximation for this definition of Bregman divergence based DP-Means clustering. 
\begin{lemma}{\label{lemma:DP-Means}}
 The non-parametric coreset $\*C$ from \oneshot~ensures the following for all $\*X$ with at most $n$ centres in $\~R^{d}$,
 \begin{equation*}
  |cost_{DP}(\*C,\*X) - cost_{DP}(\*A,\*X)| \leq \epsilon(f_{\*X}(\*A) + f_{\varphi}(\*A))
 \end{equation*}
\end{lemma}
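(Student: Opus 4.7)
The plan is to observe that the penalty term $|\*X|\lambda$ appearing in $cost_{DP}$ depends only on the candidate set $\*X$ and not on the underlying data, so it contributes identically to $cost_{DP}(\*A,\*X)$ and $cost_{DP}(\*C,\*X)$. Therefore the difference simplifies immediately to a difference of the usual Bregman clustering costs on $\*A$ and $\*C$.

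More concretely, I would first expand both sides using the definition $cost_{DP}(\*A,\*X) = f_{\*X}(\*A) + |\*X|\lambda$ and $cost_{DP}(\*C,\*X) = f_{\*X}(\*C) + |\*X|\lambda$, then cancel the $|\*X|\lambda$ terms to obtain
\begin{equation*}
  |cost_{DP}(\*C,\*X) - cost_{DP}(\*A,\*X)| = |f_{\*X}(\*C) - f_{\*X}(\*A)|.
\end{equation*}

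Next I would invoke Theorem \ref{thm:lwdkm}, which guarantees that the coreset $\*C$ returned by \oneshot satisfies $|f_{\*X}(\*C) - f_{\*X}(\*A)| \leq \epsilon(f_{\*X}(\*A) + f_{\varphi}(\*A))$ for all $\*X$ with at most $n$ centres in $\~R^{d}$. Chaining the equality and this inequality gives the claimed bound. Since the non-parametric guarantee of \oneshot already covers all choices of $|\*X|$ up to $n$, this covers all DP-Means solutions of interest.

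There is no real obstacle here: the statement is essentially a corollary of Theorem \ref{thm:lwdkm}, the only subtlety being that the DP-Means penalty $|\*X|\lambda$ is data-independent and hence cancels exactly. If one wished to be slightly more careful, one could note that $\*C$ is a reweighted subset of $\*A$ (not a larger set), and that the penalty depends on $|\*X|$ rather than on $|\*A|$ or $|\*C|$, so there is no mismatch between the two DP costs beyond the clustering term.
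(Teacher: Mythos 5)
Your proof is correct and follows essentially the same route as the paper's: expand the definition of $cost_{DP}$, cancel the data-independent penalty term $|\*X|\lambda$, and apply the guarantee from Theorem \ref{thm:lwdkm}. The paper additionally makes explicit the observation that one may restrict to $|\*X| \le n$ without loss of generality (since otherwise the penalty alone exceeds $n\lambda$), which you gesture at implicitly in your closing remark.
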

\begin{proof}
Note that at any $|\*X| \ge  n$, the cost is at least $\lambda|\*X| \ge n\lambda$.
Hence without loss of generality, we can restrict $|\*X| \le n$, since the optimal will be in this range. We know that for a parameter $\lambda$,
 \begin{equation*}
  cost_{DP}(\*A,\*X) = f_{\*X}(\*A) + |\*X|\lambda
 \end{equation*}
 Now if one applies DP-Means on the coreset from \oneshot we get the following,
 \begin{eqnarray*}
  \Big|\sum_{j \leq t}\mbox{cost}_{DP}(\*C,\*X) - \mbox{cost}_{DP}(\*A,\*X)\Big| &=& \Big|f_{\*X}(\*C) - f_{\*X}(\*A)\Big| \\
  &\leq& \epsilon(f_{\*X}(\*A) + f_{\varphi}(\*A))
 \end{eqnarray*}
 The last inequality is by Theorem \ref{thm:lwdkm}.
\end{proof}
Now we claim that by allowing a small additive error approximation our coreset size significantly improves upon coresets for relative error approximation for DP-Means clustering. Unlike \cite{bachem2015coresets} our coresets are existential but it is much smaller, as in practice $O(d^d) \gg O(\log n)$.

\begin{theorem}{\label{thm:DPMeans}}
 For $\epsilon \in (0,1)$, let $\*C$ be the existential non-parametric coreset for $\*A$ (Theorem \ref{thm:lwdkm}), $\*X_{\*C}$ and $\*X_{\*A}$ are the optimal cluster centers for the DP-Means clustering on $\*C$ and $\*A$. Then $\*C$ ensures the following,
\begin{equation*}
  \mbox{cost}_{DP}(\*A,\*X_{\*C}) \leq \mbox{cost}_{DP}(\*A,\*X_{\*A}) + \epsilon(f_{\*X_{\*C}}(\*A) + f_{\*X_{\*A}}(\*A) + 2 f_{\varphi}(\*A))
\end{equation*}
The expected size of such existential coreset $\*C$ is $O\Big(\frac{\log n}{\mu\epsilon^{2}}\Big(\log n + \log \big(f_{\varphi}^{\*M}(\*A)\big) - \log \big(f_{\varphi_{2}}^{\*M_{2}}(\*a_{2})\big)\Big)\Big)$.
\end{theorem}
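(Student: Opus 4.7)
The plan is to derive the approximation guarantee on the optimal DP-Means cost via a standard two-sided coreset argument, leveraging Lemma \ref{lemma:DP-Means} which already gives us the key additive-error guarantee for the modified cost $\mathrm{cost}_{DP}$ under any fixed query $\*X$ with at most $n$ centres. The size bound then follows immediately from Theorem \ref{thm:lwdkm}.

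First, I would invoke Lemma \ref{lemma:DP-Means} twice with the two queries of interest, $\*X_{\*C}$ and $\*X_{\*A}$, each of which has at most $n$ centres (as argued in the proof of Lemma \ref{lemma:DP-Means}, the optimum of $\mathrm{cost}_{DP}$ is attained with $|\*X| \le n$). This yields the two inequalities
\begin{align*}
\mathrm{cost}_{DP}(\*A,\*X_{\*C}) &\le \mathrm{cost}_{DP}(\*C,\*X_{\*C}) + \epsilon\bigl(f_{\*X_{\*C}}(\*A) + f_{\varphi}(\*A)\bigr),\\
\mathrm{cost}_{DP}(\*C,\*X_{\*A}) &\le \mathrm{cost}_{DP}(\*A,\*X_{\*A}) + \epsilon\bigl(f_{\*X_{\*A}}(\*A) + f_{\varphi}(\*A)\bigr).
\end{align*}

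Next, I would use the optimality of $\*X_{\*C}$ for the DP-Means objective on the coreset, namely $\mathrm{cost}_{DP}(\*C,\*X_{\*C}) \le \mathrm{cost}_{DP}(\*C,\*X_{\*A})$, to chain the two bounds:
\begin{equation*}
\mathrm{cost}_{DP}(\*A,\*X_{\*C}) \le \mathrm{cost}_{DP}(\*C,\*X_{\*A}) + \epsilon\bigl(f_{\*X_{\*C}}(\*A) + f_{\varphi}(\*A)\bigr) \le \mathrm{cost}_{DP}(\*A,\*X_{\*A}) + \epsilon\bigl(f_{\*X_{\*C}}(\*A) + f_{\*X_{\*A}}(\*A) + 2 f_{\varphi}(\*A)\bigr),
\end{equation*}
which is precisely the claimed inequality.

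Finally, the size bound on $\*C$ is inherited verbatim from Theorem \ref{thm:lwdkm}: the non-parametric coreset guarantees \eqref{eq:lwkm} simultaneously for all $\*X$ with at most $n$ centres with expected size $O\bigl(\tfrac{\log n}{\mu\epsilon^{2}}(\log n + \log f_{\varphi}^{\*M}(\*A) - \log f_{\varphi_{2}}^{\*M_{2}}(\*a_{2}))\bigr)$, and Lemma \ref{lemma:DP-Means} shows that this additive guarantee for $f_{\*X}$ translates directly to the same additive guarantee for $\mathrm{cost}_{DP}$ since the center-count penalty $|\*X|\lambda$ cancels. There is essentially no obstacle here beyond bookkeeping; the only subtlety worth flagging is confirming that both queries $\*X_{\*C}$ and $\*X_{\*A}$ lie in the range of cardinalities over which Theorem \ref{thm:lwdkm} provides its guarantee, which is handled by the $|\*X| \le n$ reduction already given in the proof of Lemma \ref{lemma:DP-Means}.
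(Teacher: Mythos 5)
Your proof is correct and matches the paper's own argument: invoke Lemma \ref{lemma:DP-Means} on each of the two queries $\*X_{\*C}$ and $\*X_{\*A}$, chain them through the optimality of $\*X_{\*C}$ on the coreset via $\mathrm{cost}_{DP}(\*C,\*X_{\*C}) \leq \mathrm{cost}_{DP}(\*C,\*X_{\*A})$, and inherit the size bound from Theorem \ref{thm:lwdkm}. Your writeup is if anything slightly more explicit than the paper's about which inequality comes from which lemma and about restricting both queries to cardinality at most $n$, but the logical content is identical.
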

\begin{proof}{\label{proof:DPMeans}}
 Let $\*X_{\*C}$ and $\*X_{\*A}$ are the optimal cluster centres for DP-Means clustering on $\*C$ and $\*A$ respectively. Now we know that,
 \begin{eqnarray*}
  cost_{DP}(\*A,\*X_{\*C}) - \epsilon(f_{\*X_{\*C}}(\*A) + f_{\varphi}(\*A)) &\leq& \mbox{cost}_{DP}(\*C,\*X_{\*C}) \\
  &\leq& \mbox{cost}_{DP}(\*C,\*X_{\*A}) \\
  &\leq& cost_{DP}(\*A,\*X_{\*A}) + \epsilon(f_{\*X_{\*A}}(\*A) + f_{\varphi}(\*A)) \\
  cost_{DP}(\*A,\*X_{\*C}) &\leq& cost_{DP}(\*A,\*X_{\*A}) + \epsilon(f_{\*X_{\*C}}(\*A) + f_{\*X_{\*A}}(\*A) + 2 f_{\varphi}(\*A))
 \end{eqnarray*}
 Here the first inequality is due to lemma \ref{lemma:DP-Means}. The rest of inequalities are due to strong coreset guarantee of $\*C$.
\end{proof}
\section{Experiments}
In this section we demonstrate the performance of our algorithm \okm. We also demonstrate the heuristic algorithm \oneshotnp for which we empirically show that the returned coreset captures the non-parametric nature.  
\subsection{Experiments: \okm}
Here we empirically show that the coresets constructed using our proposed online algorithms outperform the baseline coreset construction algorithms.
We compare the performance of our algorithm with other baselines such as \uni~ and \tp  in solving the clustering problem.
We consider that each of the following described algorithm receives data in a streaming fashion.
\par{1) {\tt{ParaFilter}}:} Our \okm Algorithm \ref{alg:lwkm}.
\par{2) {\tp}:} This is similar to ours \okm algorithm, but has the knowledge of $\varphi$. 
i.e. substituting $\varphi_{i} = \varphi$, $\forall i$, in Algorithm \ref{alg:lwkm}.
\par{3) {\uni}:} Each arrived point here is sampled with probability $r/n$, where $r$ is a parameter used to control the expected number of samples. 

%
%
We compare the performance of the above described algorithms on following datasets: 
    \par{1) {\tt KDD(BIO-TRAIN)}}: $145,751$ samples with $74$ features.
    \par {2) {\tt SONGS}}: $515,345$ songs from the Million song dataset with $90$ features.
    
    For these datasets we consider $k=100$ and $k=200$ and consider squared Euclidean as Bregman divergence (see Figure \ref{fig:kmeans}).
    \par {3) {\tt MNIST}}: $60,000$ $28 \times 28$ dimension digits dataset. We consider $k=5$ and $k=10$ on this dataset and relative entropy as Bregman divergence (see Figure \ref{fig:kl}).

\begin{figure}[htbp]
    \centering
    \centerline{\includegraphics[scale=0.55]{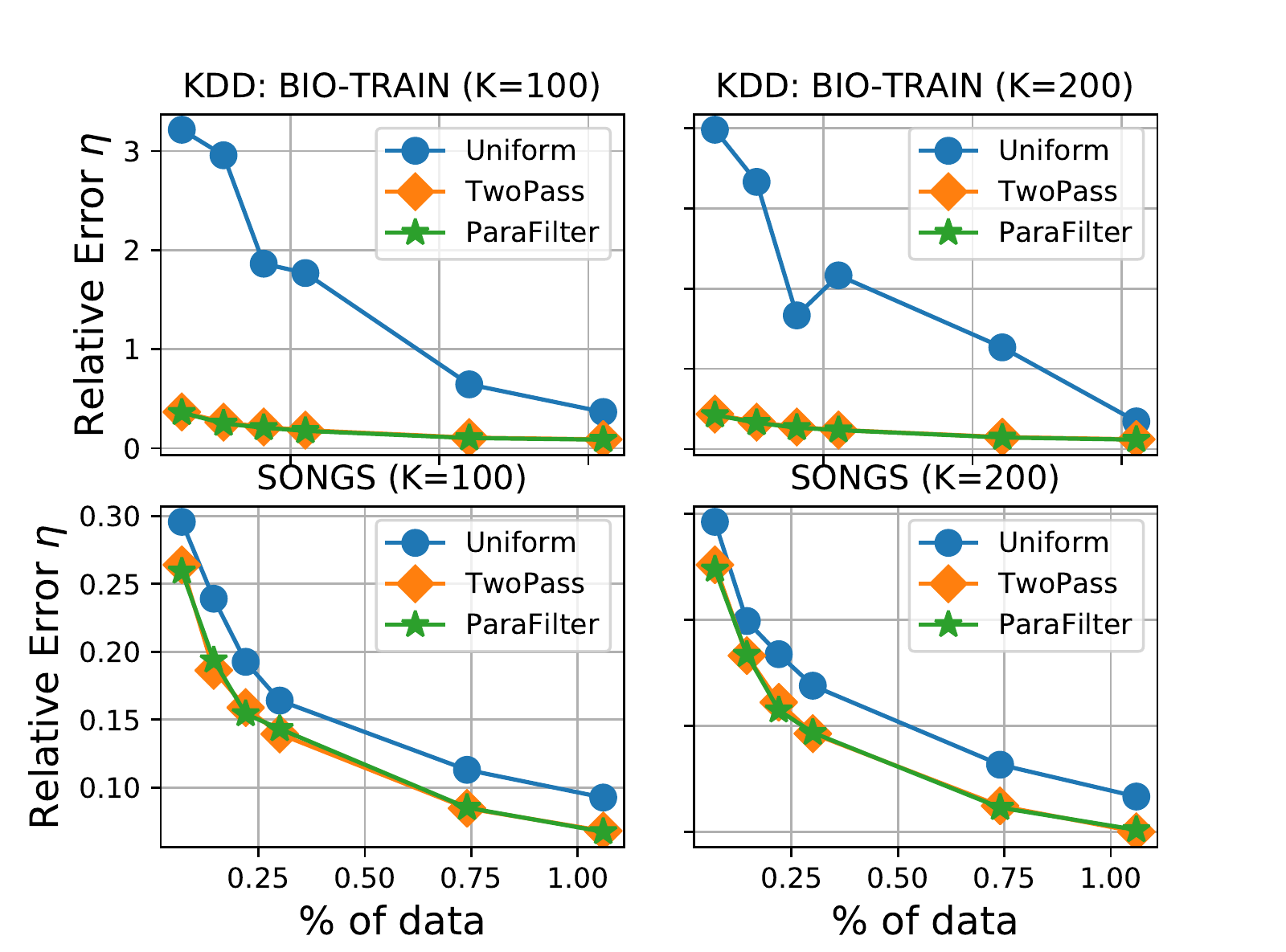}}
    \vspace{-3pt}
    \caption{Relative error v/s coreset size. Squared Euclidean Distance as Bregman Divergence.}
    \label{fig:kmeans}
\end{figure}
\begin{figure}[htbp]
    \centering
    \centerline{\includegraphics[scale=0.47, keepaspectratio]{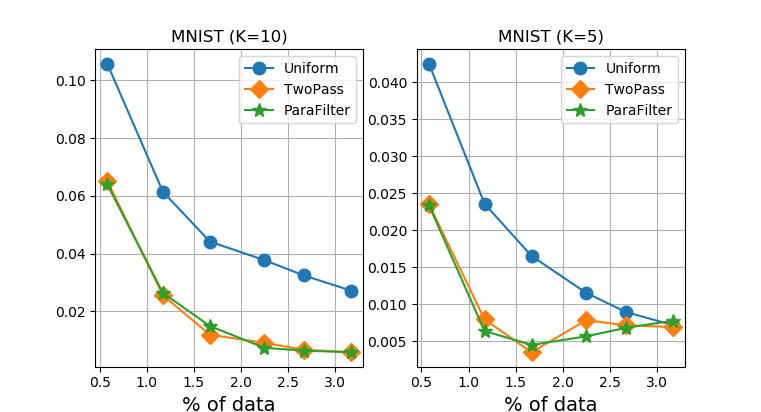}}
    \vspace{-3pt}
    \caption{Relative error v/s coreset size. Relative Entropy as Bregman Divergence.}
    \label{fig:kl}
\end{figure}

Using each of the above described algorithm, we subsample coresets of different sizes.
Once the coreset is generated, we run the weighted k-means++ \cite{arthur2007k} on the coreset to obtain the centers.
We then use these centers and compute the quantization error ($C_{s}$) on the full data set.
Additionally, we compute quantization error by running k-means++ on the full data set ($C_{F}$).
And then we compare the algorithms based on the {\it Relative-Error} ($\eta$) defined as $|C_{s}-C_{F}|/C_{F}$.
The relative error mentioned in the Figures \ref{fig:kmeans} and \ref{fig:kl} is averaged over 10 runs of each of the algorithms. The sample size are in expectation.


Figure \ref{fig:kmeans} highlights the change in $\eta$ with the increase in the coreset size for $K=100, ~200$ on {\tt KDD:BIO-TRAIN} and {\tt SONGS} datasets when squared Euclidean distance is considered as the Bregman divergence.
As the coreset size increases the relative error decreases for all the algorithms.
However, our algorithm \okm ({\tt ParaFilter}) outperforms \uni, and performs equivalent to that of \tp across all the datasets.
Additionally, we also compared the performance of our algorithm with the {\it Lighweight-Coreset} construction algorithm \cite{bachem2018scalable}, which is an offline algorithm.
{\it Lightweight-Coreset} algorithm performs better than that of {\tt ParaFilter}, but the difference is tiny.

Similarly, Figure \ref{fig:kl} shows the performance of the algorithms when Relative entropy is used as Bregman divergence, on the {\tt MNIST} dataset.Our algorithm {\tt ParaFilter} outperforms \uni  and performs equivalent to that of {\tp}.

\subsection{Experiments: \oneshotnp}{\label{sec:npHeuristic}}
Here we empirically show that the coreset returned by the heuristic algorithm captures the non-parametric nature. In this algorithm instead of getting the $l_{i}^{u}$ and $l_{i}^{l}$ value from an oracle, we use the expected upper bounds as in Lemma \ref{lemma:lwdkScore}. Note that without any assumption on query space \ref{asm:cdf} and without using empirical sensitivity scores \oneshotnp is a heuristic method where for each point $\*a_{i}$ it only needs to update $\varphi_{i}, \mu_{i}$ and $\mu_{i}$ to decide the sampling probability. In this case \oneshotnp~is an online algorithm, which takes decision about sampling a point $\*a_{i}$ before processing $\*a_{i+1}$. 
\subsubsection{Evaluated Algorithms}
We run our algorithm heuristic for Bregeman divergence as euclidean distance i.e., k-means clustering problem. We run \oneshotnp along with other baseline coreset creation algorithms such as \uni, \off~and \tp, create coresets for various $\epsilon$ values, such as $(1.0, 0.75, 0.50, 0.25)$ and compare their performances.
Following is the brief summary of the algorithms used for coreset creation:
\begin{enumerate}
    \item{\uni:} The algorithm has the knowledge of $n$ and it samples each point with probability $r/n$, where $r$ controls the expected number of samples.
    \item{\off:} We run offline version of lightweight coresets \cite{bachem2018scalable}.
    \item{\tp} (Filter-2-Pass): The algorithm has the knowledge of the $\varphi$, i.e., the mean point of $\*A$. The algorithm runs the \oneshotnp, where instead of $\varphi_{i}$ it uses the $\varphi$.
    \item{\tt Filter-NP}: We run \oneshotnp algorithm \ref{alg:lwdkm} -- our proposed non-parametric algorithm. 
\end{enumerate}

To ensure that the number of points sampled by each algorithm is equivalent for a fixed value of $\epsilon$, we first run our \oneshotnp~ algorithm, and then use the number of points sampled by \oneshotnp~ as the expected number of points to be sampled by other algorithms. 

\begin{figure}[ht]
\vspace{.3in}
\centerline{\includegraphics[scale=0.5]{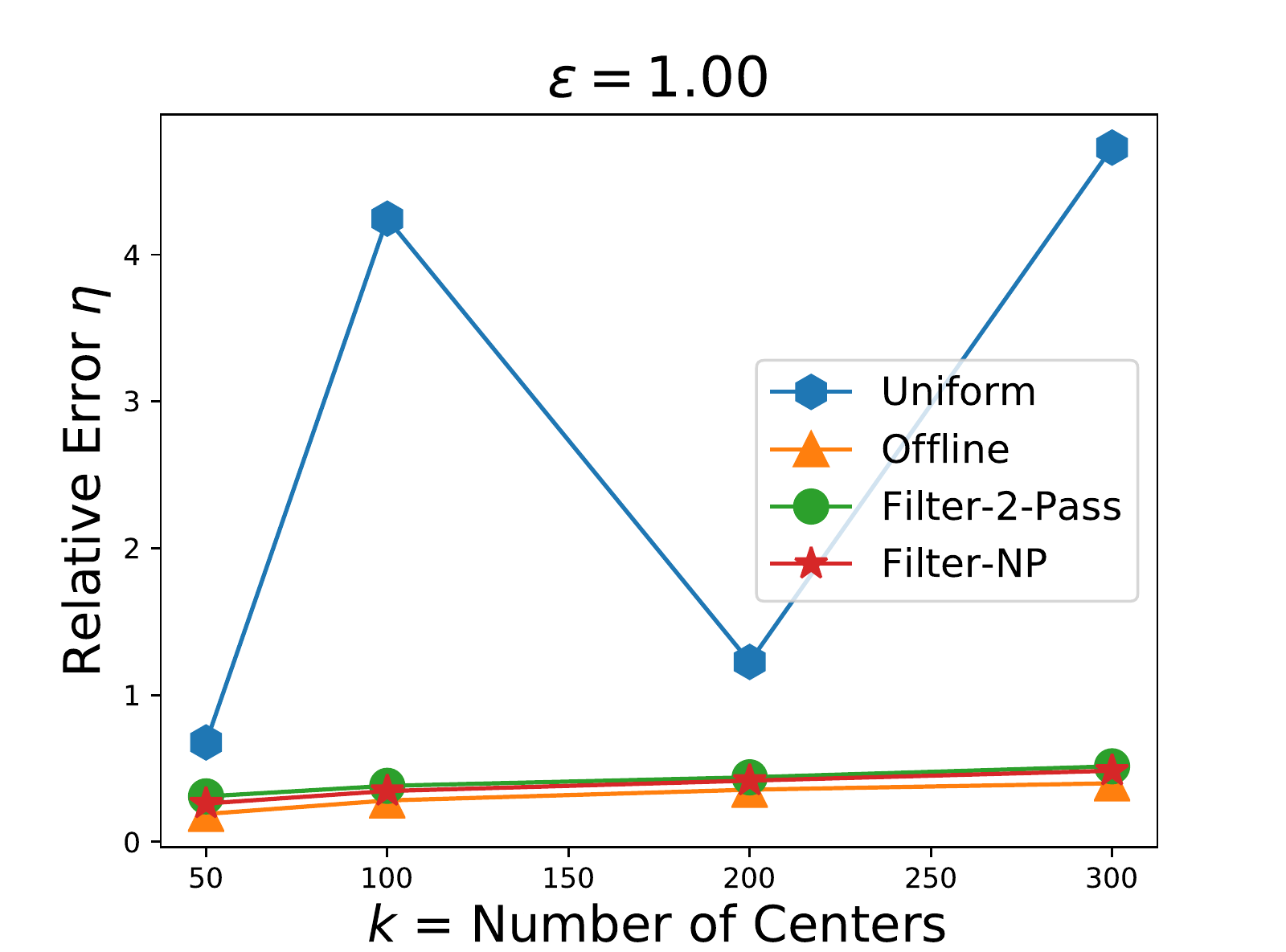} \quad \includegraphics[scale=0.5]{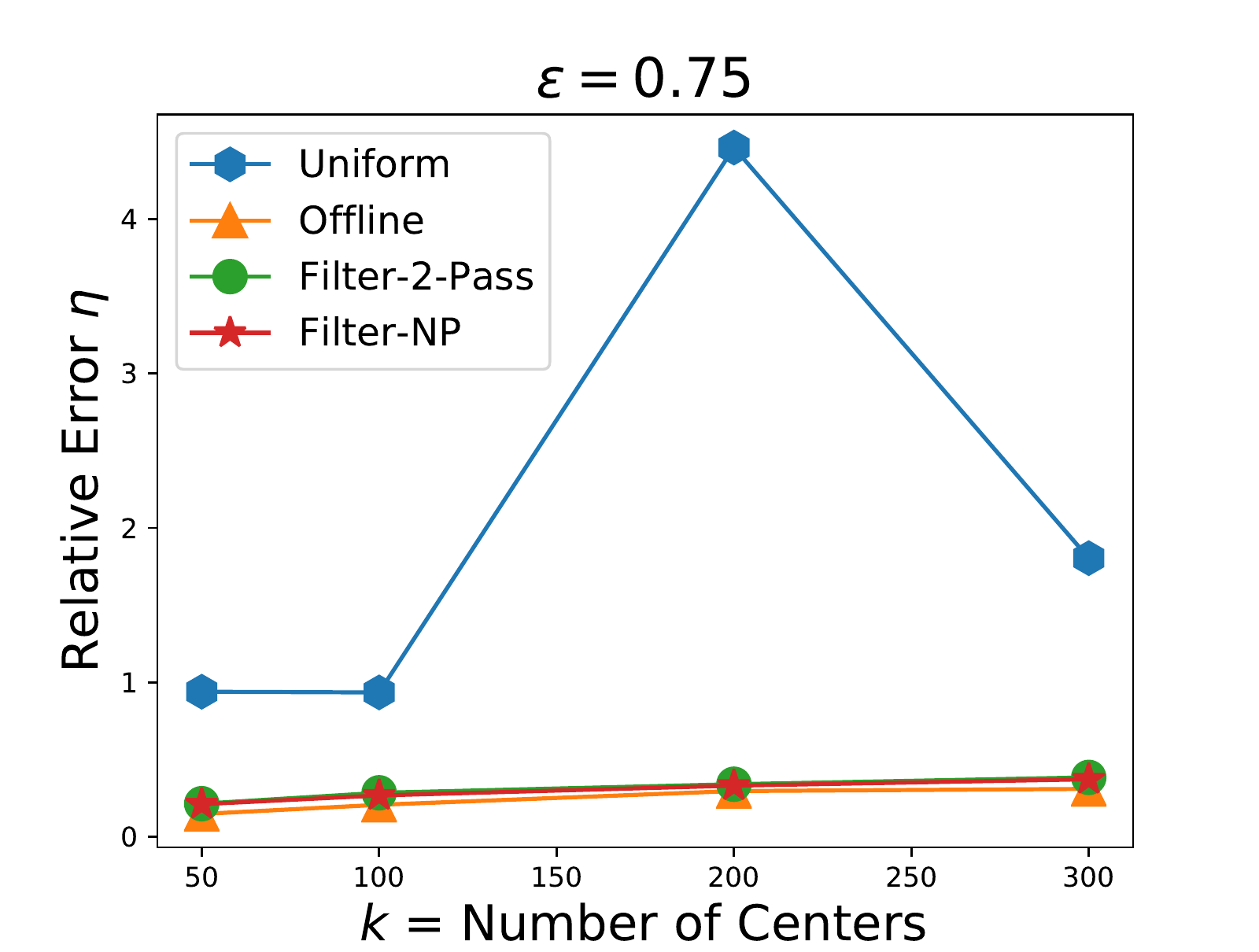}}
\centerline{\includegraphics[scale=0.5]{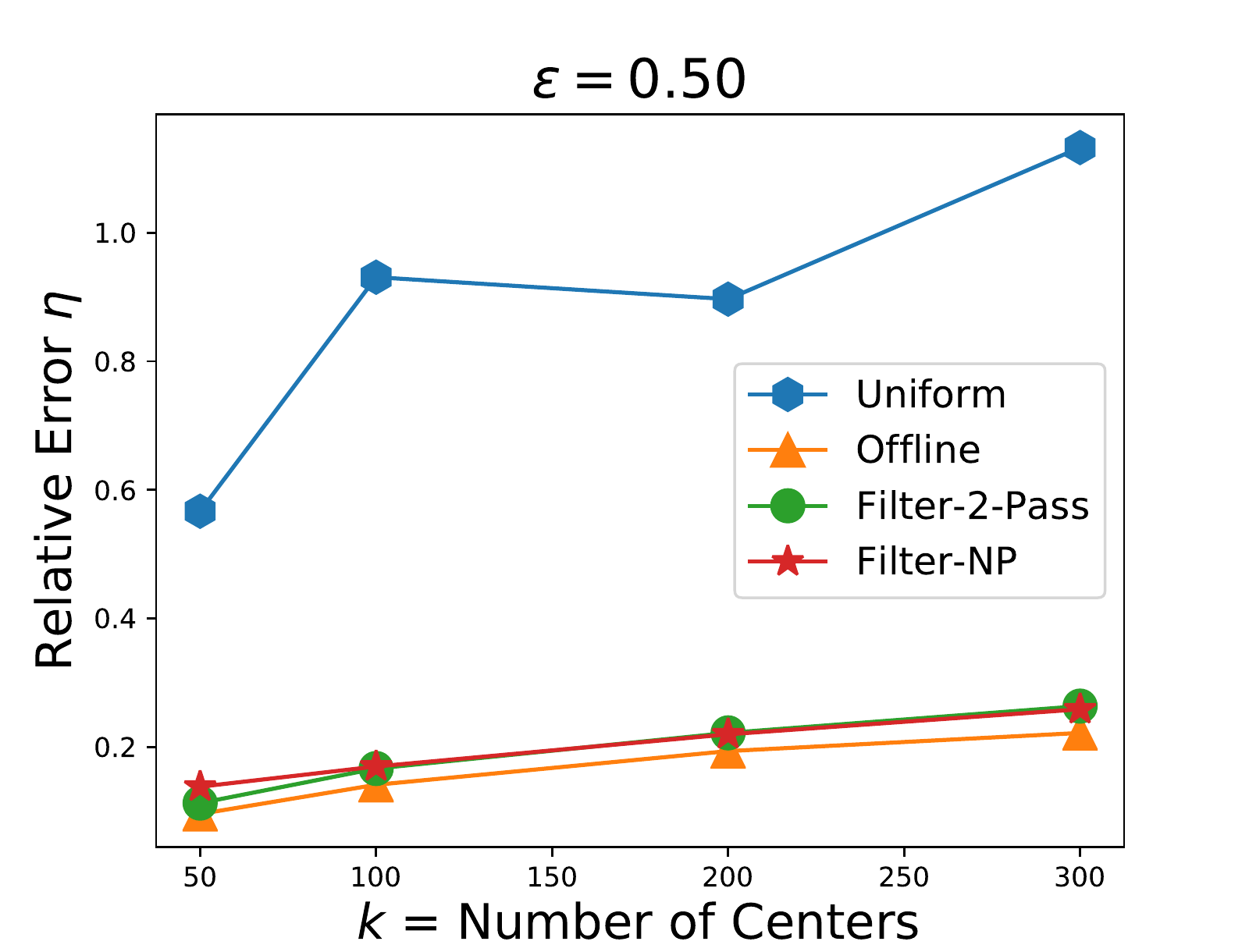} \quad \includegraphics[scale=0.5]{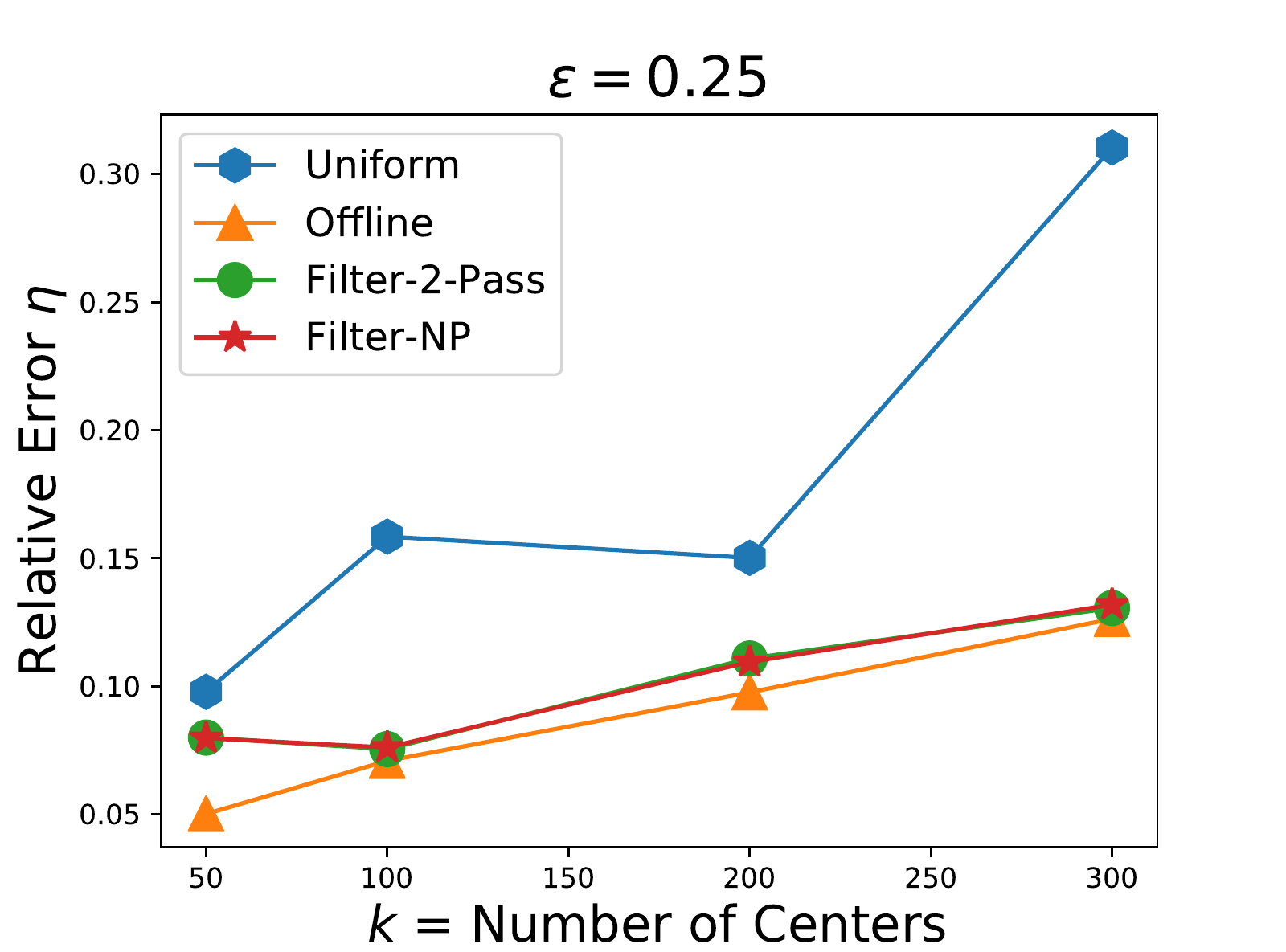}}
\vspace{.3in}
\caption{Change in Relative Error $\eta$ with respect to number of centers $k$ for various values of $\epsilon$.}
\label{fig:np-coreset-kmeans}
\end{figure}

\subsubsection{Evaluation Metric}
We evaluate the performance of the above mentioned algorithms on the {\tt KDD(BIO-TRAIN)} dataset which has $145, 751$ samples with $74$ features. 
Once the coreset is obtained from each of the sampling methods, we run weighted k-means++ clustering \cite{arthur2007k} on them for various values of $k$ such as $(50,100,200,300)$ and get the centers. These centers are considered as initial centres while running k-means clustering on the coreset and finally obtain the centres.
Once the centers are obtained, we compute the quantization error on the entire dataset with respect to the corresponding centers, i.e. $C_{S}(\*A)$ where $S$ is the set of centers returned from the coreset. 
We also run $k$-means clustering on the entire data for these values of $k$ get the quantization error, i.e., $C_F(\*A)$ where $F$ is the set of centers obtained by running $k$-means++ on the entire set of points $\*A$. 
Finally we report the relative error $\eta$, i.e., $\eta = \frac{|C_S(\*A)-C_F(\*A)|}{C_F(\*A)}$. 

For each of the algorithms mentioned above, for each value of $\epsilon$ we run $5$ random instances, compute $\eta = \frac{|C_S(\*A)-C_F(\*A)|}{C_F(\*A)}$ for each of the instances and report the median of the $\eta$ values.
We consider various values of $\epsilon$ such as $\{1.0, 0.75, 0.5, 0.25\}$, and for each value of $\epsilon$ the approximate number of points sampled are $\{500, 850, 1650, 5500\}$ respectively. Note that to capture the notion of the non-parametric nature in the coreset, we run k-means clustering for a fixed coreset for different value of $k$, i.e., $\{50, 100, 200, 300\}$. 

\subsubsection{Results}
Figure \ref{fig:np-coreset-kmeans} shows the change in the value of Relative Error $\eta$ with respect to the change in number of centers $k$, for various values of $\epsilon$.
With the decrease in the value of $\epsilon$ we can note that the value of $\eta$ overall decreases for all the algorithms.
This is as expected, because with the decrease in $\epsilon$, the additive error part decreases and the coreset size increases.
We can also observe that for each value of  $\epsilon$, with the change in the value of $k$, the relative error $\eta$ remains almost constant for all the algorithms except \uni, which shows the non-parametric nature of the importance sampling algorithms.
Also, we can note that, even if \off beats our algorithm \oneshotnp~in terms of the  
relative error, the difference is small. The above empirical results provide an evidence that such a coreset can also be used to learn extreme clustering \cite{kobren2017hierarchical}. 
\section{Conclusion}
In this work we present online algorithm \okm that returns a coreset for clustering based on Bregman divergences. With \oneshot we further show that non-parametric coresets with additive error exist, but finding such a coreset in practice in an efficient manner is yet an open question. We also show that the existential coreset can also be used to get a well approximate solution for DP-Means problem. Without any assumption, the existential coreset relies on an oracle whose implementation is not known. The work related to \oneshotnp is under progress. In our next revision we will plan to discuss the assumption \ref{asm:cdf} in greater details. Further we also plan to provide appropriate empirical results demonstrating \oneshotnp.

\section{Acknowledgements}
We are grateful to the anonymous reviewers for their helpful feedback.
This project has received funding from the Engineering and Physical Sciences Research Council, UK (EPSRC) under Grant Ref: EP/S03353X/1. 
Anirban acknowledges the kind support of the N. Rama Rao Chair Professorship at IIT Gandhinagar, the Google India AI/ML award (2020), Google Faculty Award (2015), and CISCO University Research Grant (2016).

\bibliographystyle{unsrt}
\bibliography{reference}


\end{document}